\tikzset{
    rubberduck/.style={
        draw=black,
        shape=isosceles triangle,
        fill=white,
        minimum height=1.5cm,
        minimum width=0.5cm,
        shape border rotate=#1,
        isosceles triangle stretches,
        inner sep=0pt,
    },
    rubber/.style={rubberduck=+90}}
\tikzset{->-/.style={decoration={
  markings,
  mark=at position .5 with {\arrow{>}}},postaction={decorate}}}
\definecolor{ourpurple}{rgb}{0.60,0.50,0.71}
\DeclareMathOperator{\parent}{\text{\normalfont\bf parent}}
\DeclareMathOperator{\children}{\text{\normalfont\bf children}}
\DeclareMathOperator{\depth}{\text{\normalfont\bf depth}}
\DeclareMathOperator{\leaves}{\text{\normalfont\bf leaves}}
\DeclareMathOperator{\theight}{\text{\normalfont\bf height}}
\DeclareMathOperator{\troot}{\text{\normalfont\bf root}}
\DeclareMathOperator{\degr}{\text{\normalfont\bf deg}}
\DeclareMathOperator{\attr}{\mathcal{A}}
\newcommand{\bfB}{\mathbf{B}}
\newcommand{\bfC}{\mathbf{C}}
\DeclareMathOperator{\argmin}{\arg\min}
\DeclareMathOperator{\argmax}{\arg\max}
\DeclareMathOperator{\isom}{\phi}
\DeclareMathOperator{\red}{\mathfrak{R}}
\DeclareMathOperator{\origin}{\text{\normalfont\bf origin}}
\DeclareMathOperator{\id}{\text{\normalfont \bf id}}
\DeclareMathOperator{\bags}{\mathbb{B}}
\DeclareMathOperator{\collections}{\mathbb{C}}
\DeclareMathOperator{\dom}{\text{\normalfont dom}}
\DeclareMathOperator{\supp}{\text{\normalfont supp}}
\colorlet{lblue}{blue!50!white}
\colorlet{lred}{red!50!white}
\colorlet{lgreen}{green!50!white}
\colorlet{lpurple}{purple!50!white}
\colorlet{lorange}{orange!50!white}
\colorlet{lpink}{pink!50!white}
\colorlet{lbrown}{brown!50!white}
\colorlet{lyellow}{yellow!50!white}
\colorlet{lolive}{olive!50!white}
\colorlet{dblue}{blue!75!white}
\colorlet{dred}{red!75!white}
\colorlet{dgreen}{green!75!white}
\colorlet{dpurple}{purple!75!white}
\colorlet{dorange}{orange!75!white}
\colorlet{dpink}{pink!75!white}
\colorlet{dbrown}{brown!75!white}
\colorlet{dyellow}{yellow!75!white}
\colorlet{dolive}{olive!75!white}
\newcommand{\node}[1][white]{%
\begin{tikzpicture}%
\tikzstyle{noeud}=[draw,circle,fill=#1,scale=0.7]%
\node[noeud] (0) at ({0},{0}) {};%
\end{tikzpicture}}
\newcommand{\inexI}{
\begin{tikzpicture}
\tikzstyle{noeud}=[draw,circle,scale=0.6]
\tikzstyle{arc}=[-,ultra thick,>=latex]
\node[noeud,fill=lblue] (u1) at (-1.875,1) {};
\node[noeud,fill=white,draw=white] (phantom) at (-3,1) {};

    \begin{axis}[
        width=3.5cm,
        height=3.5cm,
        xmin=0, xmax=615,
        ymin=0, ymax=100,
        xtick = \empty,
        area style,
        label style={font=\tiny},
        tick label style={font=\tiny}
    ]
    \addplot[ybar interval,fill=gray,draw=gray] coordinates {%
    (0,100) %
    (20,0) 
    };
    \addplot[ybar interval,fill=black] coordinates {%
(20,12.3651452282158) %
(25,12.3858921161826) %
(30,7.5103734439834) %
(35,7.5103734439834) %
(40,5.33195020746888) %
(45,10.746887966805) %
(50,10.1244813278008) %
(55,12.1161825726141) %
(60,8.44398340248963) %
(65,11.9294605809129) %
(70,10.2697095435685) %
(75,11.4730290456432) %
(80,7.9253112033195) %
(85,10.7676348547718) %
(90,7.05394190871369) %
(95,5.4149377593361) %
(100,11.9087136929461) %
(105,11.701244813278) %
(110,11.3070539419087) %
(115,9.12863070539419) %
(120,6.55601659751037) %
(125,17.3236514522822) %
(130,17.1576763485477) %
(135,27.0331950207469) %
(140,25.8713692946058) %
(145,24.6887966804979) %
(150,12.655601659751) %
(155,15.6016597510373) %
(160,6.47302904564315) %
(165,23.298755186722) %
(170,21.3692946058091) %
(175,24.3983402489627) %
(180,13.7551867219917) %
(185,21.49377593361) %
(190,24.2738589211618) %
(195,18.7966804979253) %
(200,22.6348547717842) %
(205,23.9004149377593) %
(210,16.7219917012448) %
(215,22.0331950207469) %
(220,25.3112033195021) %
(225,23.6721991701245) %
(230,14.3153526970954) %
(235,10.7261410788382) %
(240,23.5477178423236) %
(245,14.7717842323651) %
(250,13.4232365145228) %
(255,38.0705394190871) %
(260,27.655601659751) %
(265,35.8298755186722) %
(270,38.0497925311203) %
(275,43.3402489626556) %
(280,19.6265560165975) %
(285,44.9170124481328) %
(290,15.9128630705394) %
(295,29.6473029045643) %
(300,16.3485477178423) %
(305,19.3153526970954) %
(310,5.24896265560166) %
(315,13.50622406639) %
(320,7.65560165975104) %
(325,12.3236514522822) %
(330,12.3236514522822) %
(335,6.74273858921162) %
(340,11.6597510373444) %
(345,8.85892116182573) %
(350,10.0207468879668) %
(355,5.93360995850622) %
(360,12.344398340249) %
(365,11.6804979253112) %
(370,12.7178423236515) %
(375,12.5933609958506) %
(380,10.4564315352697) %
(385,10.1659751037344) %
(390,6.34854771784232) %
(395,11.8879668049793) %
(400,13.9211618257261) %
(405,10.8091286307054) %
(410,16.3485477178423) %
(415,10.643153526971) %
(420,6.53526970954357) %
(425,6.97095435684647) %
(430,9.64730290456432) %
(435,13.9626556016598) %
(440,11.9502074688797) %
(445,13.9211618257261) %
(450,12.7385892116183) %
(455,10.3319502074689) %
(460,19.2116182572614) %
(465,6.53526970954357) %
(470,10.4979253112033) %
(475,7.26141078838174) %
(480,8.83817427385892) %
(485,7.84232365145228) %
(490,6.53526970954357) %
(495,8.87966804979253) %
(500,6.43153526970954) %
(505,5.850622406639) %
(510,8.15352697095436) %
(515,8.15352697095436) %
(520,8.15352697095436) %
(525,9.89626556016598) %
(530,9.89626556016598) %
(535,10.6224066390041) %
(540,12.3236514522822) %
(545,8.65145228215768) %
(550,8.23651452282158) %
(555,7.46887966804979) %
(560,12.5726141078838) %
(565,7.03319502074689) %
(570,5.72614107883817) %
(575,5.70539419087137) %
(580,10.6224066390041) %
(585,8.90041493775934) %
(590,6.72199170124481) %
(595,9.23236514522822) %
(600,8.00829875518672) %
(605,5.58091286307054) %
(610,5.53941908713693) %
    (615,0) 
    };
    \addplot[draw=red,ultra thick] plot coordinates {
    (0,5) (1000,5)
    };
\end{axis}
\end{tikzpicture}
}
\newcommand{\inexA}{
\begin{tikzpicture}
\tikzstyle{noeud}=[draw,circle,scale=0.6]
\tikzstyle{arc}=[-,ultra thick,>=latex]
\node[noeud,fill=lblue] (u1) at (-1.875,1.5) {};
\node[noeud,fill=lred] (u2) at (-2.75,0.5) {};
\node[noeud,fill=lgreen] (u3) at (-1,0.5) {};
\node[noeud,fill=white,draw=white] (phantom) at (-3,1) {};
\draw[arc] (u1)--(u2);
\draw[arc] (u1)--(u3);

    \begin{axis}[
        width=3.5cm,
        height=3.5cm,
        xmin=0, xmax=100, 
        ymin=0, ymax=100,
        xtick = \empty,
        area style,
        label style={font=\tiny},
        tick label style={font=\tiny}
    ]
    \addplot[ybar interval,fill=gray,draw=gray] coordinates {%
    (0,96.1203319502075) %
    (5,0) 
    };
    \addplot[ybar interval,fill=black] coordinates {%
(5,7.5103734439834) %
(10,12.655601659751) %
(15,14.0248962655602) %
(20,7.0954356846473) %
(25,20.1659751037344) %
(30,25.2489626556017) %
(35,19.6265560165975) %
(40,15.9128630705394) %
(45,7.5103734439834) %
(50,6.74273858921162) %
(55,6.14107883817427) %
(60,9.62655601659751) %
(65,7.94605809128631) %
(70,5) %
(75,9.23236514522822) %
(80,7.4896265560166) %
(85,7.21991701244813) %
(90,8.7344398340249) %
(95,7.78008298755187) %
    (100,0) 
    };
    \addplot[draw=red,ultra thick] plot coordinates {
    (0,5) (100,5)
    };
\end{axis}
\end{tikzpicture}
}
\newcommand{\inexB}{
\begin{tikzpicture}
\tikzstyle{noeud}=[draw,circle,scale=0.6]
\tikzstyle{arc}=[-,ultra thick,>=latex]
\node[noeud,fill=lblue] (u1) at (-1.875,1.5) {};
\node[noeud,fill=lred] (u2) at (-2.75,0.5) {};
\node[noeud,fill=lgreen] (u3) at (-1.875,0.5) {};
\node[noeud,fill=lpurple] (u4) at (-1,0.5) {};
\node[noeud,fill=white,draw=white] (phantom) at (-3,1) {};
\draw[arc] (u1)--(u2);
\draw[arc] (u1)--(u3);
\draw[arc] (u1)--(u4);

    \begin{axis}[
        width=3.5cm,
        height=3.5cm,
        xmin=0, xmax=30, 
        ymin=0, ymax=100,
        xtick = \empty,
        area style,
        label style={font=\tiny},
        tick label style={font=\tiny}
    ]
    \addplot[ybar interval,fill=gray,draw=gray] coordinates {%
    (0,34) %
    (5,0) 
    };
    \addplot[ybar interval,fill=black] coordinates {%
    (5,6.3) %
    (10,5.1)
    (15,13.9) 
    (20,9.8)
    (25,5.5)
    (30,0) 
    };
    \addplot[draw=red,ultra thick] plot coordinates {
    (0,5) (100,5)
    };
\end{axis}
\end{tikzpicture}
}
\newcommand{\inexC}{
\begin{tikzpicture}
\tikzstyle{noeud}=[draw,circle,scale=0.6]
\tikzstyle{arc}=[-,ultra thick,>=latex]
\node[noeud,fill=lblue] (u1) at (-1.875,1.5) {};
\node[noeud,fill=lred] (u11) at (-3,0.5) {};
\node[noeud,fill=lgreen] (u12) at (-2.25,0.5) {};
\node[noeud,fill=lpurple] (u13) at (-1.5,0.5) {};
\node[noeud,fill=lorange] (u14) at (-0.75,0.5) {};

\draw[arc] (u1)--(u11);
\draw[arc] (u1)--(u12);
\draw[arc] (u1)--(u13);
\draw[arc] (u1)--(u14);

    \begin{axis}[
        width=3.5cm,
        height=3.5cm,
        xmin=0, xmax=10,
        ymin=0, ymax=100,
        xtick = \empty,
        area style,
        label style={font=\tiny},
        tick label style={font=\tiny}
    ]
    \addplot[ybar interval,fill=gray,draw=gray] coordinates {%
    (0,18.2) %
    (5,0) 
    };
    \addplot[ybar interval,fill=black] coordinates {%
    (5,7.5) %
    (10,0) 
    };
    \addplot[draw=red,ultra thick] plot coordinates {
    (0,5) (100,5)
    };
\end{axis}
\end{tikzpicture}
}
\newcommand{\inexD}{
\begin{tikzpicture}
\tikzstyle{noeud}=[draw,circle,scale=0.6]
\tikzstyle{arc}=[-,ultra thick,>=latex]
\node[noeud,fill=lblue] (u1) at (-1.875,1.5) {};
\node[noeud,fill=lred] (u11) at (-3,0.5) {};
\node[noeud,fill=lgreen] (u12) at (-2.4375,0.5) {};
\node[noeud,fill=lpurple] (u13) at (-1.8750,0.5) {};
\node[noeud,fill=lorange] (u14) at (-1.3125,0.5) {};
\node[noeud,fill=lpink] (u15) at (-0.75000,0.5) {};

\draw[arc] (u1)--(u11);
\draw[arc] (u1)--(u12);
\draw[arc] (u1)--(u13);
\draw[arc] (u1)--(u14);
\draw[arc] (u1)--(u15);

    \begin{axis}[
        width=3.5cm,
        height=3.5cm,
        xmin=0, xmax=10,
        ymin=0, ymax=100,
        xtick = \empty,
        area style,
        label style={font=\tiny},
        tick label style={font=\tiny}
    ]
    \addplot[ybar interval,fill=gray,draw=gray] coordinates {%
    (0,8.195020746887967) %
    (5,0) 
    };
    \addplot[ybar interval,fill=black] coordinates {%
    (5,0) %
    (10,0) 
    };
    \addplot[draw=red,ultra thick] plot coordinates {
    (0,5) (100,5)
    };
\end{axis}
\end{tikzpicture}
}
\newcommand{\inexE}{
\begin{tikzpicture}
\tikzstyle{noeud}=[draw,circle,scale=0.6]
\tikzstyle{arc}=[-,ultra thick,>=latex]
\node[noeud,fill=lblue] (u1) at (-1.875,1.75) {};
\node[noeud,fill=lred] (u11) at (-2.875,1) {};
\node[noeud,fill=lgreen] (u12) at (-2.34375,1) {};
\node[noeud,fill=lpurple] (u13) at (-1.81250,1) {};
\node[noeud,fill=lorange] (u14) at (-1.28125,1) {};
\node[noeud,fill=lpink] (u15) at (-0.75000,1) {};

\node[noeud,fill=dblue] (u111) at (-3,0.25) {};
\node[noeud,fill=dred] (u112) at (-2.75,0.25) {};

\draw[arc] (u1)--(u11);
\draw[arc] (u1)--(u12);
\draw[arc] (u1)--(u13);
\draw[arc] (u1)--(u14);
\draw[arc] (u1)--(u15);
\draw[arc] (u11)--(u111);
\draw[arc] (u11)--(u112);

    \begin{axis}[
        width=3.5cm,
        height=3.5cm,
        xmin=0, xmax=10,
        ymin=0, ymax=100,
        xtick = \empty,
        area style,
        label style={font=\tiny},
        tick label style={font=\tiny}
    ]
    \addplot[ybar interval,fill=gray,draw=gray] coordinates {%
    (0,6.8) %
    (5,0) 
    };
    \addplot[ybar interval,fill=black] coordinates {%
    (5,0) %
    (10,0) 
    };
    \addplot[draw=red,ultra thick] plot coordinates {
    (0,5) (100,5)
    };
\end{axis}
\end{tikzpicture}
}
\newcommand{\inexF}{
\begin{tikzpicture}
\tikzstyle{noeud}=[draw,circle,scale=0.6]
\tikzstyle{arc}=[-,ultra thick,>=latex]
\node[noeud,fill=lblue] (u1) at (-1.875,1.75) {};
\node[noeud,fill=lred] (u11) at (-2.875,1) {};
\node[noeud,fill=lgreen] (u12) at (-2.375,1) {};
\node[noeud,fill=lpurple] (u13) at (-2,1) {};
\node[noeud,fill=lorange] (u14) at (-1.583333,1) {};
\node[noeud,fill=lpink] (u15) at (-1.166667,1) {};
\node[noeud,fill=lbrown] (u16) at (-0.75,1) {};

\node[noeud,fill=dblue] (u111) at (-3,0.25) {};
\node[noeud,fill=dred] (u112) at (-2.75,0.25) {};
\node[noeud,fill=dpurple] (u121) at (-2.5,0.25) {};
\node[noeud,fill=dorange] (u122) at (-2.25,0.25) {};

\draw[arc] (u1)--(u11);
\draw[arc] (u1)--(u12);
\draw[arc] (u1)--(u13);
\draw[arc] (u1)--(u14);
\draw[arc] (u1)--(u15);
\draw[arc] (u1)--(u16);
\draw[arc] (u11)--(u111);
\draw[arc] (u11)--(u112);
\draw[arc] (u12)--(u121);
\draw[arc] (u12)--(u122);

    \begin{axis}[
        width=3.5cm,
        height=3.5cm,
        xmin=0, xmax=10,
        ymin=0, ymax=100,
        xtick = \empty,
        area style,
        label style={font=\tiny},
        tick label style={font=\tiny}
    ]
    \addplot[ybar interval,fill=gray,draw=gray] coordinates {%
    (0,8.195020746887967) %
    (5,0) 
    };
    \addplot[ybar interval,fill=black] coordinates {%
    (5,0) %
    (10,0) 
    };
    \addplot[draw=red,ultra thick] plot coordinates {
    (0,5) (100,5)
    };
\end{axis}
\end{tikzpicture}
}
\newcommand{\inexG}{
\begin{tikzpicture}
\tikzstyle{noeud}=[draw,circle,scale=0.6]
\tikzstyle{arc}=[-,ultra thick,>=latex]
\node[noeud,fill=lblue] (u1) at (-1.875,1.75) {};
\node[noeud,fill=lred] (u11) at (-2.875,1) {};
\node[noeud,fill=lgreen] (u12) at (-2.375,1) {};
\node[noeud,fill=lpurple] (u13) at (-2,1) {};
\node[noeud,fill=lorange] (u14) at (-1.6875,1) {};
\node[noeud,fill=lpink] (u15) at ( -1.3750,1) {};
\node[noeud,fill=lbrown] (u16) at (-1.0625,1) {};
\node[noeud,fill=lyellow] (u17) at (-0.75,1) {};

\node[noeud,fill=dblue] (u111) at (-3,0.25) {};
\node[noeud,fill=dred] (u112) at (-2.75,0.25) {};
\node[noeud,fill=dpurple] (u121) at (-2.5,0.25) {};
\node[noeud,fill=dorange] (u122) at (-2.25,0.25) {};

\draw[arc] (u1)--(u11);
\draw[arc] (u1)--(u12);
\draw[arc] (u1)--(u13);
\draw[arc] (u1)--(u14);
\draw[arc] (u1)--(u15);
\draw[arc] (u1)--(u16);
\draw[arc] (u1)--(u17);
\draw[arc] (u11)--(u111);
\draw[arc] (u11)--(u112);
\draw[arc] (u12)--(u121);
\draw[arc] (u12)--(u122);

    \begin{axis}[
        width=3.5cm,
        height=3.5cm,
        xmin=0, xmax=10,
        ymin=0, ymax=100,
        xtick = \empty,
        area style,
        label style={font=\tiny},
        tick label style={font=\tiny}
    ]
    \addplot[ybar interval,fill=gray,draw=gray] coordinates {%
    (0,6.804979253112033) %
    (5,0) 
    };
    \addplot[ybar interval,fill=black] coordinates {%
    (5,0) %
    (10,0) 
    };
    \addplot[draw=red,ultra thick] plot coordinates {
    (0,5) (100,5)
    };
\end{axis}
\end{tikzpicture}
}
\newcommand{\inexH}{
\begin{tikzpicture}
\tikzstyle{noeud}=[draw,circle,scale=0.6]
\tikzstyle{arc}=[-,ultra thick,>=latex]
\node[noeud,fill=lblue] (u1) at (-1.875,1.75) {};
\node[noeud,fill=lred] (u11) at (-2.875,1) {};
\node[noeud,fill=lgreen] (u12) at (-2.375,1) {};
\node[noeud,fill=lpurple] (u13) at (-2,1) {};
\node[noeud,fill=lorange] (u14) at (-1.75,1) {};
\node[noeud,fill=lpink] (u15) at ( -1.5,1) {};
\node[noeud,fill=lbrown] (u16) at (-1.25,1) {};
\node[noeud,fill=lyellow] (u17) at (-1,1) {};
\node[noeud,fill=lolive] (u18) at (-0.75,1) {};

\node[noeud,fill=dblue] (u111) at (-3,0.25) {};
\node[noeud,fill=dred] (u112) at (-2.75,0.25) {};
\node[noeud,fill=dpurple] (u121) at (-2.5,0.25) {};
\node[noeud,fill=dorange] (u122) at (-2.25,0.25) {};

\draw[arc] (u1)--(u11);
\draw[arc] (u1)--(u12);
\draw[arc] (u1)--(u13);
\draw[arc] (u1)--(u14);
\draw[arc] (u1)--(u15);
\draw[arc] (u1)--(u16);
\draw[arc] (u1)--(u17);
\draw[arc] (u1)--(u18);
\draw[arc] (u11)--(u111);
\draw[arc] (u11)--(u112);
\draw[arc] (u12)--(u121);
\draw[arc] (u12)--(u122);

    \begin{axis}[
        width=3.5cm,
        height=3.5cm,
        xmin=0, xmax=10,
        ymin=0, ymax=100,
        xtick = \empty,
        area style,
        label style={font=\tiny},
        tick label style={font=\tiny}
    ]
    \addplot[ybar interval,fill=gray,draw=gray] coordinates {%
    (0,9.3) %
    (5,0) 
    };
    \addplot[ybar interval,fill=black] coordinates {%
    (5,0) %
    (10,0) 
    };
    \addplot[draw=red,ultra thick] plot coordinates {
    (0,5) (100,5)
    };
\end{axis}
\end{tikzpicture}
}
\newcommand{\inexxM}{
\begin{tikzpicture}
\tikzstyle{noeud}=[draw,circle,scale=0.6]
\tikzstyle{arc}=[-,ultra thick,>=latex]
\node[noeud,fill=lblue] (u1) at (-1.875,1) {};
\node[noeud,fill=white,draw=white] (phantom) at (-3,1) {};

    \begin{axis}[
        width=3.5cm,
        height=3.5cm,
        xmin=0, xmax=60,
        ymin=0, ymax=100,
        xtick = \empty,
        area style,
        label style={font=\tiny},
        tick label style={font=\tiny}
    ]
    \addplot[ybar interval,fill=gray,draw=gray] coordinates {%
    (0,100) %
    (5,0) 
    };
    \addplot[ybar interval,fill=black] coordinates {%
    (5,52) %
    (10,98) %
    (15,88) %
    (20,83) %
    (25,51) %
    (30,59) %
    (35,29) %
    (40,25) %
    (45,49) %
    (50,14) %
    (55,31) %
    (60,0) 
    };
    \addplot[draw=red,ultra thick] plot coordinates {
    (0,5) (100,5)
    };
\end{axis}
\end{tikzpicture}
}
\newcommand{\inexxA}{
\begin{tikzpicture}
\tikzstyle{noeud}=[draw,circle,scale=0.6]
\tikzstyle{arc}=[-,ultra thick,>=latex]
\node[noeud,fill=lblue] (u1) at (-1.875,1.5) {};
\node[noeud,fill=lblue] (u2) at (-1.875,0.5) {};
\node[noeud,fill=white,draw=white] (phantom) at (-3,1) {};
\draw[arc] (u1)--(u2);

    \begin{axis}[
        width=3.5cm,
        height=3.5cm,
        xmin=0, xmax=10, 
        ymin=0, ymax=100,
        xtick = \empty,
        area style,
        label style={font=\tiny},
        tick label style={font=\tiny}
    ]
    \addplot[ybar interval,fill=gray,draw=gray] coordinates {%
    (0,10) %
    (5,0) 
    };
    \addplot[ybar interval,fill=black] coordinates {%
    (5,10) %
    (10,0) 
    };
    \addplot[draw=red,ultra thick] plot coordinates {
    (0,5) (100,5)
    };
\end{axis}
\end{tikzpicture}
}
\newcommand{\inexxB}{
\begin{tikzpicture}
\tikzstyle{noeud}=[draw,circle,scale=0.6]
\tikzstyle{arc}=[-,ultra thick,>=latex]
\node[noeud,fill=lblue] (u1) at (-1.875,1.5) {};
\node[noeud,fill=lred] (u2) at (-1.875,0.5) {};
\node[noeud,fill=white,draw=white] (phantom) at (-3,1) {};
\draw[arc] (u1)--(u2);

    \begin{axis}[
        width=3.5cm,
        height=3.5cm,
        xmin=0, xmax=45, 
        ymin=0, ymax=100,
        xtick = \empty,
        area style,
        label style={font=\tiny},
        tick label style={font=\tiny}
    ]
    \addplot[ybar interval,fill=gray,draw=gray] coordinates {%
    (0,86) %
    (5,0) 
    };
    \addplot[ybar interval,fill=black] coordinates {%
    (5,12) %
    (10,23) %
    (15,8) %
    (20,11) %
    (25,15) %
    (30,24) %
    (35,21) %
    (40,58) %
    (45,0) 
    };
    \addplot[draw=red,ultra thick] plot coordinates {
    (0,5) (100,5)
    };
\end{axis}
\end{tikzpicture}
}
\newcommand{\inexxE}{
\begin{tikzpicture}
\tikzstyle{noeud}=[draw,circle,scale=0.6]
\tikzstyle{arc}=[-,ultra thick,>=latex]
\node[noeud,fill=lblue] (u1) at (-1.875,1.5) {};
\node[noeud,fill=lred] (u2) at (-2.75,0.5) {};
\node[noeud,fill=lgreen] (u3) at (-1,0.5) {};
\node[noeud,fill=white,draw=white] (phantom) at (-3,1) {};
\draw[arc] (u1)--(u2);
\draw[arc] (u1)--(u3);

    \begin{axis}[
        width=3.5cm,
        height=3.5cm,
        xmin=0, xmax=25, 
        ymin=0, ymax=100,
        xtick = \empty,
        area style,
        label style={font=\tiny},
        tick label style={font=\tiny}
    ]
    \addplot[ybar interval,fill=gray,draw=gray] coordinates {%
    (0,90) %
    (5,0) 
    };
    \addplot[ybar interval,fill=black] coordinates {%
    (5,11) %
    (10,8) %
    (15,17) %
    (20,64) %
    (25,0) 
    };
    \addplot[draw=red,ultra thick] plot coordinates {
    (0,5) (100,5)
    };
\end{axis}
\end{tikzpicture}
}
\newcommand{\inexxF}{
\begin{tikzpicture}
\tikzstyle{noeud}=[draw,circle,scale=0.6]
\tikzstyle{arc}=[-,ultra thick,>=latex]
\node[noeud,fill=lblue] (u1) at (-1.875,1.5) {};
\node[noeud,fill=lred] (u2) at (-2.75,0.5) {};
\node[noeud,fill=lgreen] (u3) at (-1.875,0.5) {};
\node[noeud,fill=lpurple] (u4) at (-1,0.5) {};
\node[noeud,fill=white,draw=white] (phantom) at (-3,1) {};
\draw[arc] (u1)--(u2);
\draw[arc] (u1)--(u3);
\draw[arc] (u1)--(u4);

    \begin{axis}[
        width=3.5cm,
        height=3.5cm,
        xmin=0, xmax=10, 
        ymin=0, ymax=100,
        xtick = \empty,
        area style,
        label style={font=\tiny},
        tick label style={font=\tiny}
    ]
    \addplot[ybar interval,fill=gray,draw=gray] coordinates {%
    (0,25) %
    (5,0) 
    };
    \addplot[ybar interval,fill=black] coordinates {%
    (5,22) %
    (10,0) 
    };
    \addplot[draw=red,ultra thick] plot coordinates {
    (0,5) (100,5)
    };
\end{axis}
\end{tikzpicture}
}
\newcommand{\inexxC}{
\begin{tikzpicture}
\tikzstyle{noeud}=[draw,circle,scale=0.6]
\tikzstyle{arc}=[-,ultra thick,>=latex]
\node[noeud,fill=lblue] (u1) at (-1.875,1.75) {};
\node[noeud,fill=lred] (u2) at (-1.875,1) {};
\node[noeud,fill=lred] (u3) at (-1.875,0.25) {};
\node[noeud,fill=white,draw=white] (phantom) at (-3,1) {};
\draw[arc] (u1)--(u2);
\draw[arc] (u2)--(u3);

    \begin{axis}[
        width=3.5cm,
        height=3.5cm,
        xmin=0, xmax=10, 
        ymin=0, ymax=100,
        xtick = \empty,
        area style,
        label style={font=\tiny},
        tick label style={font=\tiny}
    ]
    \addplot[ybar interval,fill=gray,draw=gray] coordinates {%
    (0,7) %
    (5,0) 
    };
    \addplot[ybar interval,fill=black] coordinates {%
    (5,7) %
    (10,0) 
    };
    \addplot[draw=red,ultra thick] plot coordinates {
    (0,5) (100,5)
    };
\end{axis}
\end{tikzpicture}
}
\newcommand{\inexxD}{
\begin{tikzpicture}
\tikzstyle{noeud}=[draw,circle,scale=0.6]
\tikzstyle{arc}=[-,ultra thick,>=latex]
\node[noeud,fill=lblue] (u1) at (-1.875,1.75) {};
\node[noeud,fill=lred] (u2) at (-1.875,1) {};
\node[noeud,fill=lgreen] (u3) at (-1.875,0.25) {};
\node[noeud,fill=white,draw=white] (phantom) at (-3,1) {};
\draw[arc] (u1)--(u2);
\draw[arc] (u2)--(u3);

    \begin{axis}[
        width=3.5cm,
        height=3.5cm,
        xmin=0, xmax=15, 
        ymin=0, ymax=100,
        xtick = \empty,
        area style,
        label style={font=\tiny},
        tick label style={font=\tiny}
    ]
    \addplot[ybar interval,fill=gray,draw=gray] coordinates {%
    (0,31) %
    (5,0) 
    };
    \addplot[ybar interval,fill=black] coordinates {%
    (5,16) %
    (10,15) %
    (15,0) 
    };
    \addplot[draw=red,ultra thick] plot coordinates {
    (0,5) (100,5)
    };
\end{axis}
\end{tikzpicture}
}
\newcommand{\inexxH}{
\begin{tikzpicture}
\tikzstyle{noeud}=[draw,circle,scale=0.6]
\tikzstyle{arc}=[-,ultra thick,>=latex]
\node[noeud,fill=lblue] (u1) at (-1.875,1.75) {};
\node[noeud,fill=lred] (u2) at (-1.875,1) {};
\node[noeud,fill=lgreen] (u3) at (-2.75,0.25) {};
\node[noeud,fill=lpurple] (u4) at (-1,0.25) {};
\node[noeud,fill=white,draw=white] (phantom) at (-3,1) {};
\draw[arc] (u1)--(u2);
\draw[arc] (u2)--(u3);
\draw[arc] (u2)--(u4);

    \begin{axis}[
        width=3.5cm,
        height=3.5cm,
        xmin=0, xmax=10, 
        ymin=0, ymax=100,
        xtick = \empty,
        area style,
        label style={font=\tiny},
        tick label style={font=\tiny}
    ]
    \addplot[ybar interval,fill=gray,draw=gray] coordinates {%
    (0,9) %
    (5,0) 
    };
    \addplot[ybar interval,fill=black] coordinates {%
    (5,5) %
    (10,0) 
    };
    \addplot[draw=red,ultra thick] plot coordinates {
    (0,5) (100,5)
    };
\end{axis}
\end{tikzpicture}
}
\newcommand{\inexxI}{
\begin{tikzpicture}
\tikzstyle{noeud}=[draw,circle,scale=0.6]
\tikzstyle{arc}=[-,ultra thick,>=latex]
\node[noeud,fill=lblue] (u1) at (-1.875,1.75) {};
\node[noeud,fill=lred] (u2) at (-2.75,1) {};
\node[noeud,fill=lgreen] (u3) at (-1,1) {};
\node[noeud,fill=lpurple] (u4) at (-2.75,0.25) {};
\node[noeud,fill=white,draw=white] (phantom) at (-3,1) {};
\draw[arc] (u1)--(u2);
\draw[arc] (u1)--(u3);
\draw[arc] (u2)--(u4);

    \begin{axis}[
        width=3.5cm,
        height=3.5cm,
        xmin=0, xmax=10, 
        ymin=0, ymax=100,
        xtick = \empty,
        area style,
        label style={font=\tiny},
        tick label style={font=\tiny}
    ]
    \addplot[ybar interval,fill=gray,draw=gray] coordinates {%
    (0,8) %
    (5,0) 
    };
    \addplot[ybar interval,fill=black] coordinates {%
    (5,7) %
    (10,0) 
    };
    \addplot[draw=red,ultra thick] plot coordinates {
    (0,5) (100,5)
    };
\end{axis}
\end{tikzpicture}
}
\newcommand{\inexxJ}{
\begin{tikzpicture}
\tikzstyle{noeud}=[draw,circle,scale=0.6]
\tikzstyle{arc}=[-,ultra thick,>=latex]
\node[noeud,fill=lblue] (u1) at (-1.875,1.75) {};
\node[noeud,fill=lred] (u2) at (-2.75,1) {};
\node[noeud,fill=lgreen] (u3) at (-1.875,1) {};
\node[noeud,fill=lpurple] (u4) at (-1,1) {};
\node[noeud,fill=lorange] (u5) at (-2.75,0.25) {};

\node[noeud,fill=white,draw=white] (phantom) at (-3,1) {};
\draw[arc] (u1)--(u2);
\draw[arc] (u1)--(u3);
\draw[arc] (u1)--(u4);
\draw[arc] (u2)--(u5);

    \begin{axis}[
        width=3.5cm,
        height=3.5cm,
        xmin=0, xmax=15, 
        ymin=0, ymax=100,
        xtick = \empty,
        area style,
        label style={font=\tiny},
        tick label style={font=\tiny}
    ]
    \addplot[ybar interval,fill=gray,draw=gray] coordinates {%
    (0,32) %
    (5,0) 
    };
    \addplot[ybar interval,fill=black] coordinates {%
    (5,7) %
    (10,25) %
    (15,0) 
    };
    \addplot[draw=red,ultra thick] plot coordinates {
    (0,5) (100,5)
    };
\end{axis}
\end{tikzpicture}
}
\newcommand{\inexxG}{
\begin{tikzpicture}
\tikzstyle{noeud}=[draw,circle,scale=0.6]
\tikzstyle{arc}=[-,ultra thick,>=latex]
\node[noeud,fill=lblue] (u1) at (-1.875,1.75) {};
\node[noeud,fill=lred] (u2) at (-2.75,1) {};
\node[noeud,fill=lgreen] (u3) at (-1.875,1) {};
\node[noeud,fill=lpurple] (u4) at (-1,1) {};
\node[noeud,fill=lorange] (u5) at (-2.75,0.25) {};
\node[noeud,fill=lpink] (u6) at (-1.875,0.25) {};

\node[noeud,fill=white,draw=white] (phantom) at (-3,1) {};
\draw[arc] (u1)--(u2);
\draw[arc] (u1)--(u3);
\draw[arc] (u1)--(u4);
\draw[arc] (u2)--(u5);
\draw[arc] (u3)--(u6);

    \begin{axis}[
        width=3.5cm,
        height=3.5cm,
        xmin=0, xmax=10, 
        ymin=0, ymax=100,
        xtick = \empty,
        area style,
        label style={font=\tiny},
        tick label style={font=\tiny}
    ]
    \addplot[ybar interval,fill=gray,draw=gray] coordinates {%
    (0,5) %
    (5,0) 
    };
    \addplot[ybar interval,fill=black] coordinates {%
    (5,0) %
    (10,0) 
    };
    \addplot[draw=red,ultra thick] plot coordinates {
    (0,5) (100,5)
    };
\end{axis}
\end{tikzpicture}
}
\newcommand{\inexxK}{
\begin{tikzpicture}
\tikzstyle{noeud}=[draw,circle,scale=0.6]
\tikzstyle{arc}=[-,ultra thick,>=latex]
\node[noeud,fill=lblue] (u1) at (-1.875,1.75) {};
\node[noeud,fill=lred] (u2) at (-3,1) {};
\node[noeud,fill=lgreen] (u3) at (-2.25,1) {};
\node[noeud,fill=lpurple] (u4) at (-1.5,1) {};
\node[noeud,fill=lorange] (u5) at (-0.75,1) {};
\node[noeud,fill=lpink] (u6) at (-3,0.25) {};

\draw[arc] (u1)--(u2);
\draw[arc] (u1)--(u3);
\draw[arc] (u1)--(u4);
\draw[arc] (u1)--(u5);
\draw[arc] (u2)--(u6);

    \begin{axis}[
        width=3.5cm,
        height=3.5cm,
        xmin=0, xmax=15, 
        ymin=0, ymax=100,
        xtick = \empty,
        area style,
        label style={font=\tiny},
        tick label style={font=\tiny}
    ]
    \addplot[ybar interval,fill=gray,draw=gray] coordinates {%
    (0,17) %
    (5,0) 
    };
    \addplot[ybar interval,fill=black] coordinates {%
    (5,8) %
    (10,9) %
    (15,0) 
    };
    \addplot[draw=red,ultra thick] plot coordinates {
    (0,5) (100,5)
    };
\end{axis}
\end{tikzpicture}
}
\newcommand{\inexxL}{
\begin{tikzpicture}
\tikzstyle{noeud}=[draw,circle,scale=0.6]
\tikzstyle{arc}=[-,ultra thick,>=latex]
\node[noeud,fill=lblue] (u1) at (-1.875,1.75) {};
\node[noeud,fill=lred] (u11) at (-3,1) {};
\node[noeud,fill=lgreen] (u12) at (-2.25,1) {};
\node[noeud,fill=lpurple] (u13) at (-1.5,1) {};
\node[noeud,fill=lorange] (u14) at (-0.75,1) {};
\node[noeud,fill=lpink] (u111) at (-3,0.25) {};
\node[noeud,fill=lbrown] (u121) at (-2.25,0.25) {};
\draw[arc] (u1)--(u11);
\draw[arc] (u1)--(u12);
\draw[arc] (u1)--(u13);
\draw[arc] (u1)--(u14);
\draw[arc] (u11)--(u111);
\draw[arc] (u12)--(u121);

    \begin{axis}[
        width=3.5cm,
        height=3.5cm,
        xmin=0, xmax=10, 
        ymin=0, ymax=100,
        xtick = \empty,
        area style,
        label style={font=\tiny},
        tick label style={font=\tiny}
    ]
    \addplot[ybar interval,fill=gray,draw=gray] coordinates {%
    (0,6) %
    (5,0) 
    };
    \addplot[ybar interval,fill=black] coordinates {%
    (5,6) %
    (10,0) 
    };
    \addplot[draw=red,ultra thick] plot coordinates {
    (0,5) (100,5)
    };
\end{axis}
\end{tikzpicture}
}
\definecolor{shadethmcolor}{rgb}{0.96,0.93,1.0} 
\definecolor{shaderulecolor}{rgb}{1,1,1} 
 \renewenvironment{proof}{{\noindent\bfseries Proof.}}{\hfill{\color{ourpurple}{\decoone}}}
\title{\Large\textsc{Detection of Common Subtrees}\\ \textsc{with Identical Label Distribution}}
\author{
Romain Azaïs \\ \href{mailto:romain.azais@inria.fr}{romain.azais@inria.fr}
\and
Florian Ingels \\ \href{mailto:florian.ingels@gmail.com}{florian.ingels@gmail.com}}
\date{\small Inria team MOSAIC, Lyon, France}
\begin{document}

\maketitle


{\small
\begin{center}\textbf{Abstract}\end{center}

\vspace{-0.2cm}

\noindent
Frequent pattern mining is a relevant method to analyse structured data, like sequences, trees or graphs. It consists in identifying characteristic substructures of a dataset. This paper deals with a new type of patterns for tree data: common subtrees with identical label distribution. Their detection is far from obvious since the underlying isomorphism problem is graph isomorphism complete. An elaborated search algorithm is developed and analysed from both theoretical and numerical perspectives. Based on this, the enumeration of patterns is performed through a new lossless compression scheme for trees, called DAG-RW, whose complexity is investigated as well. The method shows very good properties, both in terms of computation times and analysis of real datasets from the literature. Compared to other substructures like topological subtrees and labelled subtrees for which the isomorphism problem is linear, the patterns found provide a more parsimonious representation of the data.

\smallskip

\noindent
\textbf{Keywords:} unordered trees; marked tree isomorphism; DAG compression; frequent pattern mining
}

\section{Introduction}

\subsection{Context}

Tree data are ubiquitous, especially in biology and computer science, but also non-Euclidean \cite{bmmv2002}, which prevents them from being analysed by classical statistical methods adapted to multi-dimensional data. Therefore, they require the development of specific tools that take into account their structured nature. Among such techniques, frequent pattern mining \cite{aggarwal2014frequent} consists in identifying patterns, i.e. substructures, that appear often in the data. The more elaborate the patterns searched, the more difficult the problem is: the issue is to preserve a reasonable algorithmic complexity that allows the search of a given family of patterns in a reasonable time.

Different types of patterns have been considered in the literature to analyse tree data (see the survey \cite{jimenez2010} and the references therein) with a strong interest in a specific family of patterns called subtrees \cite{asai2003discovering,zaki2002}. In these two papers, only subtrees that appear more often than a given threshold are considered. Reverse search \cite{avis1996reverse} is a generic approach for enumerating frequent patterns in a dataset that consists in (i) building an enumeration tree of substructures, and then (ii) pruning it to keep only frequent patterns. It was notably used in \cite{asai2003discovering} for subtrees and in \cite{ingels2022enumeration} for subforests, i.e. sets of subtrees. Another technique is to enumerate all the substructures, then evaluate their frequency. For subtrees, this can be done with DAG compression \cite{valiente2001}, a tree compression scheme that eliminates subtree redundancies. Interestingly, the subtree kernel (a similarity function based on comparison of subtrees) can be evaluated from the enumeration of all the subtrees obtained from the DAG compression of the dataset \cite{azais2020weight}, showing incidentally that this is a credible alternative to reverse search when the number of substructures is linear.

Subtrees can be considered with or without their labels. Considering the labels, the condition of subtree redundancy is more stringent and one may not find frequent patterns that characterise the data. When labels are ignored, the condition is more flexible but some information is lost. In this paper, we propose to fill the gap between labelled and unlabelled subtrees by considering a new type of patterns: subtrees with identical label distribution. The frequent pattern mining algorithm that we introduce and analyse is based on a generalisation of DAG compression to these substructures. It allows the detection of richer patterns than labelled subtrees while preserving the information of the labels unlike unlabelled subtrees.

\subsection{Background}
\label{ss:intro:dagcompression}

A (rooted) tree is a connected directed graph with no undirected cycle and such that there exists a special node called the root, which has no parent, and such that any node different from the root has exactly one parent.
In the setting of this paper, each node $u$ of a tree $T$ carries a label, denoted by $\overline{u}$, the nature of which is not assumed. The set of labels of $T$, also called alphabet, is defined as $\attr(T) = \lbrace \overline{u} : u\in T \rbrace$.
The structures of two trees $T_1$ and $T_2$ can be compared taking into account or not their labels.

\begin{definition}\label{def:isom}
A bijection $\isom : T_1\to T_2$ is a tree isomorphism if, for any $u,v\in T_1$, $u$ is a child of $v$ in $T_1$ if and only if $\isom(u)$ is a child of $\isom(v)$ in $T_2$. If in addition, for any $u\in T_1$, $\overline{u}=\overline{\isom(u)}$, then $\isom$ is a labelled tree isomorphism.
\end{definition}

The existence of a (labelled or not) tree isomorphism (which can be established in linear time \cite{aho1974design,valiente2013algorithms}) induces an equivalence relation on the set of trees. $T_1\simeq T_2$ means that there exists a tree isomorphism between $T_1$ and $T_2$.
Adding $l$ as a subscript to symbol $\simeq$ means that tree isomorphism is considered with labels.

Subtrees are an important class of tree substructures. A subtree $T(v)$ of a tree $T$ is the tree made of the node $v$ and all its descendants in $T$. $T$ can have internal repetitions in its subtrees, regardless of the isomorphism used to evaluate them. DAG compression aims at eliminating these redundancies through the construction of a new graph $\red_\ast(T)$ (where $\ast\in\{\simeq,\simeq_l\}$ denotes the selected equivalence relation) which contains each of the subtrees of $T$ only once, up to isomorphism \cite{valiente2001}. Consequently, parsimonious enumeration of subtrees for frequent pattern mining can be realised from the graph $\red_\ast(T)$.

The set of vertices of $\red_\ast(T)$ corresponds to the set of equivalence classes of the subtrees of $T$. Let $s:\red_\ast(T)\to T$ be a section, i.e. an injective function such that the equivalence class of $s(a)$ is $a$. If we consider labelled tree isomorphisms, each node $a$ of $\red_{\simeq_l}(T)$ carries the label of $s(a)$. The set of edges of $\red_\ast(T)$ is built as follows. For any nodes $a$ and $b$ in $\red_\ast(T)$, there are as many edges from $a$ to $b$ as the number of apparitions (up to isomorphism) of $s(b)$ as a child of $s(a)$. The graph $\red_\ast(T)$ can be computed in linear time \cite{valiente2001}. It is a DAG that compresses lossless the structure of $T$. Lossless means that a tree in the same equivalence class (with respect to the selected equivalence relation $\ast$) as $T$ can be reconstructed. However, it should be noted that labels of $T$ can not be inferred when working with unlabelled tree isomorphisms: the DAG $\red_{\simeq}(T)$ does not carry any piece of information about them. Figure~\ref{fig:dag_compression_example} illustrates DAG compression on an example for both $\simeq$ and $\simeq_l$.

\begin{figure}[h]
\def\xscale{0.67}
\def\yscale{0.9}
\def\nodescale{0.7}
\centering
\begin{tikzpicture}[xscale=\xscale,yscale=\yscale]
\tikzstyle{noeud}=[draw,circle,fill=white,scale=\nodescale*1]
\tikzstyle{arc}=[->-,>=latex]
\tikzstyle{fleche}=[->,>=latex,red,thick]

\node[noeud,fill=lorange] (u0) at (0,3) {$0$};

\node[noeud,fill=lblue] (u1) at (-3.5,2) {$1$};
\node[noeud,fill=lblue] (u2) at (3.5,2) {$2$};

\node[noeud,fill=lred] (u3) at (-5.5,1) {$2$};
\node[noeud,fill=lred] (u4) at (-3.5,1) {$4$};
\node[noeud,fill=lred] (u5) at (-1.5,1) {$3$};
\node[noeud,fill=lred] (u6) at (1.5,1) {$4$};
\node[noeud,fill=lred] (u7) at (3.5,1) {$8$};
\node[noeud,fill=lred] (u8) at (5.5,1) {$6$};

\node[noeud,fill=lgreen] (u9) at (-6,0) {$3$};
\node[noeud,fill=lgreen] (u10) at (-5,0) {$4$};
\node[noeud,fill=lgreen] (u11) at (-4,0) {$9$};
\node[noeud,fill=lgreen] (u12) at (-3,0) {$16$};
\node[noeud,fill=lgreen] (u13) at (-2,0) {$3$};
\node[noeud,fill=lgreen] (u14) at (-1,0) {$4$};
\node[noeud,fill=lgreen] (u15) at (1,0) {$6$};
\node[noeud,fill=lgreen] (u16) at (2,0) {$8$};
\node[noeud,fill=lgreen] (u17) at (3,0) {$18$};
\node[noeud,fill=lgreen] (u18) at (4,0) {$32$};
\node[noeud,fill=lgreen] (u19) at (5,0) {$6$};
\node[noeud,fill=lgreen] (u20) at (6,0) {$8$};

\node (t1) at (0,4) {$T$};

\draw (u0)--(u1);
\draw (u0)--(u2);
\draw (u1)--(u3);
\draw (u1)--(u4);
\draw (u1)--(u5);
\draw (u2)--(u6);
\draw (u2)--(u7);
\draw (u2)--(u8);
\draw (u3)--(u9);
\draw (u3)--(u10);
\draw (u4)--(u11);
\draw (u4)--(u12);
\draw (u5)--(u13);
\draw (u5)--(u14);
\draw (u6)--(u15);
\draw (u6)--(u16);
\draw (u7)--(u17);
\draw (u7)--(u18);
\draw (u8)--(u19);
\draw (u8)--(u20);

\def\xshift{7.5}

\node[noeud,fill=lorange] (u0) at (\xshift,3) {};
\node[noeud,fill=lblue] (u1) at (\xshift,2) {};
\node[noeud,fill=lred] (u2) at (\xshift,1) {};
\node[noeud,fill=lgreen] (u3) at (\xshift,0) {};

\node (t1) at (\xshift,4) {$\red_\simeq(T)$};

\draw[arc] (u0) to[bend left=45] (u1);
\draw[arc] (u0) to[bend right=45] (u1);
\draw[arc] (u1) to[bend left=45] (u2);
\draw[arc] (u1) to[bend right=45] (u2);
\draw[arc] (u1) to (u2);
\draw[arc] (u2) to[bend left=45] (u3);
\draw[arc] (u2) to[bend right=45] (u3);

\def\xshift{13}

\node[noeud,fill=lorange] (u0) at (0+\xshift,3) {$0$};

\node[noeud,fill=lblue] (u1) at (-2.25+\xshift,2) {$1$};
\node[noeud,fill=lblue] (u2) at (2.25+\xshift,2) {$2$};

\node[noeud,fill=lred] (u3) at (-4+\xshift,1) {$2$};
\node[noeud,fill=lred] (u4) at (-1.5+\xshift,1) {$4$};
\node[noeud,fill=lred] (u5) at (-3+\xshift,1) {$3$};
\node[noeud,fill=lred] (u6) at (1+\xshift,1) {$4$};
\node[noeud,fill=lred] (u7) at (3.5+\xshift,1) {$8$};
\node[noeud,fill=lred] (u8) at (2+\xshift,1) {$6$};

\node[noeud,fill=lgreen] (u9) at (-4+\xshift,0) {$3$};
\node[noeud,fill=lgreen] (u10) at (-3+\xshift,0) {$4$};
\node[noeud,fill=lgreen] (u11) at (-2+\xshift,0) {$9$};
\node[noeud,fill=lgreen] (u12) at (-1+\xshift,0) {$16$};
\node[noeud,fill=lgreen] (u15) at (1+\xshift,0) {$6$};
\node[noeud,fill=lgreen] (u16) at (2+\xshift,0) {$8$};
\node[noeud,fill=lgreen] (u17) at (3+\xshift,0) {$18$};
\node[noeud,fill=lgreen] (u18) at (4+\xshift,0) {$32$};

\node (t1) at (0+\xshift,4) {$\red_{\simeq_l}(T)$};

\draw[arc] (u0)--(u1);
\draw[arc] (u0)--(u2);
\draw[arc] (u1)--(u3);
\draw[arc] (u1)--(u4);
\draw[arc] (u1)--(u5);
\draw[arc] (u2)--(u6);
\draw[arc] (u2)--(u7);
\draw[arc] (u2)--(u8);
\draw[arc] (u3)--(u9);
\draw[arc] (u3)--(u10);
\draw[arc] (u4)--(u11);
\draw[arc] (u4)--(u12);
\draw[arc] (u5)--(u9);
\draw[arc] (u5)--(u10);
\draw[arc] (u6)--(u15);
\draw[arc] (u6)--(u16);
\draw[arc] (u7)--(u17);
\draw[arc] (u7)--(u18);
\draw[arc] (u8)--(u15);
\draw[arc] (u8)--(u16);
\end{tikzpicture}

\caption{From left to right: a labelled tree $T$, its unlabelled DAG compression $\red_\simeq(T)$ and its labelled DAG compression $\red_{\simeq_l}(T)$. Nodes with same equivalence class (with respect to $\simeq$) are colored accordingly.}
\label{fig:dag_compression_example}
\end{figure}
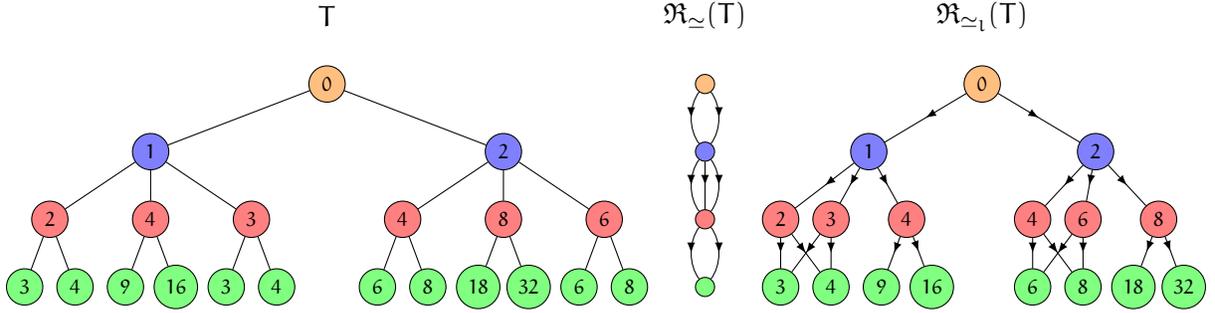

By definition, equivalence relation $\simeq_l$ is more stringent than $\simeq$, i.e. $T_1\simeq_lT_2\,\Rightarrow\,T_1\simeq T_2$. This means that patterns detected for $\simeq$ are more frequent and can help to detect more similarities in the data, but without taking into account the labels. This is also reflected in compression levels, because, for any tree $T$, $\red_{\simeq}(T)\subset \red_{\simeq_l}(T)$. Nevertheless, $\red_\simeq(T)$ loses the piece of information carried by the labels of $T$.

A tree isomorphism $\isom$ between two trees $T_1$ and $T_2$ naturally induces a binary relation $\mathrel{R_{\isom}}$ over $\attr(T_1)\times\attr(T_2)$, defined as, for any $x\in\attr(T_1)$ and $y\in\attr(T_2)$,
\begin{equation*}
x \mathrel{R_{\isom}} y\quad\iff\quad\exists\,u\in T_1,~(x=\overline{u} )\wedge (y=\overline{\isom(u)}).
\end{equation*}
Such a relation $\mathrel{R_{\isom}}$ is said to be a bijection if and only if, for any $x\in\attr(T_1)$, there exists a unique $y\in\attr(T_2)$, which we denote $f_{\isom}(x)$, so that $x\mathrel{R_{\isom}} y$, and conversely if, for any $y\in \attr(T_2)$, there exists a unique $x\in \attr(T_1)$ so that $x\mathrel{R_{\isom}} y$.

\begin{definition}\label{def:cipher_tree_isom}
A tree isomorphism $\isom$ is said to be a tree ciphering if and only if $\mathrel{R_{\isom}}$ is a bijection. The bijective function $f_{\isom}:\attr(T_1) \to \attr(T_2)$ is called a cipher.
\end{definition}

An example of tree ciphering is shown in Figure~\ref{fig:tree_ciphering_example}.

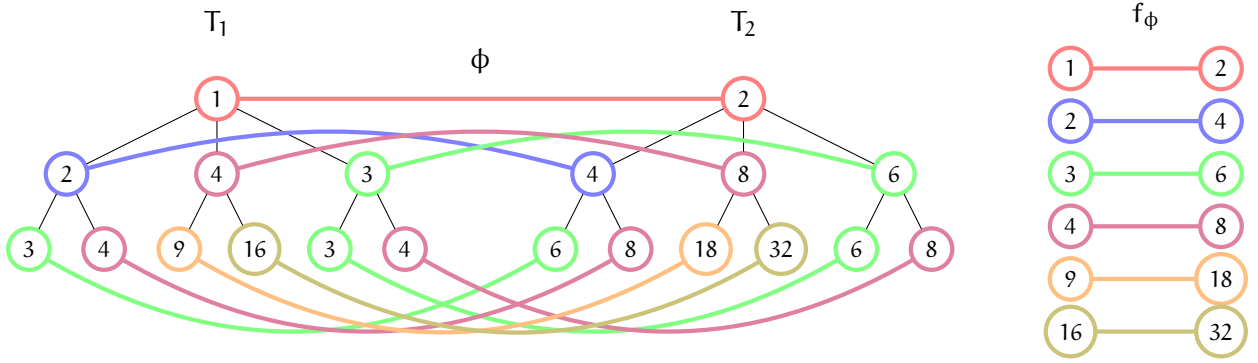
\begin{figure}[h]
\def\xscale{1}
\def\yscale{1}
\def\nodescale{0.8}
\centering
\begin{tikzpicture}[xscale=\xscale,yscale=\yscale]
\tikzstyle{noeud}=[draw,circle,fill=white,scale=\nodescale*1, ultra thick]
\tikzstyle{fleche}=[-,>=latex,red,ultra thick]

\node[noeud,draw=lred] (u1) at (-3.5,2) {$1$};
\node[noeud,draw=lred] (u2) at (3.5,2) {$2$};

\node[noeud,draw=lblue] (u3) at (-5.5,1) {$2$};
\node[noeud,draw=lpurple] (u4) at (-3.5,1) {$4$};
\node[noeud,draw=lgreen] (u5) at (-1.5,1) {$3$};
\node[noeud,draw=lblue] (u6) at (1.5,1) {$4$};
\node[noeud,draw=lpurple] (u7) at (3.5,1) {$8$};
\node[noeud,draw=lgreen] (u8) at (5.5,1) {$6$};

\node[noeud,draw=lgreen] (u9) at (-6,0) {$3$};
\node[noeud,draw=lpurple] (u10) at (-5,0) {$4$};
\node[noeud,draw=lorange] (u11) at (-4,0) {$9$};
\node[noeud,draw=lolive] (u12) at (-3,0) {$16$};
\node[noeud,draw=lgreen] (u13) at (-2,0) {$3$};
\node[noeud,draw=lpurple] (u14) at (-1,0) {$4$};

\node[noeud,draw=lgreen] (u15) at (1,0) {$6$};
\node[noeud,draw=lpurple] (u16) at (2,0) {$8$};
\node[noeud,draw=lorange] (u17) at (3,0) {$18$};
\node[noeud,draw=lolive] (u18) at (4,0) {$32$};
\node[noeud,draw=lgreen] (u19) at (5,0) {$6$};
\node[noeud,draw=lpurple] (u20) at (6,0) {$8$};

\node (t1) at (-3.5,3) {$T_1$};
\node (t2) at (3.5,3) {$T_2$};

\draw (u1)--(u3);
\draw (u1)--(u4);
\draw (u1)--(u5);
\draw (u2)--(u6);
\draw (u2)--(u7);
\draw (u2)--(u8);
\draw (u3)--(u9);
\draw (u3)--(u10);
\draw (u4)--(u11);
\draw (u4)--(u12);
\draw (u5)--(u13);
\draw (u5)--(u14);
\draw (u6)--(u15);
\draw (u6)--(u16);
\draw (u7)--(u17);
\draw (u7)--(u18);
\draw (u8)--(u19);
\draw (u8)--(u20);

\node (phi) at (0,2.5) {$\isom$};

\draw[fleche,lred] (u1)--(u2);
\draw[fleche,lblue] (u3) to[bend left=15](u6);
\draw[fleche,lgreen] (u5) to[bend left=15](u8);
\draw[fleche,lgreen] (u9) to[bend left=-30](u15);
\draw[fleche,lgreen] (u13) to[bend left=-30](u19);

\draw[fleche,lpurple] (u4) to[bend left=15](u7);
\draw[fleche,lpurple] (u14) to[bend left=-30](u20);
\draw[fleche,lpurple] (u10) to[bend left=-30](u16);

\draw[fleche,lorange] (u11) to[bend left=-30](u17);

\draw[fleche,lolive] (u12) to[bend left=-30](u18);

\end{tikzpicture}\hfill
\begin{tikzpicture}[xscale=\xscale,yscale=0.7*\yscale]
\tikzstyle{noeud}=[draw,ultra thick,circle,fill=white,scale=\nodescale*1]
\tikzstyle{fleche}=[-,>=latex,ultra thick]

\node[noeud,draw=lred] (u1) at (0,0) {$1$};
\node[noeud,draw=lblue] (u2) at (0,-1) {$2$};
\node[noeud,draw=lgreen] (u3) at (0,-2) {$3$};
\node[noeud,draw=lpurple] (u4) at (0,-3) {$4$};
\node[noeud,draw=lorange] (u9) at (0,-4) {$9$};
\node[noeud,draw=lolive] (u16) at (0,-5) {$16$};

\node[noeud,draw=lred] (v2) at (2,0) {$2$};
\node[noeud,draw=lblue] (v4) at (2,-1) {$4$};
\node[noeud,draw=lgreen] (v6) at (2,-2) {$6$};
\node[noeud,draw=lpurple] (v8) at (2,-3) {$8$};
\node[noeud,draw=lorange] (v18) at (2,-4) {$18$};
\node[noeud,draw=lolive] (v32) at (2,-5) {$32$};

\node (t1) at (1,1) {$f_{\isom}$};

\draw[fleche,lred] (u1)--(v2);
\draw[fleche,lblue] (u2)--(v4);
\draw[fleche,lgreen] (u3)--(v6);
\draw[fleche,lpurple] (u4)--(v8);
\draw[fleche,lorange] (u9)--(v18);
\draw[fleche,lolive] (u16)--(v32);

\end{tikzpicture}
\caption{Two topologically isomorphic labelled trees $T_1$ and $T_2$, as well as an example of tree isomorphism $\isom$ between them (left). $\isom$ is also a tree ciphering, as the binary relation $R_{\isom}$ is bijective and induces a cipher $f_{\isom}$ (right). Nodes and mapping $\isom$ are colored according to the corresponding relations between labels in $f_{\isom}$. Note that $\isom$ is not the only tree isomorphism that yields a tree ciphering for these particular trees. }
\label{fig:tree_ciphering_example}
\end{figure}

The existence of a tree ciphering isomorphism induces an equivalence relation on the set of trees \cite{ingels2021isomorphic} which we denote $\sim$. By definition, relation $\sim$ is more stringent than $\simeq$ but also less stringent than $\simeq_l$, i.e
\begin{equation}\label{eq:eqrels}
T_1\simeq_l T_2\,\Rightarrow\,T_1\sim T_2\,\Rightarrow\,T_1\simeq T_2.\end{equation}

\subsection{Objectives of the paper}

As aforementioned, $\simeq_l$ is less flexible than $\simeq$ which implies that corresponding patterns are less frequent and can miss some forms of similarity. However, $\simeq_l$ keeps the piece of information carried by the labels. In light of \eqref{eq:eqrels}, equivalence relation $\sim$ is somewhere in-between making related patterns, i.e. subtrees with identical label distribution, of great interest for tree data analysis. If it exists, the compression scheme $\red_\sim$ based on $\sim$ should be a good solution for parsimonious enumeration of such patterns, and therefore for frequent pattern mining. In addition, it should achieve a better compression level than $\red_{\simeq_l}$ while preserving the labels contrary to $\red_{\simeq}$.

However, before addressing the issue of compression, it is important to note that the tree ciphering isomorphism problem (which consists in determining if a tree ciphering exists between two trees) itself is far from trivial: referred to as marked tree isomorphism problem, it is graph isomorphism complete \cite{booth1979problems}, i.e. as difficult as the graph isomorphism problem, to which no polynomial algorithm nor proof of NP-completeness is known \cite{schoning1988}, although a quasi-polynomial algorithm was recently exhibited \cite{babai2016graph}.

In addition, one can already expect from $\red_\sim$ a more complex conceptual definition than for $\red_{\simeq_l}$ and $\red_\simeq$. Indeed, the ciphers involved in the tree ciphering isomorphisms will have to be transferred to the compression version to allow the lossless reconstruction of the tree and its labels. However, ciphers correspond to pairwise comparisons of subtrees, for which there is no equivalent in DAG compressions $\red_{\simeq_l}$ and $\red_{\simeq}$. In particular, there is not a single cipher per equivalence class contrary to labels in $\red_{\simeq_l}$.

This paper aims to develop DAG-RW compression, a new compression scheme that generalises DAG compression to the equivalence relation $\sim$ (RW stands for rewritable to indicate that the labels are evaluated through a cipher) as an algorithmic solution for detection of frequent subtrees with identical pattern distribution. For this purpose, we address the following points.

\paragraph{Algorithm for tree ciphering isomorphism} If one wants to establish a compression method based on tree ciphering isomorphism, one must be able to quickly compute the corresponding equivalence classes, and this while the problem is graph isomorphism complete.
    
The reduction from a marked tree to a graph being linear \cite{booth1979problems}, we could consider using graph isomorphism resolution methods rather than algorithms built to directly deal with marked trees. Even if there exist graph isomorphism algorithms that are very efficient in practice \cite{mckay2014practical}, the majority of them do not construct an isomorphism (while they are required in our approach), but seek to compute a certificate of non-isomorphism, often based on vertex invariants. For example, in Weisfeiler-Lehman algorithms, also known as color refinement algorithms \cite{weisfeiler1968reduction}, graphs are colored according to some rules, and the color histograms are compared afterwards: if they diverge, the graphs are not isomorphic. However, the test is incomplete in the sense that there exist non-isomorphic graphs that share the same coloring. The distinguishability of those algorithms is constantly improved \cite{grohe2021deep} but does not yet answer the problem for any graph. Interestingly, the histograms can be used as the initial partition of vertices of a backtracking procedure that aims to exhibit an isomorphism, as in \cite[3.3 Solving the isomorphism problem directly]{fortin1996graph}: the invariants reduce the number of candidates that can be mapped with a given vertex.

\begin{quote}
\textit{It turns out that many isomorphism problems can be directly solved in less time than it takes to calculate the more powerful invariants (...).} --- Scott Fortin \cite[3.2 Vertex invariants]{fortin1996graph}
\end{quote}

In the present paper, we take this piece of advice on board and propose an efficient search algorithm to solve the tree ciphering isomorphism problem directly from the objects of interest (which also has the benefit of avoiding the reduction from marked trees to graphs): given two trees $T_1$ and $T_2$, it explores the huge space of tree isomorphisms between $T_1$ and $T_2$ and checks if they induce a tree ciphering or not. The algorithm processes in two main steps, both based on invariants developed specifically for marked trees: first, a search space reduction (which relies on previous work \cite{ingels2021isomorphic}) is performed using the relations between tree cipherings and ciphers; second, the remaining search space is explored using a sophisticated backtracking strategy.
    
\paragraph{DAG-RW compression} We define the original concept of DAG-RW compression, which is lossless with respect to both the topology and the labels, and theoretically achieves better compression rates than conventional DAG compression with labels. Furthermore, we use the aforementioned search algorithm for tree ciphering isomorphism to propose a compression algorithm, which we also analyse the time-complexity. Our compression method does not depend on this specific answer to the tree ciphering isomorphism problem but requires the computation of a tree ciphering and related cipher when subtrees are isomorphic, which drastically limits the contributions on graph isomorphism available in the literature and argues for the development of a new approach.

\paragraph{Real data analysis} On top of this, we illustrate on two real datasets that DAG-RW compression can be used to detect frequent common subtrees with identical label distribution. We compare the results with labelled and unlabelled subtrees and show that this new type of patterns is relevant because it captures new forms of similarity, making DAG-RW compression a serious candidate to perform parsimonious analysis of tree data.

The paper is organised as follows. Forthcoming Subsection~\ref{ss:nots} sums up the notation and vocabulary used in the rest of the article. Our algorithm for tree ciphering isomorphism is investigated in Section~\ref{s:treeciphering}. Section~\ref{s:dagrw} is devoted to DAG-RW compression: concept, algorithm, and application to frequent pattern mining on two datasets of reference. Appendixes~\ref{app:size_search_size} and \ref{proof:backtracking_tree} present technical details and proofs related to the algorithm for tree ciphering isomorphism. All the methods and algorithms developed in this paper were implemented in the \texttt{Python} library \texttt{treex} \cite{azais2019treex}, which was used in the numerical sections of the paper.

\subsection{Notation and vocabulary}
\label{ss:nots}

\paragraph{Labelled rooted trees} A rooted tree $T$ is a connected directed graph without undirected cycle and such that there exists a special node $\troot(T)$, which has no parent and such that any node $v\neq\troot(T)$ has exactly one parent: $\parent(v)$. $\children(v)$ denotes the set of nodes that share $v$ as parent. With a slight abuse of notation, ``$v$ is a node of $T$'' is denoted $v\in T$. $\leaves(T)$ is the set of nodes of $T$ without any children. $T(v)$ is the subtree of $T$ formed by $v$ and all its descendants in $T$. Finally, each node $v$ of $T$ carries a label $\overline{v}$, whose set, called alphabet, is denoted $\attr(T)$, i.e. $\attr(T) = \{\overline{v}\,:\,v\in T\}$.

\paragraph{Isomorphisms} Let $T_1$ and $T_2$ be two labelled rooted trees. The paper considers three types of tree isomorphisms: unlabelled and labelled tree isomorphisms given in Definition~\ref{def:isom} and tree ciphering given in Definition~\ref{def:cipher_tree_isom}.
\begin{itemize}
    \item $T_1\simeq T_2$: there exists an unlabelled tree isomorphism between $T_1$ and $T_2$, which means that they share the same topology without any information about their labels.
    \item $T_1\sim T_2$: there exists a tree ciphering $\isom$ from $T_1$ into $T_2$, meaning that they share the same topology and the same labels up to a cipher $f_{\isom}$, i.e. $\overline{\isom(u)} = f_{\isom}( \overline{u} )$ for any $u\in T_1$.
    \item $T_1\simeq_l T_2$: there exists a labelled tree isomorphism $\isom$ from $T_1$ into $T_2$, meaning that they share the same topology and the same labels, i.e. $\overline{\isom(u)} = \overline{u}$, for any $u\in T_1$.
\end{itemize}
The inclusion relation between these isomorphisms is given in \eqref{eq:eqrels}. Related equivalence classes are denoted $[T]_\ast$, $\ast\in\{\simeq,\sim,\simeq_l\}$. When there is no ambiguity on the underlying tree $T$, equivalence classes of subtrees of $T$ are noted in short $[u]_\ast$ instead of $[T(u)]_\ast$.

\paragraph{Compression} $\red_\ast(T)$ denotes the DAG compression of the tree $T$, which is described in Subsection~\ref{ss:intro:dagcompression} for $\ast\in\{\simeq,\simeq_l\}$ and remains to be introduced for $\ast=\,\sim$. We recall that $\red_\simeq(T)$ compresses the topology of $T$ but loses the pieces of information carried by the labels of $T$, while $\red_{\simeq_l}(T)$ is a lossless compression of $T$.

\paragraph{Topological properties}  The depth of a node $u$ is the length of the path from the root of $T$ to $u$,
$$ \depth(u) = 
\left\{\begin{array}{ll}
0 & \text{if }u=\troot(T),\\
\displaystyle 1+\depth(\parent(u)) & \text{otherwise.}
\end{array}\right.
$$
The height $\theight(T)$ and the degree $\degr(T)$ of $T$ can then be defined as the maximal depth and the maximal number of children of its nodes, i.e.
$$\theight(T) = \max_{u\in T}\,\depth(u)\qquad\text{and}\qquad \degr(T) = \max_{u\in T}\,\#\children(u).$$

\section{Algorithm for tree ciphering isomorphism}
\label{s:treeciphering}

This section presents our algorithm to find, if it exists, a tree ciphering between two trees $T_1$ and $T_2$. Exploring tree isomorphisms, the algorithm reduces the search space using back and forth deductions on the topology and the labels. This part relies on and improves our previous work \cite{ingels2021isomorphic}. Then the remaining space is explored using a backtracking strategy.

\subsection{Incremental construction of a tree ciphering}

Starting from trivial objects, the algorithm incrementally builds, if possible, a bijection $\isom$ on the nodes of the trees and a bijection $f$ on the labels of the nodes. At any step of the algorithm:
\begin{itemize}
    \item $\isom$ maps a subset of nodes of tree $T_1$ to a subset of nodes of tree $T_2$ so that, if $u$ is a child of $v$ in $T_1$ and both are in the domain of $\isom$, then $\isom(u)$ is a child of $\isom(v)$ in $T_2$;
    \item $f$ maps a subset of labels of $\attr(T_1)$ to a subset of labels of $\attr(T_2)$ so that $f(\overline{u})=\overline{v}$ as soon as $\isom(u)=v$.
\end{itemize}
If the algorithm goes until the end, i.e. if $\dom(\isom)=T_1$ and $\dom(f)=\attr(T_1)$, then, in light of Definitions~\ref{def:isom} and \ref{def:cipher_tree_isom}, $\isom$ and $f$ correspond to a tree ciphering and related cipher. The incremental update of the domains of $\isom$ and $f$ is performed using the following method.

\begin{minipage}[c]{0.53\textwidth}
Let $\psi : A_\psi \to B_\psi$ be a bijection with $A_\psi\subset A$ and $B_\psi\subset B$, and $(a,b)\in A\times B$. If $a\in A_\psi$ and $\psi(a)=b$ or $a\notin A_\psi$ and $b\notin B_\psi$, then $\psi$ can be extended as the bijection $\psi_{(a,b)}$ from $A_\psi\cup \lbrace a\rbrace$ into $B_\psi\cup\lbrace b\rbrace$ as
$$ \psi_{(a,b)}(\alpha) =\begin{cases}
\psi(\alpha) & \text{if } \alpha\in A_\psi, \\
b & \text{if } \alpha=a.
\end{cases}$$
Algorithm~\ref{algo:extbij} tests the extension condition and extends the bijection when possible.
\end{minipage}\hfill
\begin{minipage}[c]{0.42\textwidth}
\begin{algorithm}[H]
\caption{\textsc{ExtBij}}
\label{algo:extbij}
\KwIn{$\psi: A_\psi\to B_\psi,a\in A,b\in B$}
\KwOut{$\top$ if $\psi$ can be extended}
\uIf{$a\in A_\psi$ {\bf and} $\psi(a)=b$}{
\Return $\top$
}{
\uElseIf{$a\notin A_\psi$ {\bf and} $b\notin B_\psi$}
{
$\psi \gets \psi_{(a,b)}$\;
\Return $\top$
}
\Else{
\Return $\bot$
}
}
\end{algorithm}
\end{minipage}

Now the question is to wisely propose elements to extend the domains of $\isom$ and $f$. We proceed in three main steps: deductions from topology given in Subsection~\ref{ss:ded:topol}, deductions from labels given in Subsection~\ref{ss:ded:label}, backtracking given in Subsection~\ref{ss:backtracking}.

\subsection{Deductions from topology}
\label{ss:ded:topol}

\paragraph{Partition of nodes into bags} We assume here that some nodes and labels have already been mapped through $\isom$ and $f$ and aim to increase their domains. To this end, nodes of $T_1$ and $T_2$ that are not mapped via $\isom$ can be sorted into bags.

\begin{definition}\label{def:bag}
A bag $\bfB$ is a couple $(B_1,B_2)$, with $B_i\subset T_i$, so that $\#B_1=\#B_2$, which is denoted $\#\bfB$.
\end{definition}

A bag $\bfB$ contains nodes of $T_1$ and $T_2$ which we know (in a trivial way or by intricate deductions) must be mapped together to satisfy the conditions of a tree ciphering. For instance, if $\isom(u) = v$, then $(\children(u),\children(v))$ forms a bag: $\isom$ can not be a tree isomorphism if a child of $u$ is not mapped with a child of $v$. Another elementary bag is $(\leaves(T_1),\leaves(T_2))$. At this point, we assume that the nodes of $T_1$ and $T_2$ not mapped via $\isom$ are sorted into bags, which set is denoted $\bags$. The number of possible completions of $\isom$ is therefore given by
\begin{equation}\label{size_bags_only}
    N(\bags) = \prod_{\bfB\in\bags} \#\bfB!,
\end{equation}
which evaluates the size of the remaining search space.

If a bag $\bfB$ is of size $1$, i.e. $\bfB=(u,v)\in T_1\times T_2$, then one can call Algorithm~\ref{algo:extbij} to try to extend $\isom$ and $f$ with $\isom(u)=v$ and $f(\overline{u}) = \overline{v}$. If Algorithm~\ref{algo:extbij} returns $\bot$, then it means $T_1\nsim T_2$. Otherwise, the domains of $\isom$ and $f$ can be extended and this new piece of information can be used to refine the partition of the nodes into finer bags. The recursive improvements in bags due to such a mapping attempt are performed by function \textsc{MapNodes} given in the next paragraph.

\begin{deduction}\label{ded:bags}
As long as there exist bags of the form $\bfB=(u,v)\in T_1\times T_2$, call \textsc{MapNodes}$(u,v,\isom,f)$, and delete $\bfB$ from $\bags$.
\end{deduction}

\paragraph{Mapping nodes} If two nodes $u \in T_1$ and $v\in T_2$ can be mapped as $\isom(u)=v$, i.e. if $\textsc{ExtBij}(f,\overline{u},\overline{v})$ and $\textsc{ExtBij}(\isom,u,v)$ return $\top$, then (i) $\isom(\parent(u)) = \parent(v)$, (ii) the elements of $\children(u)$ must be mapped with those of $\children(v)$, and (iii) $f(\overline{u})=\overline{v}$. These observations are compiled into Algorithm~\ref{algo:mapnodes}, which is used whenever nodes should be mapped together. It uses the subroutine \textsc{SplitChildren}, given in Algorithm~\ref{algo:split_children} and illustrated in Figure~\ref{fig:bag_exple}, to address condition (ii), and recursively separate the children of the mapped nodes from the other nodes with which they are partitioned. The recursive nature of \textsc{SplitChildren} represents a significant improvement compared to \cite[3.3 Mapping nodes]{ingels2021isomorphic}. It should be noted that, if at any step \textsc{MapNodes} returns $\bot$, then we can conclude that $T_1$ and $T_2$ are not isomorphic with respect to $\sim$.

\begin{figure}[h!]

\begin{minipage}[c]{0.62\textwidth}
\begin{algorithm}[H]
\caption{\textsc{MapNodes}}\label{algo:mapnodes}

\SetKw{KwAnd}{and}
\KwIn{$u\in T_1,v\in T_2,\isom,f$}
\eIf{\textup{\textsc{ExtBij}}$(f,\overline{u},\overline{v})$ \KwAnd \textup{\textsc{ExtBij}}$(\isom,u,v)$}{
\textsc{SplitChildren}$(\lbrace u\rbrace,\lbrace v\rbrace)$\\
\eIf{$\isom(\parent(u))=\parent(v)$}{\Return $\top$}{
\Return \textsc{MapNodes}($\parent(u),\parent(v),\isom,f$)}
}{
\Return $\bot$
}
\end{algorithm}

   \begin{algorithm}[H]
\caption{\textsc{SplitChildren}}\label{algo:split_children}
\KwIn{$S_u,S_v$}
\SetKw{KwAnd}{and}
\SetKw{KwST}{so that}
$\mathcal{C}_u \gets \bigcup_{w\in S_u} \children(w)$\\
$\mathcal{C}_v \gets \bigcup_{w\in S_v} \children(w)$\\
\For{$\bfB\in\bags$}{
$B_u\gets B_1\cap \mathcal{C}_u$\\
$B_v\gets B_2\cap \mathcal{C}_v$\\
\If{$B_u\neq\emptyset$ \KwAnd $B_1\setminus B_u\neq \emptyset$}{
Add $(B_u,B_v)$ and $(B_1\setminus B_u, B_2\setminus B_v)$ to $\bags$\\
\textsc{SplitChildren}$(B_u,B_v)$\\
\textsc{SplitChildren}$(B_1\setminus B_u,\allowbreak B_2\setminus B_v)$\\
Delete $\bfB$ from $\bags$
}
}
\end{algorithm}
\end{minipage}\hfill
\begin{minipage}[c]{0.33\textwidth}
\centering
\def\xscale{0.5}
\def\yscale{0.5}
\def\nodescale{0.8}

\begin{tikzpicture}[xscale=\xscale,yscale=\yscale]
\tikzstyle{noeud}=[draw,circle,fill=white,scale=\nodescale*1]
\tikzstyle{attribut}=[scale=\nodescale*1,font=\bf]
\tikzstyle{arc}=[-,>=latex]
\tikzstyle{fleche}=[->,>=latex,red,ultra thick]

\node[noeud] (u1) at (0,0) {};
\node[noeud] (u2) at (1,0) {};
\node[noeud] (u3) at (2,0) {};
\node[noeud] (u4) at (3,0) {};

\node[noeud,ultra thick,draw=red] (u6) at (1,2) {$u$};
\draw[arc] (u6)--(u1);
\draw[arc] (u6)--(u2);
\draw[arc] (u6)--(u3);

\node[noeud] (u1b) at (0,-2) {};
\node[noeud] (u2b) at (1,-2) {};
\node[noeud] (u3b) at (2,-2) {};
\node[noeud] (u4b) at (3,-2) {};
\draw[arc] (u1)--(u1b);
\draw[arc] (u2)--(u2b);
\draw[arc] (u3)--(u3b);
\draw[arc] (u4)--(u4b);

\node[noeud] (v1) at (5,0) {};
\node[noeud] (v2) at (6,0) {};
\node[noeud] (v3) at (7,0) {};
\node[noeud] (v4) at (8,0) {};

\node[noeud] (v1b) at (5,-2) {};
\node[noeud] (v2b) at (6,-2) {};
\node[noeud] (v3b) at (7,-2) {};
\node[noeud] (v4b) at (8,-2) {};
\draw[arc] (v1)--(v1b);
\draw[arc] (v2)--(v2b);
\draw[arc] (v3)--(v3b);
\draw[arc] (v4)--(v4b);

\node[noeud,ultra thick,draw=red] (v6) at (6,2) {$v$};

\draw[fleche] (u6)--(v6) node[midway,above] {$\isom$};
\draw[arc] (v6)--(v1);
\draw[arc] (v6)--(v2);
\draw[arc] (v6)--(v3);

\draw[rounded corners,ultra thick,gray] (-.5,-.5) rectangle (3.5,.5) node[above] {$B_1$};

\draw[ultra thick,gray] (3.5,0)--(4.5,0);

\draw[rounded corners,ultra thick,gray] (4.5,-.5) rectangle (8.5,.5) node[above] {$B_2$};

\draw[rounded corners,ultra thick,gray] (-.5,-2.5) rectangle (3.5,-1.5); 

\draw[ultra thick,gray] (3.5,0-2)--(4.5,-2);

\draw[rounded corners,ultra thick,gray] (4.5,-2.5) rectangle (8.5,-1.5); 

\def\yshift{-2.5}

\node[noeud] (u1b) at (0,-2.5+\yshift) {};
\node[noeud] (u2b) at (1,-2.5+\yshift) {};
\node[noeud] (u3b) at (2,-2.5+\yshift) {};
\node[noeud] (u4b) at (3,-2.5+\yshift) {};

\node[noeud] (v1b) at (5,-2.5+\yshift) {};
\node[noeud] (v2b) at (6,-2.5+\yshift) {};
\node[noeud] (v3b) at (7,-2.5+\yshift) {};
\node[noeud] (v4b) at (8,-2.5+\yshift) {};

\draw[rounded corners,ultra thick,gray] (-.5,-3+\yshift) rectangle (2.5,-2+\yshift) ;

\node[gray] at (1,-1.5+\yshift) {$B_u$} ;

\draw[ultra thick,gray] (1,-3+\yshift) to[bend right=30] (6,-3+\yshift);

\draw[rounded corners,ultra thick,gray] (4.5,-3+\yshift) rectangle (7.5,-2+\yshift);

\node[gray] at (6,-1.5+\yshift) {$B_v$};

\draw[rounded corners,ultra thick,gray] (2.5,-3+\yshift) rectangle (3.5,-2+\yshift);

\node[gray] at (3.5,-1.5+\yshift) {$B_1\setminus B_u$} ;

\draw[ultra thick,gray] (3,-3+\yshift) to[bend right=30] (8,-3+\yshift);

\draw[rounded corners,ultra thick,gray] (7.5,-3+\yshift) rectangle (8.5,-2+\yshift);

\node[gray] at (8.5,-1.5+\yshift) {$B_2\setminus B_v$} ;

\draw[->,ultra thick,black] (9,-1)to[bend left=45] (9,-3.5+\yshift);

\node[noeud] (u1bb) at (0,-4.5+\yshift) {};
\node[noeud] (u2bb) at (1,-4.5+\yshift) {};
\node[noeud] (u3bb) at (2,-4.5+\yshift) {};
\node[noeud] (u4bb) at (3,-4.5+\yshift) {};

\node[noeud] (v1bb) at (5,-4.5+\yshift) {};
\node[noeud] (v2bb) at (6,-4.5+\yshift) {};
\node[noeud] (v3bb) at (7,-4.5+\yshift) {};
\node[noeud] (v4bb) at (8,-4.5+\yshift) {};

\draw[arc] (u1b)--(u1bb);
\draw[arc] (u2b)--(u2bb);
\draw[arc] (u3b)--(u3bb);
\draw[arc] (u4b)--(u4bb);
\draw[arc] (v1b)--(v1bb);
\draw[arc] (v2b)--(v2bb);
\draw[arc] (v3b)--(v3bb);
\draw[arc] (v4b)--(v4bb);

\draw[rounded corners,ultra thick,gray] (-.5,-5+\yshift) rectangle (2.5,-4+\yshift) ;


\draw[ultra thick,gray] (1,-5+\yshift) to[bend right=30] (6,-5+\yshift);

\draw[rounded corners,ultra thick,gray] (4.5,-5+\yshift) rectangle (7.5,-4+\yshift);


\draw[rounded corners,ultra thick,gray] (2.5,-5+\yshift) rectangle (3.5,-4+\yshift) ;

\draw[ultra thick,gray] (3,-5+\yshift) to[bend right=30] (8,-5+\yshift);

\draw[rounded corners,ultra thick,gray] (7.5,-5+\yshift) rectangle (8.5,-4+\yshift) ;
\end{tikzpicture}
\captionof{figure}{Illustration of \textsc{SplitChildren}. The bag $(B_1,B_2)$ (above) is split into two bags after the mapping of $u$ and $v$, one composed of the children of $u$ and $v$, and the other with the remaining nodes (below). Recursively, the bags of their descendants are also split. %
} \label{fig:bag_exple}
\end{minipage}
\end{figure}

\paragraph{Reduction of search space} To determine whether $T_1\sim T_2$, it is required that $T_1\simeq T_2$, which we can check in linear time \cite[Theorem~3.3]{aho1974design} together with assigning to the subtrees of $T_1$ and $T_2$ their equivalence class. The following steps are then taken in order. They will be illustrated on a followed example, beginning with Figure~\ref{preproc_exple}. In addition, Appendix~\ref{app:size_search_size} indicates how each of the manipulations of the system theoretically reduces the size of the search space.

\textit{A. Histogram of labels.} We first proceed to a verification (computable in linear time) which allows to eliminate some obvious cases where $T_1\nsim T_2$. In a traversal of $T_i$, we can count the number $N_i(a)$ of occurrences of each label $a\in\attr(T_i)$. Then we consider $H_i : n\in\mathbb{N} \mapsto \lbrace a\in \attr(T_i) : N_i(a)=n\rbrace$, which represents the histogram of labels of $T_i$. If the two histograms do not coincide, i.e. if there exists $n\in\mathbb{N}$ such that $\#H_1(n)\neq \#H_2(n)$, then we can conclude (in linear time) that $T_1\nsim T_2$. Otherwise, $\supp(H)$ denotes the support of the histograms,
$$\supp(H) = \left\{n\in\mathbb{N}~:~\#H_1(n)=\#H_2(n)>0\right\}.$$
We initialise the different parameters of the algorithm as follows: $\dom(\isom)=\dom(f)=\emptyset$ and all the nodes of $T_1$ and $T_2$ have to be sorted into bags, for which we work from the histograms of labels. Since it is not possible to map nodes whose labels do not have the same number of occurrences in their respective tree, the set $\bags$ is formed of bags $\bfB(n)$, $n\in\supp(H)$, such that
$$B_i(n) = \lbrace u\in T_i~:~\overline{u}\in H_i(n)\rbrace.$$
See Figure~\ref{preproc_exple} for an example. It should be noted that, in the case where all labels appear the same number of times each, we would then have a single bag containing all the nodes.

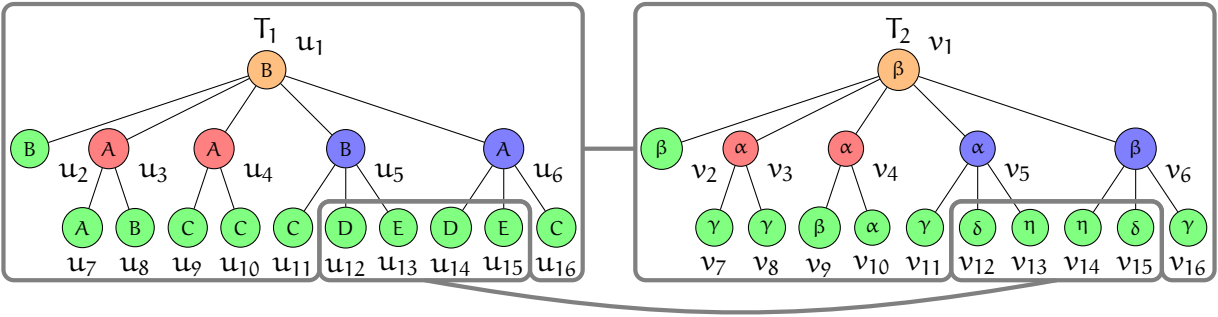
\begin{figure}[h!]
\centering
\def\xscale{0.35}
\def\yscale{0.7}
\def\nodescale{0.7}
\begin{tikzpicture}[xscale=\xscale,yscale=\yscale]
\tikzstyle{noeud}=[draw,circle,fill=white,scale=\nodescale*1]
\tikzstyle{attribut}=[scale=\nodescale*1,font=\bf]
\tikzstyle{arc}=[-,>=latex]
\tikzstyle{fleche}=[->,>=latex,red,thick]

\def\x{2}
\def\y{1.5}

\node[noeud,fill=lorange,label=15:$u_1$] (u1) at (4.5*\x,2*\y) {$B$};
\node[noeud,fill=lgreen,label=-15:$u_2$] (u2) at (0,1*\y) {$B$};
\node[noeud,fill=lred,label=-15:$u_3$] (u3) at (1.5*\x,1*\y) {$A$};
\node[noeud,fill=lred,label=-15:$u_4$] (u4) at (3.5*\x,1*\y) {$A$};
\node[noeud,fill=lblue,label=-15:$u_5$] (u5) at (6*\x,1*\y) {$B$};
\node[noeud,fill=lblue,label=-15:$u_6$] (u6) at (9*\x,1*\y) {$A$};
\node[noeud,fill=lgreen,label=-90:$u_7$] (u7) at (1*\x,0) {$A$};
\node[noeud,fill=lgreen,label=-90:$u_8$] (u8) at (2*\x,0) {$B$};
\node[noeud,fill=lgreen,label=-90:$u_9$] (u9) at (3*\x,0) {$C$};
\node[noeud,fill=lgreen,label=-90:$u_{10}$] (u10) at (4*\x,0) {$C$};
\node[noeud,fill=lgreen,label=-90:$u_{11}$] (u11) at (5*\x,0) {$C$};
\node[noeud,fill=lgreen,label=-90:$u_{12}$] (u12) at (6*\x,0) {$D$};
\node[noeud,fill=lgreen,label=-90:$u_{13}$] (u13) at (7*\x,0) {$E$};
\node[noeud,fill=lgreen,label=-90:$u_{14}$] (u14) at (8*\x,0) {$D$};
\node[noeud,fill=lgreen,label=-90:$u_{15}$] (u15) at (9*\x,0) {$E$};
\node[noeud,fill=lgreen,label=-90:$u_{16}$] (u16) at (10*\x,0) {$C$};

\node (t1) at (4.5*\x,2.5*\y) {$T_1$};

\draw[arc] (u1)--(u2) ;
\draw[arc] (u1)--(u3) ;
\draw[arc] (u1)--(u4) ;
\draw[arc] (u1)--(u5) ;
\draw[arc] (u1)--(u6) ;
\draw[arc] (u3)--(u7) ;
\draw[arc] (u3)--(u8) ;
\draw[arc] (u4)--(u9) ;
\draw[arc] (u4)--(u10) ;
\draw[arc] (u5)--(u11) ;
\draw[arc] (u5)--(u12) ;
\draw[arc] (u5)--(u13) ;
\draw[arc] (u6)--(u14) ;
\draw[arc] (u6)--(u15) ;
\draw[arc] (u6)--(u16) ;

\def\xshift{12*\x}

\node[noeud,fill=lorange,label=15:$v_1$] (u1) at (4.5*\x+\xshift,2*\y) {$\beta$};
\node[noeud,fill=lgreen,label=-15:$v_2$] (u2) at (0+\xshift,1*\y) {$\beta$};
\node[noeud,fill=lred,label=-15:$v_3$] (u3) at (1.5*\x+\xshift,1*\y) {$\alpha$};
\node[noeud,fill=lred,label=-15:$v_4$] (u4) at (3.5*\x+\xshift,1*\y) {$\alpha$};
\node[noeud,fill=lblue,label=-15:$v_5$] (u5) at (6*\x+\xshift,1*\y) {$\alpha$};
\node[noeud,fill=lblue,label=-15:$v_6$] (u6) at (9*\x+\xshift,1*\y) {$\beta$};
\node[noeud,fill=lgreen,label=-90:$v_7$] (u7) at (1*\x+\xshift,0) {$\gamma$};
\node[noeud,fill=lgreen,label=-90:$v_8$] (u8) at (2*\x+\xshift,0) {$\gamma$};
\node[noeud,fill=lgreen,label=-90:$v_9$] (u9) at (3*\x+\xshift,0) {$\beta$};
\node[noeud,fill=lgreen,label=-90:$v_{10}$] (u10) at (4*\x+\xshift,0) {$\alpha$};
\node[noeud,fill=lgreen,label=-90:$v_{11}$] (u11) at (5*\x+\xshift,0) {$\gamma$};
\node[noeud,fill=lgreen,label=-90:$v_{12}$] (u12) at (6*\x+\xshift,0) {$\delta$};
\node[noeud,fill=lgreen,label=-90:$v_{13}$] (u13) at (7*\x+\xshift,0) {$\eta$};
\node[noeud,fill=lgreen,label=-90:$v_{14}$] (u14) at (8*\x+\xshift,0) {$\eta$};
\node[noeud,fill=lgreen,label=-90:$v_{15}$] (u15) at (9*\x+\xshift,0) {$\delta$};
\node[noeud,fill=lgreen,label=-90:$v_{16}$] (u16) at (10*\x+\xshift,0) {$\gamma$};

\node (t2) at (4.5*\x+\xshift,2.5*\y) {$T_2$};

\draw[arc] (u1)--(u2) ;
\draw[arc] (u1)--(u3) ;
\draw[arc] (u1)--(u4) ;
\draw[arc] (u1)--(u5) ;
\draw[arc] (u1)--(u6) ;
\draw[arc] (u3)--(u7) ;
\draw[arc] (u3)--(u8) ;
\draw[arc] (u4)--(u9) ;
\draw[arc] (u4)--(u10) ;
\draw[arc] (u5)--(u11) ;
\draw[arc] (u5)--(u12) ;
\draw[arc] (u5)--(u13) ;
\draw[arc] (u6)--(u14) ;
\draw[arc] (u6)--(u15) ;
\draw[arc] (u6)--(u16) ;

\draw[ultra thick,rounded corners,gray] (6*\x-1,-1) rectangle (9*\x+1,.5);
\draw[ultra thick,rounded corners,gray] (\xshift+6*\x-1,-1) rectangle (\xshift+9*\x+1,.5);
\draw[ultra thick,gray] (7.5*\x,-1) to[bend right=5](\xshift+7.5*\x,-1);

\draw[ultra thick,rounded corners,gray] (-1,-1)--(-1,2.5*\y+.5)--(10*\x+1,2.5*\y+.5)--(10*\x+1,-1)--(9*\x+1,-1)--(9*\x+1,.5)--(6*\x-1,.5)--(6*\x-1,-1)--cycle;
\draw[ultra thick,rounded corners,gray] (\xshift-1,-1)--(\xshift-1,2.5*\y+.5)--(\xshift+10*\x+1,2.5*\y+.5)--(\xshift+10*\x+1,-1)--(\xshift+9*\x+1,-1)--(\xshift+9*\x+1,.5)--(\xshift+6*\x-1,.5)--(\xshift+6*\x-1,-1)--cycle;
\draw[ultra thick,gray] (10*\x+1,\y)--(\xshift-1,\y);

\end{tikzpicture}

\caption{Running example: histogram of labels (step 1/6). Two topologically isomorphic labelled trees $T_1$ (left) and $T_2$ (right). The color on nodes indicates the equivalence class under $\simeq$. The nodes have been numbered from $u_1$ to $u_{16}$ in $T_1$ (resp. from $v_1$ to $v_{16}$ in $T_2$) in breadth-first search order. The histograms of labels give $H_1(2)=\lbrace D,E\rbrace$, $H_1(4) = \lbrace A,B,C\rbrace$ and $H_2(2) =\lbrace \delta,\eta\rbrace$, $H_2(4) = \lbrace \alpha,\beta,\gamma\rbrace$. Since the histograms coincide, we initialise the bags (gray boxes) by grouping together the nodes whose labels appear with the same frequency. The initial size of the search space is $N(\bags)=12! \times 4! =11,496,038,400$.}
\label{preproc_exple}
\end{figure}

\FloatBarrier

\textit{B. Depth.} We distinguish nodes according to their depth: two mapped nodes must be at the same depth. For any $\bfB=(B_1,B_2)\in\bags$, we create the new bags $\bfB(d)$, $d\in[\![ 0,\theight(T_i)]\!]$, such that $B_i(d) = \lbrace u\in B_i : \depth(u)=d\rbrace$, then we remove $\bfB$ from $\bags$. After this operation, we apply Deduction~Rule~\ref{ded:bags}. Since the roots are the only nodes of depth zero, the rule will at least apply to their bag and they will be mapped together. See Figure~\ref{preproc_depth} for an example.

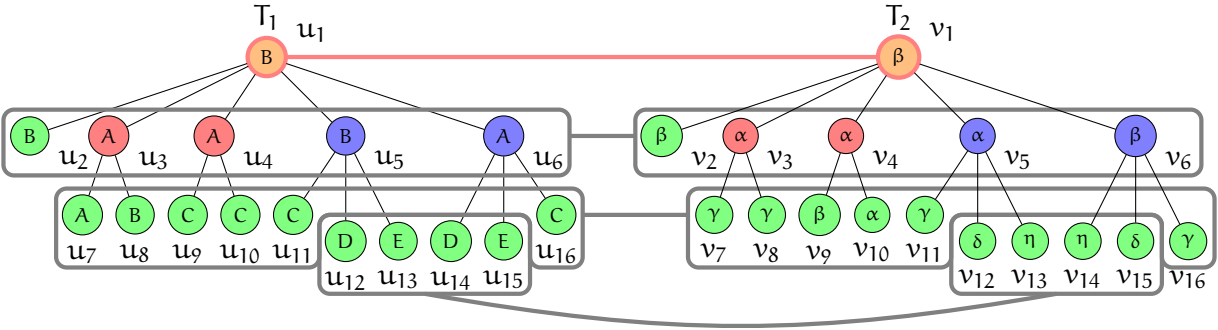
\begin{figure}[h!]
\centering
\def\xscale{0.35}
\def\yscale{0.7}
\def\nodescale{0.7}
\begin{tikzpicture}[xscale=\xscale,yscale=\yscale]
\tikzstyle{noeud}=[draw,circle,fill=white,scale=\nodescale*1]
\tikzstyle{attribut}=[scale=\nodescale*1,font=\bf]
\tikzstyle{arc}=[-,>=latex]
\tikzstyle{fleche}=[-,>=latex,red,ultra thick]

\def\x{2}
\def\y{1.5}

\node[noeud,fill=lorange,label=15:$u_1$,draw=lred, ultra thick] (u1) at (4.5*\x,2*\y) {$B$};
\node[noeud,fill=lgreen,label=-15:$u_2$] (u2) at (0,1*\y) {$B$};
\node[noeud,fill=lred,label=-15:$u_3$] (u3) at (1.5*\x,1*\y) {$A$};
\node[noeud,fill=lred,label=-15:$u_4$] (u4) at (3.5*\x,1*\y) {$A$};
\node[noeud,fill=lblue,label=-15:$u_5$] (u5) at (6*\x,1*\y) {$B$};
\node[noeud,fill=lblue,label=-15:$u_6$] (u6) at (9*\x,1*\y) {$A$};
\node[noeud,fill=lgreen,label=-90:$u_7$] (u7) at (1*\x,0) {$A$};
\node[noeud,fill=lgreen,label=-90:$u_8$] (u8) at (2*\x,0) {$B$};
\node[noeud,fill=lgreen,label=-90:$u_9$] (u9) at (3*\x,0) {$C$};
\node[noeud,fill=lgreen,label=-90:$u_{10}$] (u10) at (4*\x,0) {$C$};
\node[noeud,fill=lgreen,label=-90:$u_{11}$] (u11) at (5*\x,0) {$C$};
\node[noeud,fill=lgreen,label=-90:$u_{12}$] (u12) at (6*\x,-.5) {$D$};
\node[noeud,fill=lgreen,label=-90:$u_{13}$] (u13) at (7*\x,-.5) {$E$};
\node[noeud,fill=lgreen,label=-90:$u_{14}$] (u14) at (8*\x,-.5) {$D$};
\node[noeud,fill=lgreen,label=-90:$u_{15}$] (u15) at (9*\x,-.5) {$E$};
\node[noeud,fill=lgreen,label=-90:$u_{16}$] (u16) at (10*\x,0) {$C$};

\node (t1) at (4.5*\x,2.5*\y) {$T_1$};

\draw[arc] (u1)--(u2) ;
\draw[arc] (u1)--(u3) ;
\draw[arc] (u1)--(u4) ;
\draw[arc] (u1)--(u5) ;
\draw[arc] (u1)--(u6) ;
\draw[arc] (u3)--(u7) ;
\draw[arc] (u3)--(u8) ;
\draw[arc] (u4)--(u9) ;
\draw[arc] (u4)--(u10) ;
\draw[arc] (u5)--(u11) ;
\draw[arc] (u5)--(u12) ;
\draw[arc] (u5)--(u13) ;
\draw[arc] (u6)--(u14) ;
\draw[arc] (u6)--(u15) ;
\draw[arc] (u6)--(u16) ;

\def\xshift{12*\x}

\draw[ultra thick,rounded corners,gray] (\x-1,-1)--(\x-1,.5)--(10*\x+1,.5)--(10*\x+1,-1)--(10*\x-1,-1)--(10*\x-1,0)--(5*\x+1,0)--(5*\x+1,-1)--cycle;
\draw[ultra thick,rounded corners,gray] (\xshift+\x-1,-1)--(\xshift+\x-1,.5)--(\xshift+10*\x+1,.5)--(\xshift+10*\x+1,-1)--(\xshift+10*\x-1,-1)--(\xshift+10*\x-1,0)--(\xshift+5*\x+1,0)--(\xshift+5*\x+1,-1)--cycle;
\draw[ultra thick,gray] (10*\x+1,0)--(\xshift+\x-1,0);

\draw[ultra thick,rounded corners,gray] (-1,\y-.75) rectangle (10*\x+.5,\y+.5);
\draw[ultra thick,rounded corners,gray] (\xshift-1,\y-.75) rectangle (\xshift+10*\x+.5,\y+.5);
\draw[ultra thick,gray] (10*\x+.5,\y)--(\xshift+-1,\y);

\draw[ultra thick,rounded corners,gray] (6*\x-1,-1.5) rectangle (9*\x+1,0);
\draw[ultra thick,rounded corners,gray] (\xshift+6*\x-1,-1.5) rectangle (\xshift+9*\x+1,0);
\draw[ultra thick,gray] (7.5*\x,-1.5) to[bend right=5](\xshift+7.5*\x,-1.5);

\node[noeud,fill=lorange,label=15:$v_1$,draw=lred, ultra thick] (v1) at (4.5*\x+\xshift,2*\y) {$\beta$};
\node[noeud,fill=lgreen,label=-15:$v_2$] (v2) at (0+\xshift,1*\y) {$\beta$};
\node[noeud,fill=lred,label=-15:$v_3$] (v3) at (1.5*\x+\xshift,1*\y) {$\alpha$};
\node[noeud,fill=lred,label=-15:$v_4$] (v4) at (3.5*\x+\xshift,1*\y) {$\alpha$};
\node[noeud,fill=lblue,label=-15:$v_5$] (v5) at (6*\x+\xshift,1*\y) {$\alpha$};
\node[noeud,fill=lblue,label=-15:$v_6$] (v6) at (9*\x+\xshift,1*\y) {$\beta$};
\node[noeud,fill=lgreen,label=-90:$v_7$] (v7) at (1*\x+\xshift,0) {$\gamma$};
\node[noeud,fill=lgreen,label=-90:$v_8$] (v8) at (2*\x+\xshift,0) {$\gamma$};
\node[noeud,fill=lgreen,label=-90:$v_9$] (v9) at (3*\x+\xshift,0) {$\beta$};
\node[noeud,fill=lgreen,label=-90:$v_{10}$] (v10) at (4*\x+\xshift,0) {$\alpha$};
\node[noeud,fill=lgreen,label=-90:$v_{11}$] (v11) at (5*\x+\xshift,0) {$\gamma$};
\node[noeud,fill=lgreen,label=-90:$v_{12}$] (v12) at (6*\x+\xshift,-.5) {$\delta$};
\node[noeud,fill=lgreen,label=-90:$v_{13}$] (v13) at (7*\x+\xshift,-.5) {$\eta$};
\node[noeud,fill=lgreen,label=-90:$v_{14}$] (v14) at (8*\x+\xshift,-.5) {$\eta$};
\node[noeud,fill=lgreen,label=-90:$v_{15}$] (v15) at (9*\x+\xshift,-.5) {$\delta$};
\node[noeud,fill=lgreen,label=-90:$v_{16}$] (v16) at (10*\x+\xshift,-.5) {$\gamma$};

\node (t2) at (4.5*\x+\xshift,2.5*\y) {$T_2$};

\draw[arc] (v1)--(v2) ;
\draw[arc] (v1)--(v3) ;
\draw[arc] (v1)--(v4) ;
\draw[arc] (v1)--(v5) ;
\draw[arc] (v1)--(v6) ;
\draw[arc] (v3)--(v7) ;
\draw[arc] (v3)--(v8) ;
\draw[arc] (v4)--(v9) ;
\draw[arc] (v4)--(v10) ;
\draw[arc] (v5)--(v11) ;
\draw[arc] (v5)--(v12) ;
\draw[arc] (v5)--(v13) ;
\draw[arc] (v6)--(v14) ;
\draw[arc] (v6)--(v15) ;
\draw[arc] (v6)--(v16) ;

\draw[fleche,lred] (u1)--(v1);

\end{tikzpicture}

\caption{Running example: depth (step 2/6). Since $u_1$ and $v_1$ are the only nodes with depth zero, via Deduction~Rule~\ref{ded:bags}, we map $\isom(u_1)=v_1$ and $f(B)=\beta$. The children of $u_1$ and $v_1$ should be set aside from the other nodes according to the \textsc{SplitChildren} procedure, but they are already alone in their bag. The size of the search space is now $N(\bags)=5! \times 6! \times 4! = 2,073,600$.}
\label{preproc_depth}
\end{figure}

\textit{C. Equivalence class.} We distinguish nodes according to their equivalence class (with respect to $\simeq$): two mapped nodes must be the roots of isomorphic subtrees. For any $\bfB=(B_1,B_2)\in\bags$, we define $B_i(c) = \lbrace u\in B_i : [u]_\simeq=c\rbrace$. We add a new bag $(B_1(c),B_2(c))$ to $\bags$ for each such $c$, and remove $\bfB$ from $\bags$. Once all bags have been partitioned, we apply Deduction~Rule~\ref{ded:bags}. See Figure~\ref{preproc_equiv_class} for an example.

\begin{figure}[h!]
\centering
\def\xscale{0.35}
\def\yscale{0.7}
\def\nodescale{0.7}
\begin{tikzpicture}[xscale=\xscale,yscale=\yscale]
\tikzstyle{noeud}=[draw,circle,fill=white,scale=\nodescale*1]
\tikzstyle{attribut}=[scale=\nodescale*1,font=\bf]
\tikzstyle{arc}=[-,>=latex]
\tikzstyle{fleche}=[-,>=latex,red,ultra thick]

\def\x{2}
\def\y{1.5}
\node[noeud,fill=lorange,label=15:$u_1$,draw=lred, ultra thick] (u1) at (4.5*\x,2*\y) {$B$};
\node[noeud,fill=lgreen,label=-15:$u_2$,draw=lred, ultra thick] (u2) at (0,1*\y) {$B$};
\node[noeud,fill=lred,label=-15:$u_3$] (u3) at (1.5*\x,1*\y) {$A$};
\node[noeud,fill=lred,label=-15:$u_4$] (u4) at (3.5*\x,1*\y) {$A$};
\node[noeud,fill=lblue,label=-15:$u_5$] (u5) at (6*\x,1*\y) {$B$};
\node[noeud,fill=lblue,label=-15:$u_6$] (u6) at (9*\x,1*\y) {$A$};
\node[noeud,fill=lgreen,label=-90:$u_7$] (u7) at (1*\x,0) {$A$};
\node[noeud,fill=lgreen,label=-90:$u_8$] (u8) at (2*\x,0) {$B$};
\node[noeud,fill=lgreen,label=-90:$u_9$] (u9) at (3*\x,0) {$C$};
\node[noeud,fill=lgreen,label=-90:$u_{10}$] (u10) at (4*\x,0) {$C$};
\node[noeud,fill=lgreen,label=-90:$u_{11}$] (u11) at (5*\x,0) {$C$};
\node[noeud,fill=lgreen,label=-90:$u_{12}$] (u12) at (6*\x,-.5) {$D$};
\node[noeud,fill=lgreen,label=-90:$u_{13}$] (u13) at (7*\x,-.5) {$E$};
\node[noeud,fill=lgreen,label=-90:$u_{14}$] (u14) at (8*\x,-.5) {$D$};
\node[noeud,fill=lgreen,label=-90:$u_{15}$] (u15) at (9*\x,-.5) {$E$};
\node[noeud,fill=lgreen,label=-90:$u_{16}$] (u16) at (10*\x,0) {$C$};

\node (t1) at (4.5*\x,2.5*\y) {$T_1$};

\draw[arc] (u1)--(u2) ;
\draw[arc] (u1)--(u3) ;
\draw[arc] (u1)--(u4) ;
\draw[arc] (u1)--(u5) ;
\draw[arc] (u1)--(u6) ;
\draw[arc] (u3)--(u7) ;
\draw[arc] (u3)--(u8) ;
\draw[arc] (u4)--(u9) ;
\draw[arc] (u4)--(u10) ;
\draw[arc] (u5)--(u11) ;
\draw[arc] (u5)--(u12) ;
\draw[arc] (u5)--(u13) ;
\draw[arc] (u6)--(u14) ;
\draw[arc] (u6)--(u15) ;
\draw[arc] (u6)--(u16) ;

\def\xshift{12*\x}

\draw[ultra thick,rounded corners,gray] (1.5*\x-1,\y-.75) rectangle (4.5*\x+.5,\y+.5);
\draw[ultra thick,rounded corners,gray] (\xshift+1.5*\x-1,\y-.75) rectangle (\xshift+4.5*\x+.5,\y+.5);
\draw[ultra thick,gray] (4.5*\x+.5,\y+.25)to[bend left=15](\xshift+1.5*\x-1,\y+.25);

\draw[ultra thick,rounded corners,gray] (\x-1,-1)--(\x-1,.5)--(10*\x+1,.5)--(10*\x+1,-1)--(10*\x-1,-1)--(10*\x-1,0)--(5*\x+1,0)--(5*\x+1,-1)--cycle;
\draw[ultra thick,rounded corners,gray] (\xshift+\x-1,-1)--(\xshift+\x-1,.5)--(\xshift+10*\x+1,.5)--(\xshift+10*\x+1,-1)--(\xshift+10*\x-1,-1)--(\xshift+10*\x-1,0)--(\xshift+5*\x+1,0)--(\xshift+5*\x+1,-1)--cycle;
\draw[ultra thick,gray] (10*\x+1,0)--(\xshift+\x-1,0);

\draw[ultra thick,rounded corners,gray] (6*\x-1,-1.5) rectangle (9*\x+1,0);
\draw[ultra thick,rounded corners,gray] (\xshift+6*\x-1,-1.5) rectangle (\xshift+9*\x+1,0);
\draw[ultra thick,gray] (7.5*\x,-1.5) to[bend right=5](\xshift+7.5*\x,-1.5);

\draw[ultra thick,rounded corners,gray] (6*\x-1,\y-.75) rectangle (10*\x+.5,\y+.5);
\draw[ultra thick,rounded corners,gray] (\xshift+6*\x-1,\y-.75) rectangle (\xshift+10*\x+.5,\y+.5);
\draw[ultra thick,gray] (10*\x+.5,\y+.25)to[bend left=15](\xshift+6*\x-1,\y+.25);

\node[noeud,fill=lorange,label=15:$v_1$,draw=lred, ultra thick] (v1) at (4.5*\x+\xshift,2*\y) {$\beta$};
\node[noeud,fill=lgreen,label=-15:$v_2$,draw=lred,ultra thick] (v2) at (0+\xshift,1*\y) {$\beta$};
\node[noeud,fill=lred,label=-15:$v_3$] (v3) at (1.5*\x+\xshift,1*\y) {$\alpha$};
\node[noeud,fill=lred,label=-15:$v_4$] (v4) at (3.5*\x+\xshift,1*\y) {$\alpha$};
\node[noeud,fill=lblue,label=-15:$v_5$] (v5) at (6*\x+\xshift,1*\y) {$\alpha$};
\node[noeud,fill=lblue,label=-15:$v_6$] (v6) at (9*\x+\xshift,1*\y) {$\beta$};
\node[noeud,fill=lgreen,label=-90:$v_7$] (v7) at (1*\x+\xshift,0) {$\gamma$};
\node[noeud,fill=lgreen,label=-90:$v_8$] (v8) at (2*\x+\xshift,0) {$\gamma$};
\node[noeud,fill=lgreen,label=-90:$v_9$] (v9) at (3*\x+\xshift,0) {$\beta$};
\node[noeud,fill=lgreen,label=-90:$v_{10}$] (v10) at (4*\x+\xshift,0) {$\alpha$};
\node[noeud,fill=lgreen,label=-90:$v_{11}$] (v11) at (5*\x+\xshift,0) {$\gamma$};
\node[noeud,fill=lgreen,label=-90:$v_{12}$] (v12) at (6*\x+\xshift,-.5) {$\delta$};
\node[noeud,fill=lgreen,label=-90:$v_{13}$] (v13) at (7*\x+\xshift,-.5) {$\eta$};
\node[noeud,fill=lgreen,label=-90:$v_{14}$] (v14) at (8*\x+\xshift,-.5) {$\eta$};
\node[noeud,fill=lgreen,label=-90:$v_{15}$] (v15) at (9*\x+\xshift,-.5) {$\delta$};
\node[noeud,fill=lgreen,label=-90:$v_{16}$] (v16) at (10*\x+\xshift,0) {$\gamma$};

\node (t2) at (4.5*\x+\xshift,2.5*\y) {$T_2$};

\draw[arc] (v1)--(v2) ;
\draw[arc] (v1)--(v3) ;
\draw[arc] (v1)--(v4) ;
\draw[arc] (v1)--(v5) ;
\draw[arc] (v1)--(v6) ;
\draw[arc] (v3)--(v7) ;
\draw[arc] (v3)--(v8) ;
\draw[arc] (v4)--(v9) ;
\draw[arc] (v4)--(v10) ;
\draw[arc] (v5)--(v11) ;
\draw[arc] (v5)--(v12) ;
\draw[arc] (v5)--(v13) ;
\draw[arc] (v6)--(v14) ;
\draw[arc] (v6)--(v15) ;
\draw[arc] (v6)--(v16) ;

\draw[fleche,lred] (u1)--(v1);
\draw[fleche,lred] (u2)to[bend left=25](v2);

\end{tikzpicture}

\caption{Running example: equivalence class (step 3/6). Since $u_2$ and $v_2$ are the only nodes with class \node[lgreen] and depth 1, via Deduction~Rule~\ref{ded:bags}, we map $\isom(u_2)=v_2$ (we already have $f(B)=\beta$). Also, since $u_2$ and $v_2$ have no children and their parents are already mapped together, the process stops. The size of the search space is now $N(\bags)=(2!)^2\times 6!\times 4!=69,120$.}
\label{preproc_equiv_class}
\end{figure}
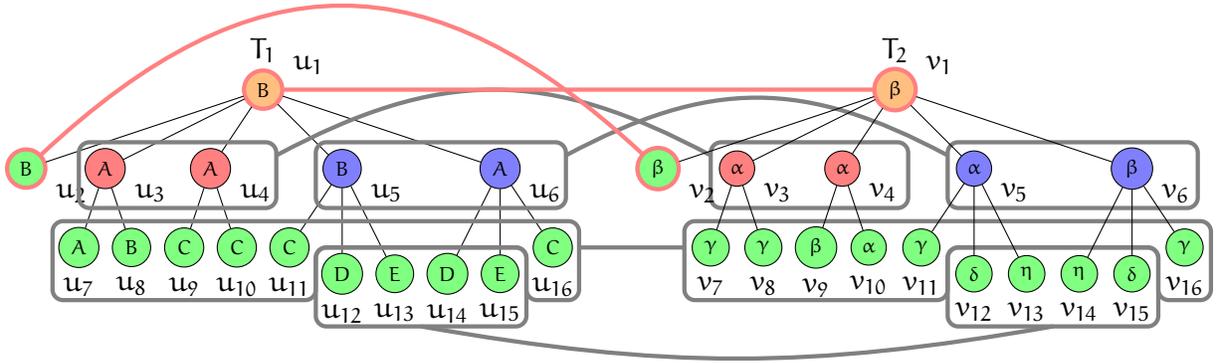

\FloatBarrier

\textit{D. Parents.} If two nodes are mapped together, so are the parents. We inspect each bag $\bfB=(B_1,B_2)$ by increasing depth and replace it by the new bags $\bfB(s)$, $s\in\bags\setminus\{\bfB\}$, such that $B_i(s) = \lbrace u\in B_i : \parent(u)\in s\rbrace$. Visiting the bags by increasing depth ensures that we have subdivided the bags containing the parents before turning to the children. Once all bags have been partitioned, we apply Deduction~Rule~\ref{ded:bags}. See Figure~\ref{preproc_parents} for an example.

\begin{figure}[h!]
\centering
\def\xscale{0.35}
\def\yscale{0.7}
\def\nodescale{0.7}
\begin{tikzpicture}[xscale=\xscale,yscale=\yscale]
\tikzstyle{noeud}=[draw,circle,fill=white,scale=\nodescale*1]
\tikzstyle{attribut}=[scale=\nodescale*1,font=\bf]
\tikzstyle{arc}=[-,>=latex]
\tikzstyle{fleche}=[-,>=latex,red,ultra thick]

\def\x{2}
\def\y{1.5}

\node[noeud,fill=lorange,label=15:$u_1$,draw=lred, ultra thick] (u1) at (4.5*\x,2*\y) {$B$};
\node[noeud,fill=lgreen,label=-15:$u_2$,draw=lred, ultra thick] (u2) at (0,1*\y) {$B$};
\node[noeud,fill=lred,label=-15:$u_3$] (u3) at (1.5*\x,1*\y) {$A$};
\node[noeud,fill=lred,label=-15:$u_4$] (u4) at (3.5*\x,1*\y) {$A$};
\node[noeud,fill=lblue,label=-15:$u_5$] (u5) at (6*\x,1*\y) {$B$};
\node[noeud,fill=lblue,label=-15:$u_6$] (u6) at (9*\x,1*\y) {$A$};
\node[noeud,fill=lgreen,label=-90:$u_7$] (u7) at (1*\x,0) {$A$};
\node[noeud,fill=lgreen,label=-90:$u_8$] (u8) at (2*\x,0) {$B$};
\node[noeud,fill=lgreen,label=-90:$u_9$] (u9) at (3*\x,0) {$C$};
\node[noeud,fill=lgreen,label=-90:$u_{10}$] (u10) at (4*\x,0) {$C$};
\node[noeud,fill=lgreen,label=-90:$u_{11}$] (u11) at (5*\x,0) {$C$};
\node[noeud,fill=lgreen,label=-90:$u_{12}$] (u12) at (6*\x,-.5) {$D$};
\node[noeud,fill=lgreen,label=-90:$u_{13}$] (u13) at (7*\x,-.5) {$E$};
\node[noeud,fill=lgreen,label=-90:$u_{14}$] (u14) at (8*\x,-.5) {$D$};
\node[noeud,fill=lgreen,label=-90:$u_{15}$] (u15) at (9*\x,-.5) {$E$};
\node[noeud,fill=lgreen,label=-90:$u_{16}$] (u16) at (10*\x,0) {$C$};

\node (t1) at (4.5*\x,2.5*\y) {$T_1$};

\draw[arc] (u1)--(u2) ;
\draw[arc] (u1)--(u3) ;
\draw[arc] (u1)--(u4) ;
\draw[arc] (u1)--(u5) ;
\draw[arc] (u1)--(u6) ;
\draw[arc] (u3)--(u7) ;
\draw[arc] (u3)--(u8) ;
\draw[arc] (u4)--(u9) ;
\draw[arc] (u4)--(u10) ;
\draw[arc] (u5)--(u11) ;
\draw[arc] (u5)--(u12) ;
\draw[arc] (u5)--(u13) ;
\draw[arc] (u6)--(u14) ;
\draw[arc] (u6)--(u15) ;
\draw[arc] (u6)--(u16) ;

\def\xshift{12*\x}

\draw[ultra thick,rounded corners,gray] (\x-1,-1) rectangle (4*\x+1,.5);
\draw[ultra thick,rounded corners,gray] (\xshift+\x-1,-1) rectangle (\xshift+4*\x+1,.5);
\draw[ultra thick,gray] (2.5*\x,-1)to[bend right=10](\xshift+2.5\x,-1);

\draw[ultra thick,rounded corners,gray] (5*\x-1,-1)--(5*\x-1,.5)--(10*\x+1,.5)--(10*\x+1,-1)--(10*\x-1,-1)--(10*\x-1,0)--(5*\x+1,0)--(5*\x+1,-1)--cycle;
\draw[ultra thick,rounded corners,gray] (\xshift+5*\x-1,-1)--(\xshift+5*\x-1,.5)--(\xshift+10*\x+1,.5)--(\xshift+10*\x+1,-1)--(\xshift+10*\x-1,-1)--(\xshift+10*\x-1,0)--(\xshift+5*\x+1,0)--(\xshift+5*\x+1,-1)--cycle;
\draw[ultra thick,gray] (10*\x,-1) to[bend right=10](\xshift+5*\x,-1);

\draw[ultra thick,rounded corners,gray] (6*\x-1,-1.5) rectangle (9*\x+1,0);
\draw[ultra thick,rounded corners,gray] (\xshift+6*\x-1,-1.5) rectangle (\xshift+9*\x+1,0);
\draw[ultra thick,gray] (7.5*\x,-1.5) to[bend right=5](\xshift+7.5*\x,-1.5);

\draw[ultra thick,rounded corners,gray] (1.5*\x-1,\y-.75) rectangle (4.5*\x+.5,\y+.5);
\draw[ultra thick,rounded corners,gray] (\xshift+1.5*\x-1,\y-.75) rectangle (\xshift+4.5*\x+.5,\y+.5);
\draw[ultra thick,gray] (4.5*\x+.5,\y+.25)to[bend left=15](\xshift+1.5*\x-1,\y+.25);

\draw[ultra thick,rounded corners,gray] (6*\x-1,\y-.75) rectangle (10*\x+.5,\y+.5);
\draw[ultra thick,rounded corners,gray] (\xshift+6*\x-1,\y-.75) rectangle (\xshift+10*\x+.5,\y+.5);
\draw[ultra thick,gray] (10*\x+.5,\y+.25)to[bend left=15](\xshift+6*\x-1,\y+.25);

\node[noeud,fill=lorange,label=15:$v_1$,draw=lred, ultra thick] (v1) at (4.5*\x+\xshift,2*\y) {$\beta$};
\node[noeud,fill=lgreen,label=-15:$v_2$,draw=lred,ultra thick] (v2) at (0+\xshift,1*\y) {$\beta$};
\node[noeud,fill=lred,label=-15:$v_3$] (v3) at (1.5*\x+\xshift,1*\y) {$\alpha$};
\node[noeud,fill=lred,label=-15:$v_4$] (v4) at (3.5*\x+\xshift,1*\y) {$\alpha$};
\node[noeud,fill=lblue,label=-15:$v_5$] (v5) at (6*\x+\xshift,1*\y) {$\alpha$};
\node[noeud,fill=lblue,label=-15:$v_6$] (v6) at (9*\x+\xshift,1*\y) {$\beta$};
\node[noeud,fill=lgreen,label=-90:$v_7$] (v7) at (1*\x+\xshift,0) {$\gamma$};
\node[noeud,fill=lgreen,label=-90:$v_8$] (v8) at (2*\x+\xshift,0) {$\gamma$};
\node[noeud,fill=lgreen,label=-90:$v_9$] (v9) at (3*\x+\xshift,0) {$\beta$};
\node[noeud,fill=lgreen,label=-90:$v_{10}$] (v10) at (4*\x+\xshift,0) {$\alpha$};
\node[noeud,fill=lgreen,label=-90:$v_{11}$] (v11) at (5*\x+\xshift,0) {$\gamma$};
\node[noeud,fill=lgreen,label=-90:$v_{12}$] (v12) at (6*\x+\xshift,-.5) {$\delta$};
\node[noeud,fill=lgreen,label=-90:$v_{13}$] (v13) at (7*\x+\xshift,-.5) {$\eta$};
\node[noeud,fill=lgreen,label=-90:$v_{14}$] (v14) at (8*\x+\xshift,-.5) {$\eta$};
\node[noeud,fill=lgreen,label=-90:$v_{15}$] (v15) at (9*\x+\xshift,-.5) {$\delta$};
\node[noeud,fill=lgreen,label=-90:$v_{16}$] (v16) at (10*\x+\xshift,0) {$\gamma$};

\node (t2) at (4.5*\x+\xshift,2.5*\y) {$T_2$};

\draw[arc] (v1)--(v2) ;
\draw[arc] (v1)--(v3) ;
\draw[arc] (v1)--(v4) ;
\draw[arc] (v1)--(v5) ;
\draw[arc] (v1)--(v6) ;
\draw[arc] (v3)--(v7) ;
\draw[arc] (v3)--(v8) ;
\draw[arc] (v4)--(v9) ;
\draw[arc] (v4)--(v10) ;
\draw[arc] (v5)--(v11) ;
\draw[arc] (v5)--(v12) ;
\draw[arc] (v5)--(v13) ;
\draw[arc] (v6)--(v14) ;
\draw[arc] (v6)--(v15) ;
\draw[arc] (v6)--(v16) ;

\draw[fleche,lred] (u1)--(v1);
\draw[fleche,lred] (u2)to[bend left=25](v2);
\end{tikzpicture}

\caption{Running example: parents (step 4/6). The only modification happens when looking at nodes of depth 2: the bag $(\{u_7,u_8,u_9,u_{10},u_{11},u_{16}\},\{v_7,v_8,v_9,v_{10},v_{11},v_{16}\})$ is split according to the bags of the parents $(\{u_3,u_4\},\{v_3,v_4\})$ and $(\{u_5,u_6\},\{v_5,v_6\})$. The size of the search space is now $N(\bags)=(2!)^3\times (4!)^2=4,608$.}
\label{preproc_parents}
\end{figure}
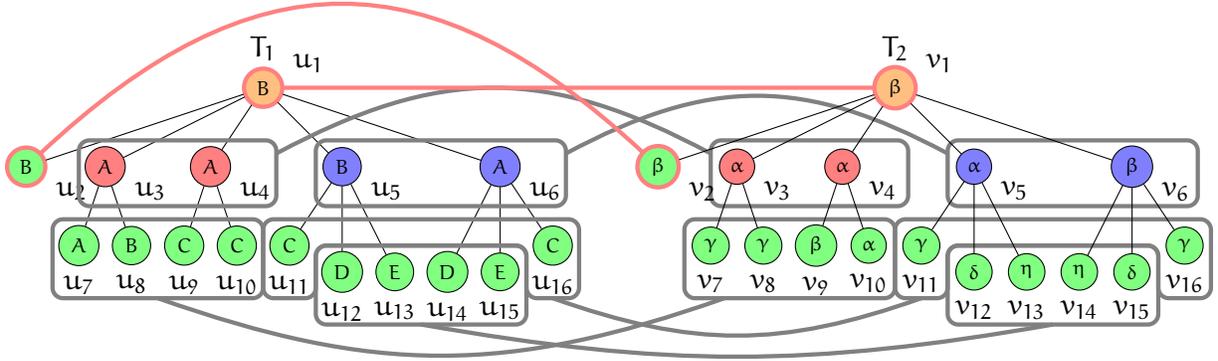

\subsection{Deductions from labels}\label{ss:ded:label}

\paragraph{From bags to collections} The initialisation and the first four steps of the algorithm given in Subsection~\ref{ss:ded:topol} only rely on topological properties. Now we use the labels to process the remaining bags: denoting $B_i(a) = \{u\in B_i\,:\,\overline{u}=a\}$, each bag $\bfB$ is replaced by
$$ \bfC = \left(\left\{B_1(a)~:~a\in\{\overline{u}\,:\,u\in B_1\}\right\}~,~\{B_2(a)~:~a\in\{\overline{u}\,:\,u\in B_2\}\right),$$
which we call a collection.

\begin{definition}\label{def:collection}
A collection $\bfC$ is a couple $(C_1,C_2)$, with $C_i \subset 2^{T_i}$, so that:
\begin{itemize}
\item for any $P\in C_i$, all the nodes of $P$ share the same label, which is denoted $\overline{P}$;
\item for any $n\in\mathbb{N}$, the sets $C_{1,n} = \{P\in C_1\,:\,\#P=n\}$ and $C_{2,n} = \{P\in C_2\,:\,\#P=n\}$ share the same cardinality. In that case, $\bfC_n = (C_{1,n},C_{2,n})$ and $\#\bfC_n = \#C_{i,n}$.
\end{itemize}
$\supp(\bfC)$ denotes the set of integers $n$ such that $\#\bfC_n>0$.
\end{definition}

Collections, which set is denoted $\collections$, are composed of sets of nodes that share the same label and are susceptible to be put together in a bag. When converting bags into collections, if the second condition is not satisfied, then we can conclude $T_1\nsim T_2$. On the other hand, the first condition is necessarily true by construction. It should be noted that, at this stage, the set $\bags$ of bags is now empty. The transition from bags to collections is illustrated on the running example in Figure~\ref{preproc_collec}.

\begin{figure}[h!]
\def\xscale{0.7}
\def\yscale{0.7}
\def\nodescale{0.7}
\centering
\begin{tikzpicture}[xscale=\xscale,yscale=\yscale]
\tikzstyle{noeud}=[draw,circle,fill=white,scale=\nodescale*1]
\tikzstyle{attribut}=[scale=\nodescale*1,font=\bf]
\tikzstyle{arc}=[->-,>=latex]

\def\x{1.5}
\def\y{2}


\node[noeud,fill=lred,label=-15:$u_3$] (u3) at (\x,1*\y) {$A$};
\node[noeud,fill=lred,label=-15:$u_4$] (u4) at (2*\x,1*\y) {$A$};
\draw[rounded corners,ultra thick,gray] (\x-.5,\y-.75) rectangle (2*\x+1.25,\y+.5);

\node[noeud,fill=lred,label=-15:$v_3$] (v3) at (\x,0) {$\alpha$};
\node[noeud,fill=lred,label=-15:$v_4$] (v4) at (2*\x,0) {$\alpha$};
\draw[rounded corners,ultra thick,gray] (\x-.5,-.75) rectangle (2*\x+1.25,.5);

\node at (1.5*\x,\y+1) {$n=2$};
\node at (0,\y) {$C_1$};
\node at (0,0) {$C_2$};
\draw[ultra thick,lightgray,dashed] (-.5,\y/2)--(2*\x+1.5,\y/2);

\draw[ultra thick,lightgray,rounded corners] (-.5,-1) rectangle (2*\x+1.5,\y+1.5);

\node[ultra thick,circle,draw=lightgray,fill=white] at (-.5,\y+1.5) {$\bfC$};


\def\xstep{4*\x}
\def\ystep{0*\y}

\node[noeud,fill=lgreen,label=-15:$u_7$] (u7) at (\x+\xstep,1*\y+\ystep) {$A$};
\draw[rounded corners,ultra thick,gray] (\x-.5+\xstep,\y-.75+\ystep) rectangle (1*\x+1.25+\xstep,\y+.5+\ystep);
\node[noeud,fill=lgreen,label=-15:$u_8$] (u8) at (2.5*\x+\xstep,1*\y+\ystep) {$B$};
\draw[rounded corners,ultra thick,gray] (2.5*\x-.5+\xstep,\y-.75+\ystep) rectangle (2.5*\x+1.25+\xstep,\y+.5+\ystep);

\draw[rounded corners,ultra thick,gray] (4.5*\x-.5+\xstep,\y-.75+\ystep) rectangle (5.5*\x+1.25+\xstep,\y+.5+\ystep);
\node[noeud,fill=lgreen,label=-15:$u_9$] (u9) at (4.5*\x+\xstep,1*\y+\ystep) {$C$};
\node[noeud,fill=lgreen,label=-15:$u_{10}$] (u10) at (5.5*\x+\xstep,1*\y+\ystep) {$C$};

\node[noeud,fill=lgreen,label=-15:$v_9$] (v9) at (\x+\xstep,+\ystep) {$\beta$};
\draw[rounded corners,ultra thick,gray] (\x-.5+\xstep,-.75+\ystep) rectangle (1*\x+1.25+\xstep,+.5+\ystep);
\node[noeud,fill=lgreen,label=-15:$v_{10}$] (v10) at (2.5*\x+\xstep,\ystep) {$\alpha$};
\draw[rounded corners,ultra thick,gray] (2.5*\x-.5+\xstep,-.75+\ystep) rectangle (2.5*\x+1.25+\xstep,+.5+\ystep);

\node[noeud,fill=lgreen,label=-15:$v_7$] (v7) at (4.5*\x+\xstep,+\ystep) {$\gamma$};
\node[noeud,fill=lgreen,label=-15:$v_{8}$] (v8) at (5.5*\x+\xstep,+\ystep) {$\gamma$};
\draw[rounded corners,ultra thick,gray] (4.5*\x-.5+\xstep,-.75+\ystep) rectangle (5.5*\x+1.25+\xstep,+.5+\ystep);

\node at (1.75*\x+\xstep,\y+1+\ystep) {$n=1$};
\node at (5*\x+\xstep,\y+1+\ystep) {$n=2$};
\draw[ultra thick,lightgray,dashed] (3.75*\x+\xstep,\ystep-1)--(3.75*\x+\xstep,\y+1.5+\ystep);
\node at (+\xstep,\y+\ystep) {$C'_1$};
\node at (+\xstep,0+\ystep) {$C'_2$};
\draw[ultra thick,lightgray,dashed] (-.5+\xstep,\y/2+\ystep)--(5.5*\x+1.5+\xstep,\y/2+\ystep);

\draw[ultra thick,lightgray,rounded corners] (-.5+\xstep,-1+\ystep) rectangle (5.5*\x+1.5+\xstep,\y+1.5+\ystep);

\node[ultra thick,circle,draw=lightgray,fill=white] at (-.5+\xstep,\y+1.5+\ystep) {$\mathbf{C'}$};


\tikzstyle{fleche}=[-,>=latex, ultra thick]

\def\xstep{4.5*\x}

\node at (7.75*\x+\xstep,1.5*\y+.75) {$f$};
\node[noeud] (a) at (7*\x+\xstep,1.5*\y) {$A$};
\node[noeud,draw=lred,ultra thick] (b) at (7*\x+\xstep,\y) {$B$};
\node[noeud] (c) at (7*\x+\xstep,\y/2) {$C$};
\node[noeud] (d) at (7*\x+\xstep,0) {$D$};
\node[noeud] (e) at (7*\x+\xstep,-\y/2) {$E$};
\node[noeud] (alpha) at (8.5*\x+\xstep,1.5*\y) {$\alpha$};
\node[noeud,draw=lred,ultra thick] (beta) at (8.5*\x+\xstep,\y) {$\beta$};
\node[noeud] (gamma) at (8.5*\x+\xstep,\y/2) {$\gamma$};
\node[noeud] (delta) at (8.5*\x+\xstep,0) {$\delta$};
\node[noeud] (eta) at (8.5*\x+\xstep,-\y/2) {$\eta$};
\draw[fleche,lred] (b)--(beta);


\def\xstep{0*\x}
\def\ystep{-2.75*\y}

\draw[rounded corners,ultra thick,gray] (\x-.5+\xstep,\y-.75+\ystep) rectangle (2*\x+1.25+\xstep,\y+.5+\ystep);
\node[noeud,fill=lgreen,label=-15:$u_{11}$] (u11) at (\x+\xstep,1*\y+\ystep) {$C$};
\node[noeud,fill=lgreen,label=-15:$u_{16}$] (u14) at (2*\x+\xstep,1*\y+\ystep) {$C$};

\node[noeud,fill=lgreen,label=-15:$v_{11}$] (v12) at (1*\x+\xstep,0*\y+\ystep) {$\gamma$};
\node[noeud,fill=lgreen,label=-15:$v_{16}$] (v16) at (2*\x+\xstep,0*\y+\ystep) {$\gamma$};
\draw[rounded corners,ultra thick,gray] (\x-.5+\xstep,-.75+\ystep) rectangle (2*\x+1.25+\xstep,+.5+\ystep);

\node at (1.5*\x+\xstep,\y+1+\ystep) {$n=2$};
\node at (+\xstep,\y+\ystep) {$C''_1$};
\node at (+\xstep,0+\ystep) {$C''_2$};
\draw[ultra thick,lightgray,dashed] (-.5+\xstep,\y/2+\ystep)--(2*\x+1.5+\xstep,\y/2+\ystep);

\draw[ultra thick,lightgray,rounded corners] (-.5+\xstep,-1+\ystep) rectangle (2*\x+1.5+\xstep,\y+1.5+\ystep);

\node[ultra thick,circle,draw=lightgray,fill=white] at (-.5+\xstep,\y+1.5+\ystep) {$\mathbf{C''}$};


\def\xstep{4*\x}
\def\ystep{-2.75*\y}

\draw[rounded corners,ultra thick,gray] (1*\x-.5+\xstep,\y-.75+\ystep) rectangle (2*\x+1.25+\xstep,\y+.5+\ystep);
\node[noeud,fill=lgreen,label=-15:$u_{12}$] (u12) at (1*\x+\xstep,1*\y+\ystep) {$D$};
\node[noeud,fill=lgreen,label=-15:$u_{14}$] (u15) at (2*\x+\xstep,1*\y+\ystep) {$D$};

\draw[rounded corners,ultra thick,gray] (3.5*\x-.5+\xstep,\y-.75+\ystep) rectangle (4.5*\x+1.25+\xstep,\y+.5+\ystep);
\node[noeud,fill=lgreen,label=-15:$u_{13}$] (u13) at (3.5*\x+\xstep,1*\y+\ystep) {$E$};
\node[noeud,fill=lgreen,label=-15:$u_{15}$] (u16) at (4.5*\x+\xstep,1*\y+\ystep) {$E$};

\node[noeud,fill=lgreen,label=-15:$v_{12}$] (v11) at (1*\x+\xstep,0*\y+\ystep) {$\delta$};
\node[noeud,fill=lgreen,label=-15:$v_{15}$] (v15) at (2*\x+\xstep,0*\y+\ystep) {$\delta$};
\draw[rounded corners,ultra thick,gray] (1*\x-.5+\xstep,-.75+\ystep) rectangle (2*\x+1.25+\xstep,+.5+\ystep);

\node[noeud,fill=lgreen,label=-15:$v_{13}$] (v13) at (3.5*\x+\xstep,0*\y+\ystep) {$\eta$};
\node[noeud,fill=lgreen,label=-15:$v_{14}$] (v14) at (4.5*\x+\xstep,0*\y+\ystep) {$\eta$};
\draw[rounded corners,ultra thick,gray] (3.5*\x-.5+\xstep,-.75+\ystep) rectangle (4.5*\x+1.25+\xstep,+.5+\ystep);

\node at (3*\x+\xstep,\y+1+\ystep) {$n=2$};
\node at (+\xstep,\y+\ystep) {$C'''_1$};
\node at (+\xstep,0+\ystep) {$C'''_2$};
\draw[ultra thick,lightgray,dashed] (-.5+\xstep,\y/2+\ystep)--(4.5*\x+1.5+\xstep,\y/2+\ystep);

\draw[ultra thick,lightgray,rounded corners] (-.5+\xstep,-1+\ystep) rectangle (4.5*\x+1.5+\xstep,\y+1.5+\ystep);

\node[ultra thick,circle,draw=lightgray,fill=white] at (-.5+\xstep,\y+1.5+\ystep) {$\mathbf{C'''}$};


\def\xstep{10.5*\x}
\def\ystep{-2.75*\y}

\node[noeud,fill=lblue,label=-15:$u_5$] (u5) at (\x+\xstep,1*\y+\ystep) {$B$};
\draw[rounded corners,ultra thick,gray] (\x-.5+\xstep,\y-.75+\ystep) rectangle (1*\x+1.25+\xstep,\y+.5+\ystep);
\node[noeud,fill=lblue,label=-15:$u_6$] (u6) at (2.5*\x+\xstep,1*\y+\ystep) {$A$};
\draw[rounded corners,ultra thick,gray] (2.5*\x-.5+\xstep,\y-.75+\ystep) rectangle (2.5*\x+1.25+\xstep,\y+.5+\ystep);

\node[noeud,fill=lblue,label=-15:$v_5$] (v5) at (\x+\xstep,0+\ystep) {$\alpha$};
\draw[rounded corners,ultra thick,gray] (\x-.5+\xstep,-.75+\ystep) rectangle (1*\x+1.25+\xstep,+.5+\ystep);
\node[noeud,fill=lblue,label=-15:$v_6$] (v6) at (2.5*\x+\xstep,0+\ystep) {$\beta$};
\draw[rounded corners,ultra thick,gray] (2.5*\x-.5+\xstep,-.75+\ystep) rectangle (2.5*\x+1.25+\xstep,+.5+\ystep);

\node at (1.75*\x+\xstep,\y+1+\ystep) {$n=1$};
\node at (+\xstep,\y+\ystep) {$C''''_1$};
\node at (+\xstep,+\ystep) {$C''''_2$};
\draw[ultra thick,lightgray,dashed] (-.5+\xstep,\y/2+\ystep)--(2.5*\x+1.5+\xstep,\y/2+\ystep);

\draw[ultra thick,lightgray,rounded corners] (+\xstep-.5,-1+\ystep) rectangle (2.5*\x+1.5+\xstep,\y+1.5+\ystep);

\node[ultra thick,circle,draw=lightgray,fill=white] at (-.5+\xstep,\y+1.5+\ystep) {$\mathbf{C''''}$};
\end{tikzpicture}
\caption{Running example: from bags to collections (step 5/6). All the bags visible on Figure~\ref{preproc_parents} have been converted into collections (gray boxes). The size of the search space taking collections into account, given in \eqref{size}, is now $N(\bags,\collections) = 2!^8 = 256$.}
\label{preproc_collec}
\end{figure}
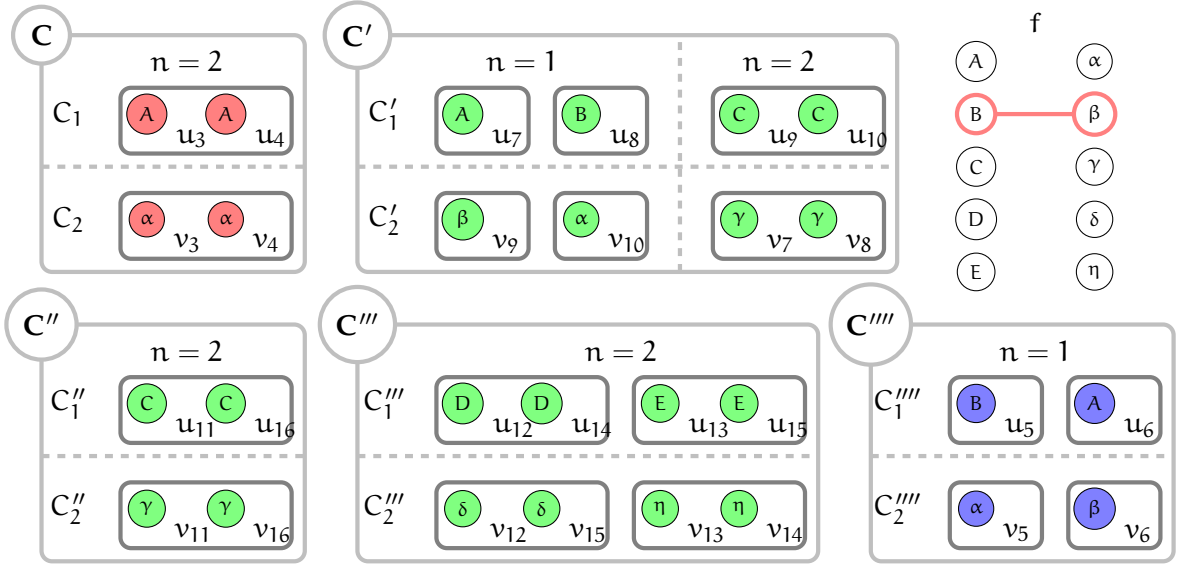

\paragraph{Deduction rules on collections}

There are three deduction rules for collections: they all deal with labels, but one of them uses mapped labels to improve the partition of nodes, one maps labels from considerations on cardinalities, while the last one both maps labels and improves the partition of nodes.

Whenever we conclude that two subsets $P$ and $Q$ of a collection should form a bag $(P,Q)$ (because their nodes should be mapped together), we know that the related labels $\overline{P}$ and $\overline{Q}$ should also be mapped together. Therefore, labels whose mapping is already known must be separated from labels that are not yet mapped. More precisely, with $C_i(a) = \lbrace P\in C_i : \overline{P}=a\rbrace$, we introduce the following deduction rule.

\begin{deduction}\label{ded:coll_label}
As long as there exist collections $\bfC$ and labels $a$ and $b$ such that $f(a)=b$, if $C_1(a)\neq \emptyset$ and $C_1\setminus C_1(a)\neq \emptyset$, replace $\bfC$ by the collections $\left(C_1(a), C_2(b)\right)$ and $\left(C_1\setminus C_1(a), C_2\setminus C_2(b)\right)$. 
\end{deduction}

The second deduction rule is complementary to the previous one. If a collection contains subsets of same cardinality that all share the same label in $T_1$ and $T_2$, then those labels must be mapped together.

\begin{deduction}\label{ded:coll_label_new}
As long as there exist collections $\bfC$, $n\in\supp(\bfC)$, and labels $a$ and $b$ not already mapped in $f$ such that, for any $P\in C_{1,n}, \overline{P}=a$ and for any $Q\in C_{2,n}, \overline{Q}=b$, call \textsc{ExtBij}$(f,a,b)$.
\end{deduction}

The last deduction rule aims at building bags: if $\#\bfC_n=1$, then the two corresponding subsets must form a bag, and their labels must be mapped together.

\begin{deduction}\label{ded:coll_cardinal}
As long as there exist collections $\bfC$ and $n\in\supp(\bfC)$ such that $C_{1,n}=\lbrace P\rbrace$ and $C_{2,n} = \lbrace Q\rbrace$, call \textsc{ExtBij}$(f,\overline{P},\overline{Q})$, add a new bag $(P,Q)$ to $\bags$ and remove $P$ and $Q$ from $\bfC$.
\end{deduction}

Starting from the organization of nodes in collections defined at the previous paragraph, Deduction Rules~\ref{ded:bags}, \ref{ded:coll_label}, \ref{ded:coll_label_new} and \ref{ded:coll_cardinal} are recursively applied: they are intertwined and each deduction made imposes to check again all the rules. The process stops when no more deductions are made, except if the conditions are fulfilled but the deduction rule can not be applied, which shows that $T_1\nsim T_2$. Deduction rules on collections are illustrated on the running example in Figure~\ref{preproc_collec_deduction}.

\begin{figure}[ht!]
\centering
\def\xscale{0.35}
\def\yscale{0.7}
\def\nodescale{0.7}
\begin{tikzpicture}[xscale=\xscale,yscale=\yscale]
\tikzstyle{noeud}=[draw,circle,fill=white,scale=\nodescale*1]
\tikzstyle{attribut}=[scale=\nodescale*1,font=\bf]
\tikzstyle{arc}=[-,>=latex]
\tikzstyle{fleche}=[-,>=latex,red,ultra thick]

\def\x{2}
\def\y{1.5}

\node[noeud,label=15:$u_1$,draw=lred,ultra thick] (u1) at (4.5*\x,2*\y) {$B$};
\node[noeud,label=-15:$u_2$,draw=lred,ultra thick] (u2) at (0,1*\y) {$B$};
\node[noeud,label=-15:$u_3$,draw=lgreen,ultra thick] (u3) at (1.5*\x,1*\y) {$A$};
\node[noeud,label=-15:$u_4$,draw=lgreen, ultra thick] (u4) at (3.5*\x,1*\y) {$A$};
\node[noeud,label=-15:$u_5$,draw=lred, ultra thick] (u5) at (6*\x,1*\y) {$B$};
\node[noeud,label=-15:$u_6$,draw=lgreen, ultra thick] (u6) at (9*\x,1*\y) {$A$};
\node[noeud,label=-90:$u_7$,draw=lgreen, ultra thick] (u7) at (1*\x,0) {$A$};
\node[noeud,label=-90:$u_8$,draw=lred,ultra thick] (u8) at (2*\x,0) {$B$};
\node[noeud,label=-90:$u_9$] (u9) at (3*\x,0) {$C$};
\node[noeud,label=-90:$u_{10}$] (u10) at (4*\x,0) {$C$};
\node[noeud,label=-90:$u_{11}$,draw=lblue,ultra thick] (u11) at (5*\x,0) {$C$};
\node[noeud,label=-90:$u_{12}$] (u12) at (6*\x,0) {$D$};
\node[noeud,label=-90:$u_{13}$] (u13) at (7*\x,0) {$E$};
\node[noeud,label=-90:$u_{14}$] (u14) at (8*\x,0) {$D$};
\node[noeud,label=-90:$u_{15}$] (u15) at (9*\x,0) {$E$};
\node[noeud,label=-90:$u_{16}$,draw=lblue, ultra thick] (u16) at (10*\x,0) {$C$};

\node (t1) at (4.5*\x,2.5*\y) {$T_1$};

\draw[arc] (u1)--(u2) ;
\draw[arc] (u1)--(u3) ;
\draw[arc] (u1)--(u4) ;
\draw[arc] (u1)--(u5) ;
\draw[arc] (u1)--(u6) ;
\draw[arc] (u3)--(u7) ;
\draw[arc] (u3)--(u8) ;
\draw[arc] (u4)--(u9) ;
\draw[arc] (u4)--(u10) ;
\draw[arc] (u5)--(u11) ;
\draw[arc] (u5)--(u12) ;
\draw[arc] (u5)--(u13) ;
\draw[arc] (u6)--(u14) ;
\draw[arc] (u6)--(u15) ;
\draw[arc] (u6)--(u16) ;

\def\xshift{12*\x}

\draw[ultra thick,rounded corners,gray] (3*\x-1,-1) rectangle (4*\x+1,.5);
\draw[ultra thick,rounded corners,gray] (\xshift+1*\x-1,-1) rectangle (\xshift+2*\x+1,.5);
\draw[ultra thick,gray] (3.5*\x,-1)to[bend right=5](\xshift+1.5\x,-1);

\node[noeud,label=15:$v_1$,draw=lred, ultra thick] (v1) at (4.5*\x+\xshift,2*\y) {$\beta$};
\node[noeud,label=-15:$v_2$,draw=lred,ultra thick] (v2) at (0+\xshift,1*\y) {$\beta$};
\node[noeud,label=-15:$v_3$,draw=lgreen, ultra thick] (v3) at (1.5*\x+\xshift,1*\y) {$\alpha$};
\node[noeud,label=-15:$v_4$,draw=lgreen,ultra thick] (v4) at (3.5*\x+\xshift,1*\y) {$\alpha$};
\node[noeud,label=-15:$v_5$,draw=lgreen, ultra thick] (v5) at (6*\x+\xshift,1*\y) {$\alpha$};
\node[noeud,label=-15:$v_6$,draw=lred, ultra thick] (v6) at (9*\x+\xshift,1*\y) {$\beta$};
\node[noeud,label=-90:$v_7$] (v7) at (1*\x+\xshift,0) {$\gamma$};
\node[noeud,label=-90:$v_8$] (v8) at (2*\x+\xshift,0) {$\gamma$};
\node[noeud,label=-90:$v_9$,draw=lred, ultra thick] (v9) at (3*\x+\xshift,0) {$\beta$};
\node[noeud,label=-90:$v_{10}$,draw=lgreen,ultra thick] (v10) at (4*\x+\xshift,0) {$\alpha$};
\node[noeud,label=-90:$v_{11}$,draw=lblue, ultra thick] (v11) at (5*\x+\xshift,0) {$\gamma$};
\node[noeud,label=-90:$v_{12}$] (v12) at (6*\x+\xshift,0) {$\delta$};
\node[noeud,label=-90:$v_{13}$] (v13) at (7*\x+\xshift,0) {$\eta$};
\node[noeud,label=-90:$v_{14}$] (v14) at (8*\x+\xshift,0) {$\eta$};
\node[noeud,label=-90:$v_{15}$] (v15) at (9*\x+\xshift,0) {$\delta$};
\node[noeud,label=-90:$v_{16}$,draw=lblue, ultra thick] (v16) at (10*\x+\xshift,0) {$\gamma$};

\node (t2) at (4.5*\x+\xshift,2.5*\y) {$T_2$};

\draw[arc] (v1)--(v2) ;
\draw[arc] (v1)--(v3) ;
\draw[arc] (v1)--(v4) ;
\draw[arc] (v1)--(v5) ;
\draw[arc] (v1)--(v6) ;
\draw[arc] (v3)--(v7) ;
\draw[arc] (v3)--(v8) ;
\draw[arc] (v4)--(v9) ;
\draw[arc] (v4)--(v10) ;
\draw[arc] (v5)--(v11) ;
\draw[arc] (v5)--(v12) ;
\draw[arc] (v5)--(v13) ;
\draw[arc] (v6)--(v14) ;
\draw[arc] (v6)--(v15) ;
\draw[arc] (v6)--(v16) ;

\draw[fleche,lred] (u1)--(v1);
\draw[fleche,lred] (u2)to[bend left=25](v2);
\draw[fleche,lred] (u8)to[bend right=15](v9);
\draw[fleche,lgreen] (u7)to[bend right=15](v10);
\draw[fleche,lgreen] (u3)to[bend left=10] (v4);
\draw[fleche,lgreen] (u4)to[bend left=10] (v3);
\draw[fleche,lred] (u5)to[bend left=5] (v6);
\draw[fleche,lgreen] (u6)to[bend left=5] (v5);
\draw[fleche,lblue] (u11)to[bend right=15](v16);
\draw[fleche,lblue] (u16)to[bend right=20](v11);

\end{tikzpicture}

\vspace{-2\baselineskip}

\def\xscale{0.7}
\def\yscale{0.7}
\def\nodescale{0.7}
\begin{tikzpicture}[xscale=\xscale,yscale=\yscale]
\tikzstyle{noeud}=[draw,circle,fill=white,scale=\nodescale*1]
\tikzstyle{attribut}=[scale=\nodescale*1,font=\bf]
\tikzstyle{arc}=[->-,>=latex]
\tikzstyle{fleche}=[-,>=latex,red,ultra thick]

\def\x{1.5}
\def\y{2}

\def\xstep{0}
\def\ystep{0}

\draw[rounded corners,ultra thick,gray] (\x-.5+\xstep,\y-.75+\ystep) rectangle (1*\x+1.25+\xstep,\y+.5+\ystep);
\node[noeud,label=-15:$u_{12}$] (u5) at (\x+\xstep,1*\y+\ystep) {$D$};

\draw[rounded corners,ultra thick,gray] (2.5*\x-.5+\xstep,\y-.75+\ystep) rectangle (2.5*\x+1.25+\xstep,\y+.5+\ystep);
\node[noeud,label=-15:$u_{13}$] (u6) at (2.5*\x+\xstep,1*\y+\ystep) {$E$};

\node[noeud,label=-15:$v_{14}$] (v5) at (\x+\xstep,0+\ystep) {$\eta$};
\draw[rounded corners,ultra thick,gray] (\x-.5+\xstep,-.75+\ystep) rectangle (1*\x+1.25+\xstep,+.5+\ystep);
\node[noeud,label=-15:$v_{15}$] (v6) at (2.5*\x+\xstep,0+\ystep) {$\delta$};
\draw[rounded corners,ultra thick,gray] (2.5*\x-.5+\xstep,-.75+\ystep) rectangle (2.5*\x+1.25+\xstep,+.5+\ystep);

\node at (1.75*\x+\xstep,\y+1+\ystep) {$n=1$};
\node at (+\xstep,\y+\ystep) {$C^\ast_1$};
\node at (+\xstep,+\ystep) {$C^\ast_2$};
\draw[ultra thick,lightgray,dashed] (-.5+\xstep,\y/2+\ystep)--(2.5*\x+1.5+\xstep,\y/2+\ystep);

\draw[ultra thick,lightgray,rounded corners] (+\xstep-.5,-1+\ystep) rectangle (2.5*\x+1.5+\xstep,\y+1.5+\ystep);

\node[ultra thick,circle,draw=lightgray,fill=white] at (-.5+\xstep,\y+1.5+\ystep) {$\mathbf{C^\ast}$};

\def\xstep{4.5*\x}
\def\ystep{0}

\draw[rounded corners,ultra thick,gray] (\x-.5+\xstep,\y-.75+\ystep) rectangle (1*\x+1.25+\xstep,\y+.5+\ystep);
\node[noeud,label=-15:$u_{14}$] (u5) at (\x+\xstep,1*\y+\ystep) {$D$};
\draw[rounded corners,ultra thick,gray] (2.5*\x-.5+\xstep,\y-.75+\ystep) rectangle (2.5*\x+1.25+\xstep,\y+.5+\ystep);
\node[noeud,label=-15:$u_{15}$] (u6) at (2.5*\x+\xstep,1*\y+\ystep) {$E$};

\node[noeud,label=-15:$v_{12}$] (v5) at (\x+\xstep,0+\ystep) {$\delta$};
\draw[rounded corners,ultra thick,gray] (\x-.5+\xstep,-.75+\ystep) rectangle (1*\x+1.25+\xstep,+.5+\ystep);
\node[noeud,label=-15:$v_{13}$] (v6) at (2.5*\x+\xstep,0+\ystep) {$\eta$};
\draw[rounded corners,ultra thick,gray] (2.5*\x-.5+\xstep,-.75+\ystep) rectangle (2.5*\x+1.25+\xstep,+.5+\ystep);

\node at (1.75*\x+\xstep,\y+1+\ystep) {$n=1$};
\node at (+\xstep,\y+\ystep) {$C^{\ast\ast}_1$};
\node at (+\xstep,+\ystep) {$C^{\ast\ast}_2$};
\draw[ultra thick,lightgray,dashed] (-.5+\xstep,\y/2+\ystep)--(2.5*\x+1.5+\xstep,\y/2+\ystep);

\draw[ultra thick,lightgray,rounded corners] (+\xstep-.5,-1+\ystep) rectangle (2.5*\x+1.5+\xstep,\y+1.5+\ystep);

\node[ultra thick,circle,draw=lightgray,fill=white] at (-.5+\xstep,\y+1.5+\ystep) {$\mathbf{C^{\ast\ast}}$};

\def\xstep{8*\x}
\def\ystep{0}

\node at (\xstep+3*\x,\ystep+\y+1.5) {$f$};

\node[noeud,draw=lgreen, ultra thick] (a) at (\xstep+\x,\ystep+\y) {$A$};
\node[noeud,draw=lred,ultra thick] (b) at (\xstep+2*\x,\ystep+\y) {$B$};
\node[noeud,draw=lblue,ultra thick] (c) at (\xstep+3*\x,\ystep+\y) {$C$};
\node[noeud] (d) at (\xstep+4*\x,\ystep+\y) {$D$};
\node[noeud] (e) at (\xstep+5*\x,\ystep+\y) {$E$};

\node[noeud,draw=lgreen, ultra thick] (a1) at (\xstep+\x,\ystep) {$\alpha$};
\node[noeud,draw=lred,ultra thick] (b1) at (\xstep+2*\x,\ystep) {$\beta$};
\node[noeud, draw=lblue, ultra thick] (c1) at (\xstep+3*\x,\ystep) {$\gamma$};
\node[noeud] (d1) at (\xstep+4*\x,\ystep) {$\delta$};
\node[noeud] (e1) at (\xstep+5*\x,\ystep) {$\eta$};

\draw[fleche,lgreen] (a)--(a1);
\draw[fleche,lred] (b)--(b1);
\draw[fleche,lblue] (c)--(c1);

\end{tikzpicture}
\caption{Running example: deductions on collections (step 6/6). Starting from the collections of Figure~\ref{preproc_collec}, Deduction Rules~\ref{ded:bags}, \ref{ded:coll_label}, \ref{ded:coll_label_new} and \ref{ded:coll_cardinal} are recursively applied. Since $f(B)=\beta$, Deduction Rule~\ref{ded:coll_label} moves nodes $(u_8,v_9)$ and $(u_5,v_6)$ into new collections; then Deduction Rule~\ref{ded:coll_label_new} allows to deduce $f(A)=\alpha$ and $f(C)=\gamma$ (for instance with collections $\bfC$ and $\mathbf{C''}$). Deduction Rule~\ref{ded:coll_cardinal} puts every node in bags except the ones in $\mathbf{C'''}$. Most of these bags are of size $1$ and therefore allow to deduce mappings via Deduction Rule~\ref{ded:bags}. The mapping $\isom(u_5)=v_6$ provokes, via \textsc{SplitChildren}, collection $\mathbf{C'''}$ to be divided into collections $\mathbf{C^\ast}$ and $\mathbf{C^{\ast\ast}}$, for which no further deduction is possible. The size of the search space after all deductions is $N(\bags,\collections)=2!^3=8$.}
\label{preproc_collec_deduction}
\end{figure}
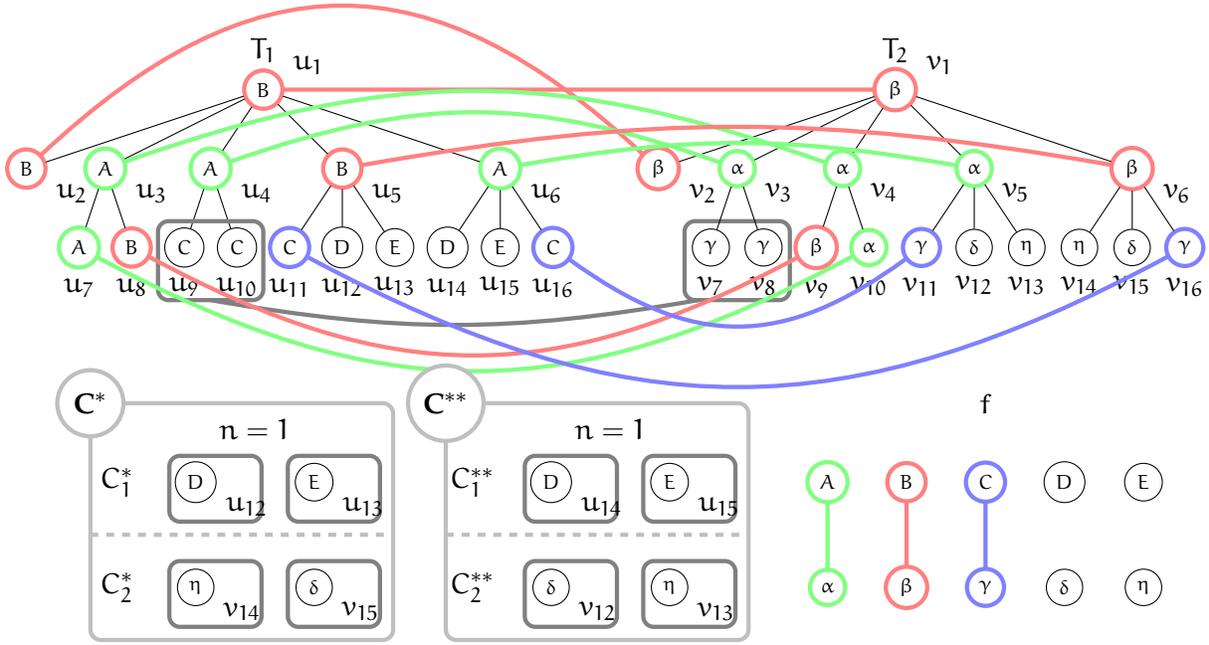

\paragraph{Redefinition of mapping nodes} Each of the deduction rules above changes the organization of the unmapped nodes of $T_1$ and $T_2$ in bags and collections. Consequently, \textsc{MapNodes} through \textsc{SplitChildren} must take into account the existence of collections. To this end, the pseudocode of Algorithm~\ref{split_children_bis} is to be added after Algorithm~\ref{algo:split_children} to redefine \textsc{SplitChildren}. An example is presented in Figure~\ref{fig:collec_exple}.

\begin{figure}[ht!]
\begin{minipage}[c]{0.51\textwidth}

\begin{algorithm}[H]
\caption{\textsc{SplitChildrenSequel}}\label{split_children_bis}
\setcounter{AlgoLine}{8 }
\SetKw{KwAnd}{and}
\For{$\bfC\in\collections$}{
\For{$n\in\supp(\bfC)$}{
Init. $\bfC'$ and $\bfC''$ as empty collections\\
Init. $S_1$ and $S'_1$ as empty sets\\
\For{$P\in C_{1,n}$}{
$P_u\gets P\cap \mathcal{C}_u$\\
\If{$P_u\neq\emptyset$ \KwAnd $P\setminus P_u\neq\emptyset$}{
Add $P_u$ to $C'_{1,\#P_u}$\\
Add $P\setminus P_u$ to $C''_{1,\#(P\setminus P_u)}$\\
$S_1\gets S_1\cup P_u$\\
$S'_1\gets S'_1\cup (P\setminus P_u)$\\
Delete $P$ from $C_{1,n}$\\
}}
Init. $S_2$ and $S'_2$ as empty sets\\
\For{$Q\in C_{2,n}$}{
$Q_v\gets Q\cap \mathcal{C}_v$\\
\If{$Q_v\neq\emptyset$ \KwAnd $Q\setminus Q_v\neq\emptyset$}{
Add $Q_v$ to $C'_{2,\#Q_v}$\\
Add $Q\setminus Q_v$ to $C''_{2,\#(Q\setminus Q_v)}$\\
$S_2\gets S_2\cup Q_v$\\
$S'_2\gets S'_2\cup (Q\setminus Q_v)$\\
Delete $Q$ from $C_{2,n}$\\
}}
Add $\bfC'$ and $\bfC''$ to $\collections$\\
\textsc{SplitChildren}$(S_1,S_2)$\\
\textsc{SplitChildren}$(S'_1,S'_2)$
}
\If{$\supp(\bfC)=\emptyset$}{Delete $\bfC$ from $\collections$}
}
\end{algorithm}
\end{minipage}\hfill
\begin{minipage}[c]{0.44\textwidth}
\def\xscale{0.65}
\def\yscale{0.7}
\def\nodescale{0.7}
\centering
\begin{tikzpicture}[xscale=\xscale,yscale=\yscale]
\tikzstyle{noeud}=[draw,circle,fill=white,scale=\nodescale*1]
\tikzstyle{attribut}=[scale=\nodescale*1,font=\bf]
\tikzstyle{arc}=[->-,>=latex]

\def\x{0.6}
\def\y{2}


\node[noeud,fill=lred] (u1) at (\x,1*\y) {};
\node[noeud,fill=lred] (u2) at (2*\x,1*\y) {};
\node[noeud,fill=lred] (u3) at (3*\x,1*\y) {};
\node[noeud] (u4) at (4*\x,1*\y) {};
\node[noeud] (u5) at (5*\x,1*\y) {};
\draw[rounded corners,ultra thick,gray] (\x-.25,\y-.25) rectangle (5*\x+.25,\y+.25);

\node[noeud,fill=lred] (u1) at (7*\x,1*\y) {};
\node[noeud,fill=lred] (u2) at (8*\x,1*\y) {};
\node[noeud] (u3) at (9*\x,1*\y) {};
\node[noeud] (u4) at (10*\x,1*\y) {};
\node[noeud] (u5) at (11*\x,1*\y) {};
\draw[rounded corners,ultra thick,gray] (7*\x-.25,\y-.25) rectangle (11*\x+.25,\y+.25);

\node[noeud,fill=lred] (v1) at (\x,0) {};
\node[noeud,fill=lred] (v2) at (2*\x,0) {};
\node[noeud,fill=lred] (v3) at (3*\x,0) {};
\node[noeud] (v4) at (4*\x,0) {};
\node[noeud] (v5) at (5*\x,0) {};
\draw[rounded corners,ultra thick,gray] (\x-.25,-.25) rectangle (5*\x+.25,.25);

\node[noeud,fill=lred] (v1) at (7*\x,0) {};
\node[noeud,fill=lred] (v2) at (8*\x,0) {};
\node[noeud] (v3) at (9*\x,0) {};
\node[noeud] (v4) at (10*\x,0) {};
\node[noeud] (v5) at (11*\x,0) {};
\draw[rounded corners,ultra thick,gray] (7*\x-.25,-.25) rectangle (11*\x+.25,.25);

\node at (6*\x,\y+1) {$n=5$};
\node at (0,\y) {$C_1$};
\node at (0,0) {$C_2$};
\draw[ultra thick,lightgray,dashed] (-.5,\y/2)--(11*\x+.5,\y/2);

\draw[ultra thick,lightgray,rounded corners] (-.5,-1) rectangle (11*\x+.5,\y+1.5);

\node[ultra thick,circle,draw=lightgray,fill=white] at (-.5,\y+1.5) {$\bfC$};

\def\yshift{-7}

\draw[->,ultra thick,black] (6*\x,-1-.5)to (6*\x,\y+1.5+\yshift+.5);


\node[noeud,fill=lred] (u1) at (4*\x,1*\y+\yshift) {};
\node[noeud,fill=lred] (u2) at (5*\x,1*\y+\yshift) {};
\node[noeud,fill=lred] (u3) at (6*\x,1*\y+\yshift) {};
\draw[rounded corners,ultra thick,gray] (4*\x-.25,\y-.25+\yshift) rectangle (6*\x+.25,\y+.25+\yshift);

\node[noeud,fill=lred] (v1) at (4*\x,0+\yshift) {};
\node[noeud,fill=lred] (v2) at (5*\x,0+\yshift) {};
\node[noeud,fill=lred] (v3) at (6*\x,0+\yshift) {};
\draw[rounded corners,ultra thick,gray] (4*\x-.25,-.25+\yshift) rectangle (6*\x+.25,.25+\yshift);

\node[noeud,fill=lred] (u1) at (1*\x,1*\y+\yshift) {};
\node[noeud,fill=lred] (u2) at (2*\x,1*\y+\yshift) {};
\draw[rounded corners,ultra thick,gray] (1*\x-.25,\y-.25+\yshift) rectangle (2*\x+.25,\y+.25+\yshift);

\node[noeud,fill=lred] (v1) at (1*\x,0+\yshift) {};
\node[noeud,fill=lred] (v2) at (2*\x,0+\yshift) {};
\draw[rounded corners,ultra thick,gray] (1*\x-.25,-.25+\yshift) rectangle (2*\x+.25,.25+\yshift);

\node at (1.5*\x,\y+1+\yshift) {$n=2$};
\node at (5*\x,\y+1+\yshift) {$n=3$};
\node at (0,\y+\yshift) {$C'_1$};
\node at (0,0+\yshift) {$C'_2$};

\draw[ultra thick,lightgray,dashed] (-.5,\y/2+\yshift)--(6*\x+.5,\y/2+\yshift);

\draw[ultra thick,lightgray,dashed] (3*\x,-1+\yshift)--(3*\x,\y+1.5+\yshift);

\draw[ultra thick,lightgray,rounded corners] (-.5,-1+\yshift) rectangle (6*\x+.5,\y+1.5+\yshift);

\node[ultra thick,circle,draw=lightgray,fill=white] at (-.5,\y+1.5+\yshift) {$\mathbf{C'}$};

\def\xshift{5.5}


\node[noeud] (u1) at (4*\x+\xshift,1*\y+\yshift) {};
\node[noeud] (u2) at (5*\x+\xshift,1*\y+\yshift) {};
\node[noeud] (u3) at (6*\x+\xshift,1*\y+\yshift) {};
\draw[rounded corners,ultra thick,gray] (4*\x-.25+\xshift,\y-.25+\yshift) rectangle (6*\x+.25+\xshift,\y+.25+\yshift);

\node[noeud] (v1) at (4*\x+\xshift,0+\yshift) {};
\node[noeud] (v2) at (5*\x+\xshift,0+\yshift) {};
\node[noeud] (v3) at (6*\x+\xshift,0+\yshift) {};
\draw[rounded corners,ultra thick,gray] (4*\x-.25+\xshift,-.25+\yshift) rectangle (6*\x+.25+\xshift,.25+\yshift);

\node[noeud] (u1) at (1*\x+\xshift,1*\y+\yshift) {};
\node[noeud] (u2) at (2*\x+\xshift,1*\y+\yshift) {};
\draw[rounded corners,ultra thick,gray] (1*\x-.25+\xshift,\y-.25+\yshift) rectangle (2*\x+.25+\xshift,\y+.25+\yshift);

\node[noeud] (v1) at (1*\x+\xshift,0+\yshift) {};
\node[noeud] (v2) at (2*\x+\xshift,0+\yshift) {};
\draw[rounded corners,ultra thick,gray] (1*\x-.25+\xshift,-.25+\yshift) rectangle (2*\x+.25+\xshift,.25+\yshift);

\node at (1.5*\x+\xshift,\y+1+\yshift) {$n=2$};
\node at (5*\x+\xshift,\y+1+\yshift) {$n=3$};
\node at (0+\xshift,\y+\yshift) {$C''_1$};
\node at (0+\xshift,0+\yshift) {$C''_2$};

\draw[ultra thick,lightgray,dashed] (-.5+\xshift,\y/2+\yshift)--(6*\x+.5+\xshift,\y/2+\yshift);

\draw[ultra thick,lightgray,dashed] (3*\x+\xshift,-1+\yshift)--(3*\x+\xshift,\y+1.5+\yshift);

\draw[ultra thick,lightgray,rounded corners] (-.5+\xshift,-1+\yshift) rectangle (6*\x+.5+\xshift,\y+1.5+\yshift);

\node[ultra thick,circle,draw=lightgray,fill=white] at (-.5+\xshift,\y+1.5+\yshift) {$\mathbf{C''}$};

\end{tikzpicture}

\captionof{figure}{Illustration of \textsc{SplitChildren} on collections. The collection $\bfC$ (above) with $n=5$ is split into two collections, $\mathbf{C'}$ composed of the elements of $\mathcal{C}_u$ and $\mathcal{C}_v$ (red nodes), and $\mathbf{C''}$ with the remaining nodes (below). Recursively, the children of red nodes, first, and the children of non-red nodes, second, will in turn be separated from the other nodes -- whether they are present in bags or collections.
} \label{fig:collec_exple}
\end{minipage}
\end{figure}

Furthermore, when recursively mapping the parents of two nodes, if the latter were present in a collection, the other nodes with which these parents share a subset must be put together in a bag. More precisely, let $u$ and $v$ be the parents to be mapped, $\bfC\in\collections$ and $n\in\supp(\bfC)$ such that $u\in P \in C_{1,n}$ and $v\in Q\in C_{2,n}$. In addition to mapping $u$ and $v$, we must remove $P$ and $Q$ from $\bfC_n$ (which may lead to new deductions) and create bag $(P\setminus\lbrace u \rbrace, Q\setminus \lbrace v\rbrace)$.

\paragraph{Size of the remaining search space} When the system only consists of bags, we have established in \eqref{size_bags_only} the number of possible completions of $\isom$, referred to as the size of the search space. Now that collections have been introduced, and some bags may have been created again (via Deduction Rule~\ref{ded:coll_cardinal}), we have to evaluate the size of the search space in terms of both $\bags$ and $\collections$. 

Each bag $\bfB\in\bags$ still provides $\#\bfB!$ possible mappings. For a given collection $\bfC\in\collections$ and a given integer $n\in\supp(\bfC)$, there are $\#\bfC_n!$ ways of forming new bags; each such bag leading to $n!$ possible mappings. The number of possible completions of $\isom$ is therefore given by
\begin{equation}\label{size}
 N(\bags,\collections)=\prod_{\bfB\in \bags} \#\bfB! \prod_{\bfC\in \collections} \left(\prod_{n\in\supp(\bfC)} \#\bfC_n! \times n!^{\#\bfC_n} \right).
\end{equation}

\FloatBarrier

\subsection{Backtracking}\label{ss:backtracking}

The purpose of the steps in Subsections~\ref{ss:ded:topol} and \ref{ss:ded:label} is to reduce the size of the search space by making deductions about the necessary mappings required for a tree ciphering to exist between $T_1$ and $T_2$. When no more deductions are possible, we propose to explore the remaining possibilities by making arbitrary mappings, which may prove unsuccessful and break the partial tree ciphering, in which case one must backtrack and make other choices if any remain; otherwise one can conclude that $T_1\nsim T_2$.

At this point, the system consists of nodes already mapped and the rest distributed among bags and collections. Since arbitrary mapping choices are to be made, the issue here is to define precisely what these choices consist of, and in what order to consider them. 

\paragraph{On the backtracking tree} The underlying structure behind a backtracking approach is an enumeration tree, whose nodes are called states. In the best case, we simply traverse it from the root to a leaf; in the worst case, we explore the entirety of it. Ensuring that the tree has a minimal number of possible states is therefore of utmost importance to guarantee the performance of the algorithm. The design of the exploration of the remaining search space is thus motivated by the minimization of the number of states of the related backtracking tree.

In our case, given $\isom$, $\bags$ and $\collections$, we know that there are $N(\bags,\collections)$ ways to complete $\isom$. The definition of the backtracking exploration consists in (i) mapping nodes taken from bags together, and (ii) forming bags from sets of nodes taken from collections. No matter how these options are implemented and sorted, the resulting backtracking tree has precisely $N(\bags,\collections)$ leaves. The aim here is to minimize the number of internal (excluding the leaves and the root) states in the backtracking tree. Prior to any optimization of this number, we have the following upper-bound.

\begin{proposition}\label{backtracking_tree_size}
Starting from initial system given by $\bags$ and $\collections$, the number of states of any backtracking tree aiming to complete $\isom$ into a tree ciphering isomorphism is upper-bounded by $2(e-1)N(\bags,\collections)$.
\end{proposition}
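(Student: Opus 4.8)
The plan is to count the states of the backtracking tree depth by depth, exploiting that its shape is essentially a concatenation of elementary ``factorial'' enumerations. First I would record the two structural facts that drive everything. On the one hand, by construction the tree has exactly $N(\bags,\collections)$ leaves, one per completion of $\isom$. On the other hand, the exploration splits into independent \emph{blocks}, one for each factorial factor appearing in \eqref{size}: a block of size $\#\bfB$ for each bag $\bfB\in\bags$ (choosing the bijection of its nodes), a block of size $\#\bfC_n$ for each collection level $(\bfC,n)$ (choosing how to pair subsets into bags), and a block of size $n$ for each bag so produced (realising the inner $n!$ bijection). Within a block of size $k$ the successive choices offer $k,k-1,\dots,1$ options, so all states at a common depth share the same number of children, and the number of states at a given depth equals the product of the branching factors above it.

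Next I would bound the states contributed by one block, assuming the blocks are processed one at a time. Write $P_j=\prod_{t\le j}k_t!$ for the number of partial completions once blocks $1,\dots,j$ are finished, so that $P_0=1$ and $P_m=N(\bags,\collections)$. A non-leaf state has blocks $1,\dots,j-1$ completed and block $j$ advanced by $i$ steps for some $0\le i\le k_j-1$; there are $P_{j-1}\,k_j!/(k_j-i)!$ such states. Summing over $i$ and substituting $t=k_j-i$ gives
\begin{equation*}
\sum_{i=0}^{k_j-1}\frac{P_{j-1}\,k_j!}{(k_j-i)!}=P_{j-1}\,k_j!\sum_{t=1}^{k_j}\frac1{t!}=P_j\sum_{t=1}^{k_j}\frac1{t!}\le (e-1)\,P_j,
\end{equation*}
where the inequality uses $\sum_{t\ge 1}1/t!=e-1$. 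Hence the total number of non-leaf states is at most $(e-1)\sum_{j=1}^m P_j$.

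The decisive ingredient is then the geometric control of $\sum_j P_j$, and this is exactly where the deduction phase pays off. Because Deduction Rule~\ref{ded:bags} removes every bag of size one and Deduction Rule~\ref{ded:coll_cardinal} removes every collection level with $\#\bfC_n=1$, no trivial choice survives into the backtracking: every block has size $k_j\ge 2$, so $k_j!\ge 2$ and $P_{j-1}\le P_j/2$. Reading the partial products from the last block backwards then yields
\begin{equation*}
\sum_{j=1}^m P_j\le P_m\sum_{s\ge 0}2^{-s}=2\,N(\bags,\collections),
\end{equation*}
and combining the two displays bounds the number of states by $2(e-1)N(\bags,\collections)$, as claimed.

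I expect the accounting of the forced and trivial steps to be the main obstacle. A block of size $k$ ends with a forced assignment (branching one), which must be charged consistently; and a pairing at a collection level with $n=1$ produces a size-one inner bijection that has to be absorbed into the bag-formation step rather than spawning its own states, since otherwise blocks of size one would break the geometric series. One must also verify that the worst case is attained when each bag or collection level is processed to completion before the next is started, so that the prefix products $P_j$ genuinely telescope: an interleaved schedule that defers all forced steps to the end would inflate the count, and ruling this out---it is prevented by applying the deduction rules to closure after every arbitrary choice---is the delicate point I would defer to Appendix~\ref{proof:backtracking_tree}.
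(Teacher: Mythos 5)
Your argument is correct and is essentially the paper's own proof in unrolled form: the paper bounds a recursively defined model $f$ of the tree size by induction (Lemma~\ref{lemma:bound_f}), and expanding that recursion yields exactly your decomposition into per-block contributions $(e-1)P_j$ with prefix products $P_j$, followed by the same geometric bound $\sum_j P_j\le 2\,N(\bags,\collections)$ justified by every completed block multiplying the count by at least $2$. The $n=1$ collection levels you flag are the one rough edge in both treatments (the paper's tuples $(n,\alpha)$ formally require $n\ge 2$), and your fix --- absorbing the trivial inner bijections so that such a level acts as a single block of size $\#\bfC_n\ge 2$, i.e.\ as a bag --- is the right one.
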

\begin{proof}
The proof can be found in Appendix~\ref{proof:backtracking_tree}. See in particular Appendix~\ref{ss:proof_backtracking_tree_size}.
\end{proof}

First, this result highlights the value of the previous phase of deductions on topology and labels, aimed at reducing the size $N(\bags,\collections)$ of the search space as much as possible.
In addition, since we know that there must be exactly $N(\bags,\collections)$ leaves, we see that the number of states of any backtracking tree is linear in the quantity of items to enumerate. In the sequel, the efforts that are made to minimize the number of inner states aim to reduce the value of the constant in front of $N(\bags,\collections)$.

\paragraph{Processing a single bag} We assume here that the system consists of only one bag $\bfB=(B_1,B_2)$. There are $\#\bfB!$ possible mappings between the nodes of $B_1$ and those of $B_2$. In particular, for any $u\in B_1$, there are $\#\bfB$ choices for mapping $u$ to a node $v\in B_2$. Process bag $\bfB$ means mapping an arbitrary couple $(u,v)\in B_1\times B_2$ with $\textsc{ExtBij}(\isom,u,v)$, which leads (if we ignore the possible deductions that can result from it) to a new state formed of a single bag $\bfB'$ of size $\#\bfB-1$ and for which we can recursively apply the same procedure until all nodes have been mapped or an error arises. Backtracking on the choice $(u,v)$ amounts to choosing another node $v'\in B_2\setminus\{v\}$ and mapping $(u,v')$. If there is no node $v\in B_2$ that can be mapped to $u$ without generating an error, then one can conclude that $T_1\nsim T_2$.

\paragraph{Processing a single collection} Similarly, we assume that the system consists of only one collection $\bfC$ with a single integer $n$ such that $\#\bfC_n >0$. There are $\#\bfC_n!$ possible mappings between the elements of $C_{1,n}$ and those of $C_{2,n}$. For any $P\in C_{1,n}$, there are $\#\bfC_n$ choices for selecting $Q\in C_{2,n}$ to create a bag $(P,Q)$. Once this bag is created, two strategies are possible:
\begin{enumerate}[(i)]
    \item continue to form new bags from the remaining elements in $\bfC_n$, then process all these bags one by one;
    \item process this new bag entirely (according to the previous procedure) before forming a new bag from the remaining elements in $\bfC_n$.
\end{enumerate}
By virtue of upcoming Proposition~\ref{prop:strategy_for_processing_collections}, we adopt the second strategy: process collection $\bfC_n$ means putting together into a bag an arbitrary couple $(P,Q)\in C_{1,n}\times C_{2,n}$. The system is then composed of a bag and a collection. The next paragraph says how to deal with a system composed of both elements: in consistency with the choice of the second strategy, bags are processed first. As a consequence, the new bag $(P,Q)$ is entirely processed before forming new bags from the rest of $\bfC_n$.
Backtracking on this choice amounts to choosing a different set $Q'\in C_{2,n}\setminus\{Q\}$. As for bags, if there is no $Q\in C_{2,n}$ that can be associated to $P$ without generating an error, then $T_1\nsim T_2$.

\begin{proposition}\label{prop:strategy_for_processing_collections}
The backtracking tree related to strategy (ii) has fewer states than the one related to strategy (i).
\end{proposition}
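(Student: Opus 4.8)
The plan is to stay in the single-collection setting of this subsection, with $\bfC$ carrying a single cardinality $n$ and $m=\#\bfC_n$ sets on each side, and to count the states of the two backtracking trees as explicit functions of $m$ and $n$. The first point to record is that the two trees enumerate the same leaves: each leaf is a pairing of $C_{1,n}$ with $C_{2,n}$ together with a node-level bijection inside each of the $m$ bags thus formed, so both trees have exactly $\#\bfC_n!\,(n!)^{\#\bfC_n}$ leaves, in agreement with \eqref{size}. Hence only the internal states can differ, and comparing the two total state counts is exactly what is required.

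Next I would exploit that, as long as one ignores the extra deductions triggered while mapping, both trees are \emph{level-homogeneous}: all nodes at a given depth have the same number of children. Forming a bag from the remaining sets branches $m, m-1, \dots, 1$ in turn, while processing a bag of size $n$ branches $n, n-1, \dots, 1$ (this is the single-bag procedure already described). Along every root-to-leaf path the multiset of branching factors is therefore the same for both strategies, namely $\{m,m-1,\dots,1\}$ from the formings together with $m$ copies of $\{n,n-1,\dots,1\}$ from the processings; the two strategies differ only in the \emph{order} in which these factors appear. For a level-homogeneous tree the number of states equals the sum of the partial products of its successive branching factors, so both counts are instances of the same functional $\sum_{k\ge 0}\prod_{i\le k} b_i$ evaluated at two permutations of one fixed multiset: strategy (i) places the whole forming block $m,\dots,1$ first and then the $m$ processing blocks, whereas strategy (ii) inserts the processing block of each bag immediately after that bag is formed.

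With this reduction the proposition becomes a purely combinatorial statement about the functional above, and I would attack it by the standard adjacent-transposition argument: swapping two neighbouring branching factors changes the sum of partial products monotonically in their relative order, so it suffices to track how strategy (ii)'s interleaved order is obtained from strategy (i)'s order by a sequence of such swaps and to verify that each swap moves the count in the direction claimed. Equivalently one can phrase everything through the recursion $s_m = 1 + m\,\alpha_n + m\,n!\,s_{m-1}$ for strategy (ii) — obtained by conditioning on the first bag formed, with $\alpha_n$ the fixed number of states produced by processing one size-$n$ bag — set against the corresponding closed form for strategy (i), and argue by induction on $m$ (the base case $m=1$ giving equality). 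The hard part is precisely this comparison. The branching sequence, though constant across paths in the idealised count, is genuinely perturbed by the deduction rules: processing a bag to completion before forming the next one lets \textsc{MapNodes}/\textsc{SplitChildren} invalidate a partial ciphering before the later bags are ever created, so on the infeasible branches strategy (ii) prunes forming states that strategy (i) has already built. Making the inequality robust therefore requires showing that this early pruning can only help strategy (ii) and then combining it with the exchange comparison on the feasible part of the tree; organising these two effects so that they point the same way, uniformly in $m$ and $n$, is where the real work lies.
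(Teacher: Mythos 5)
Your reduction contains a genuine gap, and it is not the one you flag at the end: it is the state-counting convention on which the whole plan rests. You count every branching level as a layer of states, including the forced ones (the trailing $1$'s) and, crucially, the bag-forming choices of strategy (ii). Under that convention the claimed inequality is \emph{false}. Take $\#\bfC_n=2$ and $n=2$: strategy (i) has branching sequence $2,1,2,1,2,1$, whose sum of partial products is $2+2+4+4+8+8=28$, while strategy (ii) has branching sequence $2,2,1,1,2,1$, whose sum is $2+4+4+4+8+8=30$. So in your model strategy (i) is strictly cheaper, and the adjacent-transposition argument you outline, if carried out, would prove the opposite of the proposition: for sums of prefix products over orderings of a fixed multiset, the optimum puts \emph{small} factors first, and it is strategy (i) that front-loads the small forming factors. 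Dropping the forced levels does not rescue the claim either: with forced choices merged into their parents but forming choices still counted as states, the two strategies give $14=14$ in the example above, and for $\#\bfC_n=n=3$ strategy (ii) is again strictly worse ($2334$ versus $2331$), so no strict inequality in the right direction can come out of this setup.

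What makes the paper's statement true is an asymmetry that your model does not have. In the paper's model (the variadic function $f$ of \eqref{variadic_f}, built on $a_n$ of \eqref{a_n}), a bag of size $n$ contributes a permutation tree with $a_n$ states (forced assignments create no states), and in strategy (ii) the $\alpha$-fold choice of which $Q$ to pair with the selected $P$ is a pure multiplier: the $\alpha$ processing trees hang directly from the current state, so forming a bag adds no layer of its own. In strategy (i), by contrast, forming all the bags up front is itself a permutation enumeration and contributes $a_\alpha$ states before any processing begins. With this convention the paper (Lemma~\ref{strategy_collections}) derives closed forms by induction on $\alpha$, namely $a_n\sum_{k=0}^{\alpha-1}\frac{\alpha!}{(\alpha-1-k)!}(n!)^k$ for strategy (ii) against $a_\alpha+\alpha!\,a_n\sum_{k=0}^{\alpha-1}(n!)^k$ for strategy (i), and concludes by a term-by-term comparison; the strictness comes precisely from the extra $a_\alpha$ and from $\frac{\alpha!}{(\alpha-1-k)!}<\alpha!$ for $k<\alpha-1$. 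Finally, the pruning effect you describe as ``the real work'' is not part of the proposition at all: the Remark in Appendix~\ref{app:variadic} explicitly assumes the permutation trees are independent (no pruning by deductions or \textsc{SplitChildren}), so the comparison is between idealised, unpruned trees. Your proposal completes neither the exchange argument nor the pruning argument, and the former, once completed in your model, points the wrong way.
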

\begin{proof}
The proof can be found in Appendix~\ref{proof:backtracking_tree}. See in particular Lemma~\ref{strategy_collections}.
\end{proof}

\paragraph{Processings bags and collections} We consider the general case where there are several bags and collections to choose from to process. Several questions arise: between bags and collections, which to process first? Among several bags, which one to process first? Among several collections, which one to process first? These issues are addressed in the next theorem.

\begin{theorem}\label{backtracking_tree}
The number of states of the backtracking tree is minimal if:
\begin{itemize}
    \item bags are processed before collections;
    \item when processing bags, the bag $\bfB$ with the smallest cardinality $\#\bfB$ (among all bags) is processed first;
    \item if there are only collections left, the collections $\bfC$ with the largest $n=\max(\supp(\bfC))$ (among all collections) are processed first by increasing cardinality $\#\bfC_n$.
\end{itemize}
\end{theorem}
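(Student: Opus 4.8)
The plan is to treat the backtracking tree as a purely combinatorial object and reduce the optimality of the three bullet points to an interchange (adjacent‑transposition) argument on the order in which items are processed. Since Proposition~\ref{prop:strategy_for_processing_collections} already fixes how a single collection is dismantled (strategy (ii)), the only remaining degrees of freedom are precisely the three relative orders stated, and I would settle each one by swapping two competing operations and comparing the resulting number of states. Throughout I write $\nu(S)$ for the number of states of the backtracking tree built on a system $S$ of bags and collections.

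First I would record the governing recursion. Because the counting deliberately ignores the deductions a mapping may trigger, all children produced by one processing operation carry isomorphic subsystems; hence if the first operation applied to $S$ has branch factor $c$ and turns $S$ into the subsystem $S'$, then $\nu(S)=1+c\,\nu(S')$, with $\nu(\emptyset)=1$. Unrolling this shows that $\nu(S)$ is the sum of the prefix products of the sequence of branch factors, whose total product is exactly the invariant $N(\bags,\collections)$ of \eqref{size}. In this language, processing a bag of size $k$ emits the factors $k,k-1,\dots,1$, while processing a collection level $\bfC$ at $n$ with $\#\bfC_n=m$ emits, under strategy (ii), the structured run $m,[n,\dots,1],m-1,[n,\dots,1],\dots,1,[n,\dots,1]$, whose product $\#\bfC_n!\,n!^{\#\bfC_n}$ matches the corresponding factor in \eqref{size}.

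The engine of the argument is then an interchange lemma. Consider two independent available operations $O_1,O_2$ — acting on different bags or collections, so that performing one leaves the other available with the same branch factor and the two orders reach a common subsystem of value $K$. The recursion gives \[\nu = 1 + c_1 + c_1 c_2 K \quad\text{if } O_1 \text{ precedes } O_2, \qquad \nu = 1 + c_2 + c_1 c_2 K \quad\text{otherwise,}\] so the two counts differ only by $c_1-c_2$. Thus for two single branchings the order with fewer states performs the smaller branch factor first, and this one local comparison is all that matters. I would then start from an arbitrary policy and bubble‑sort it into the prescribed one, each admissible adjacent swap leaving $\nu$ unchanged or smaller, after checking that the induced pairwise preference is a consistent total preorder so that the sorting is well defined and terminates at a unique canonical order.

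Finally I would instantiate the comparison for each rule. The bag‑versus‑bag case is immediate: the effective factors are the cardinalities, so the smaller bag is processed first, and since a bag only shrinks its cardinality stays smallest, yielding the second bullet. The bag‑versus‑collection and collection‑versus‑collection cases are where the competing objects are the structured runs above rather than single branchings, so here I would compare the full counts $\nu$ produced by the two orders; carrying out this comparison is meant to yield that bags precede collections and that among collections one resolves the largest $n=\max(\supp(\bfC))$ first and, at fixed $n$, takes collections by increasing $\#\bfC_n$. The hard part will be exactly this last computation: a collection level expands into $\#\bfC_n$ spawned bags of size $n$ that must themselves be processed, so the relevant cost is a nested sum of prefix products in which the size‑$n$ factors dominate, and the ``largest $n$ first, then increasing cardinality'' criterion only emerges after manipulating these nested expressions. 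Verifying that the preferences induced by the three comparisons are mutually transitive — so that all three bullets can be imposed simultaneously by a single bubble‑sort — is the remaining delicate point of the proof.
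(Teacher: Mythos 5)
Your setup is essentially the paper's own: the ``sum of prefix products'' count is exactly the variadic function $f$ of \eqref{variadic_f} (a bag of size $n$ contributing the run of factors $n,n-1,\dots,2$, i.e. $a_n$ states, and a collection level contributing the nested runs of strategy (ii)), and your bubble-sort reduction to adjacent interchanges is precisely the remark the paper makes before its computation. The bag-versus-bag case you settle is the paper's Lemma~\ref{lemma:two_bags}.

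The gap is that the proposal stops exactly where the content of the theorem begins. Your interchange lemma ($\nu$ differs by $c_1-c_2$, hence ``smaller branch factor first'') is valid only for two \emph{single} branchings; for the actual objects --- a bag is the run encoded by the tuple $(\#\bfB,1)$, a collection level $\bfC_n$ is the nested run $(n,\#\bfC_n),(n,\#\bfC_n-1),\dots,(n,1)$ --- the swap comparison is a genuinely different computation, and its outcome \emph{contradicts} the heuristic your lemma suggests: a bag must be processed before any collection even when the bag is much larger, and among collections the one with the \emph{largest} $n$ goes first. The paper settles this by reducing the swap of two leading tuples $(m,\alpha)$, $(n,\beta)$ to the sign of $\Delta f=\beta a_n(\alpha m!-1)-\alpha a_m(\beta n!-1)$ in \eqref{delta_f}, and then proving Lemma~\ref{beta_12} (for $m>n$: $\beta\geq 2$ forces $\Delta f<0$, while $\beta=1$ forces $\Delta f>0$) through series estimates on $b_n$, $r_n$ with bounds by $e-1$ and a case disjunction on $n=2$, $n=3$, $n\geq 4$. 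These inequalities are the whole mathematical difficulty, and you explicitly defer them (``carrying out this comparison is meant to yield\dots'', ``the hard part will be exactly this last computation''). Likewise, the transitivity issue you flag at the end does not need a separate verification in the paper: it falls out because the sign of $\Delta f$ depends on the pair only through a total preorder (multiplicity $1$ before multiplicity $\geq 2$, then size, then multiplicity) --- but that, again, is a consequence of Lemma~\ref{beta_12}, not something one may assume. As written, the proposal is a correct skeleton of the paper's argument with its decisive step missing.
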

\begin{proof}
The proof can be found in Appendix~\ref{proof:backtracking_tree}. See in particular Appendix~\ref{ss:proof_backtracking_tree}.
\end{proof}

\paragraph{Algorithm} The pseudocode for the backtracking part is provided in Algorithm~\ref{backtrack} and implements precisely the strategy of Theorem~\ref{backtracking_tree}. During the deduction steps, if an operation went wrong, we could immediately conclude that $T_1\nsim T_2$; here it indicates that an unfortunate choice has been made and that another one must be tested, if any remain. The detection of these unsuccessful cases is done in the pseudocode via the tags \textbf{try} and \textbf{catch}: $\textbf{try}~X$ $\textbf{catch}~Y$ executes $X$ and, if an error is detected, executes $Y$ instead. A number of new procedures are also introduced: \textsc{SaveState}, \textsc{RestoreState} and \textsc{NextCandidates}. The first two are self-explanatory, while the last one, presented in Algorithm~\ref{candidates}, generates the list of possible choices for the next step of backtracking, following what was established earlier.

\begin{figure}[ht!]

\begin{minipage}[t]{.55\textwidth}
\begin{algorithm}[H]
\caption{\textsc{Backtracking}}\label{backtrack}
\SetKwIF{Try}{}{Catach}{try}{}{}{catch}{endif}
\SetKw{KwAssert}{assert}

\KwIn{$\bags,\collections,\isom,f$}
$\lambda,L\gets\textsc{NextCandidates}(\bags,\collections)$\\
$S\gets\textsc{SaveState}(\bags,\collections,\isom,f)$\\
\uIf{$\lambda=\text{\emph{flag}}(\bags)$}{
\For{$(u,v)\in L$}{
        \eTry{}{
        \KwAssert \textsc{MapNodes}$(u,v,\isom,f)$\\
     Apply Deduction~Rules~\ref{ded:bags} to \ref{ded:coll_cardinal}
    
        \Return \textsc{Backtracking}$(\bags,\collections,\isom,f)$\\
        }{
        $\bags,\collections,\isom,f\gets\textsc{RestoreState}(S)$
        }}
        \Return $\bot$ 
        }
        \uElseIf{$\lambda=\text{\emph{flag}}(\collections)$}{
\For{$(P,Q)\in L$}{
        \eTry{}{
        \KwAssert \textsc{ExtBij}$(f,\overline{P},\overline{Q})$\\
        Delete $P$ and $Q$ from their collection\\
        Add $(P,Q)$ to $\bags$\\
     Apply Deduction~Rules~\ref{ded:bags} to \ref{ded:coll_cardinal}

        \Return \textsc{Backtracking}$(\bags,\collections,\isom,f)$
        }{
        $\bags,\collections,\isom,f \gets \textsc{RestoreState}(S)$
        }}
        \Return $\bot$
        }
        \Else{
        \Return $\top$ 
        }
\end{algorithm}
\end{minipage}\hfill
\begin{minipage}[t]{.44\textwidth}
\begin{algorithm}[H]
\caption{\textsc{NextCandidates}}\label{candidates}
\SetKwIF{Try}{}{Catach}{try}{}{}{catch}{endif}
\KwIn{$\bags,\collections$}
\uIf{$\bags\neq\emptyset$}{
$\lambda\gets\text{flag}(\bags)$\\
$\bfB\gets\argmin_{\bfB\in\bags} \#\bfB$\\
Select $u\in B_1$\\
$L\gets\lbrace (u,v) : v\in B_2\rbrace$
}
        \uElseIf{$\collections\neq\emptyset$}{
        $\lambda\gets\text{flag}(\collections)$\\
        $A\gets\argmax_{\bfC\in\collections} \max(\supp(\bfC))$\\
        $\bfC\gets\argmin_{\bfC\in A} \#\bfC_{\max(\supp(\bfC))}$\\
        Select $P\in C_{1,\max(\supp(\bfC))}$\\
        $L\gets\lbrace (P,Q)\,:\,Q\in C_{2,n}\rbrace$
        }
        \Else{
        $\lambda\gets\text{flag}(\emptyset)$\\
        $L\gets\emptyset$\\
        }
\Return $\lambda,L$
\end{algorithm}
\end{minipage}

\end{figure}

\paragraph{Running example: backtracking} The deduction steps end with Figure~\ref{preproc_collec_deduction}, then the backtracking exploration can start. \textsc{NextCandidates}$(\bags,\collections)$ returns $\lambda=\text{flag}(\bags)$ and $L = \lbrace (u_9, v_7), (u_9, v_8)\rbrace$. After mapping $\isom(u_9)=v_7$, one can deduce $\isom(u_{10})=v_8$ and start the recursion, this time on collections. All mapping choices between the remaining nodes and labels are acceptable: the exploration of the remaining search space does not require to backtrack.

\subsection{Analysis of time-complexity}\label{ss:timecomplexity}

\paragraph{Complexity of \textsc{MapNodes}}
The backtracking exploration has been developed together with the analysis of the number of possible states. The time-complexity of the two parts of our tree ciphering algorithm requires to investigate the procedure \textsc{MapNodes} given in Algorithm~\ref{algo:mapnodes}. Below, we assume that the two trees of interest are topologically isomorphic and note $T$ their topology.

\begin{lemma}\label{prop:number_of_calls_mapnodes}
The number of calls to \textsc{MapNodes} is:
\begin{itemize}
\item bounded by $\#T$ during the deduction steps (prior to backtracking);
\item at least $\#T$ for the whole algorithm (deduction steps and backtracking) if $T_1\sim T_2$.
\end{itemize}
\end{lemma}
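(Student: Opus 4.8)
The plan is to exploit one structural invariant of the algorithm: the node bijection $\isom$ is extended \emph{only} inside \textsc{MapNodes}, and each individual call to \textsc{MapNodes} extends it by at most one pair. Indeed, scanning the pseudocode, the only invocation of \textsc{ExtBij} on the argument $\isom$ (as opposed to $f$) occurs in the body of \textsc{MapNodes} (Algorithm~\ref{algo:mapnodes}); Deduction~Rules~\ref{ded:coll_label}--\ref{ded:coll_cardinal} and the collection branch of \textsc{Backtracking} only call \textsc{ExtBij} on $f$ or rearrange $\bags$ and $\collections$. Moreover, the recursive step on the parents is itself a fresh call, so each call contributes to $\dom(\isom)$ at most the single pair $(u,v)$ handled in its own body. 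I would record this as the key fact driving both bullets.

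For the upper bound during the deduction steps, I would call a \textsc{MapNodes} call \emph{productive} if it passes the \textsc{ExtBij} tests and genuinely enlarges $\dom(\isom)$. The crucial claim is that, in the deduction phase, every call is either productive or is a single terminal call returning $\bot$. Top-level calls come from Deduction~Rule~\ref{ded:bags} on size-one bags, whose nodes are by construction unmapped, so they cannot hit the ``already mapped'' branch of \textsc{ExtBij}; recursive calls $\textsc{MapNodes}(\parent(u),\parent(v))$ are guarded by the test $\isom(\parent(u))=\parent(v)$, hence are issued only on parents not already correctly mapped, so they either extend $\isom$ or detect an inconsistency and return $\bot$ (which the text already identifies as the certificate $T_1\nsim T_2$, terminating the phase). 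Productive calls inject pairwise-distinct nodes into $\dom(\isom)$, so there are at most $\#T$ of them. If a terminal failing call exists it is unique, and at the moment of failure not all nodes can be consistently mapped, so the productive count is then at most $\#T-1$; adding the single failing call still yields at most $\#T$. Hence the total is bounded by $\#T$.

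For the lower bound on the whole algorithm under the hypothesis $T_1\sim T_2$, I would use that the algorithm then terminates on a full tree ciphering, so in its final accepting state $\dom(\isom)=T_1$ and therefore $\#\dom(\isom)=\#T$. By the structural invariant, each of these $\#T$ node mappings was installed by a distinct \textsc{MapNodes} call contributing exactly one pair, whence at least $\#T$ calls are made along the accepting path. Backtracking, which may map a node, undo it via \textsc{RestoreState}, and re-map it, can only increase this count, so it does not threaten the bound.

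The main obstacle I anticipate is the bookkeeping around non-productive calls: I must argue carefully that, within the deduction phase, the only way a call fails to enlarge $\dom(\isom)$ is the terminal $\bot$-returning call—ruling out spurious ``already mapped'' returns on the $\isom$ component—and that such a failing call cannot coexist with a fully mapped domain. A secondary point requiring care is confirming exhaustively that no node mapping is ever installed outside \textsc{MapNodes}; in particular that the collection branch of \textsc{Backtracking} merely creates a bag, deferring the actual node mapping to a later \textsc{MapNodes} call triggered by Deduction~Rule~\ref{ded:bags}.
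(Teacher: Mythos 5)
Your argument for the second bullet is sound and is essentially the paper's own proof: when $T_1\sim T_2$ the algorithm terminates with $\dom(\isom)=T_1$, and since $\isom$ is only ever extended inside \textsc{MapNodes}, one pair per call, at least $\#T$ calls are needed. For the first bullet, however, the paper does not reprove anything: it simply invokes \cite[Proposition 1]{ingels2021isomorphic}, so your self-contained argument is a genuinely different route --- and it contains a genuine gap at exactly the step you flag as your ``main obstacle''.

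The problematic step is the claim that, whenever a failing call occurs, at most $\#T-1$ productive calls have been made. A failing call can be a recursive call $\textsc{MapNodes}(\parent(u),\parent(v))$ issued immediately after the pair $(u,v)$ has been productively mapped, in the situation where $\parent(u)$ is already mapped to some $w\neq\parent(v)$. Nothing in your argument excludes that $(u,v)$ was the last unmapped pair, i.e.\ that $\dom(\isom)=T_1$ at that instant; in that case the deduction phase performs $\#T$ productive calls plus one failing call, hence $\#T+1$ calls in total. Moreover, this situation appears to be reachable for the algorithm as written: \textsc{SplitChildren} (Algorithm~\ref{algo:split_children}) splits a bag only when $B_1\cap\mathcal{C}_u\neq\emptyset$ \emph{and} $B_1\setminus\mathcal{C}_u\neq\emptyset$, a condition on the $T_1$ side only, so a ``stale'' size-one bag $(\{u\},\{v\})$ whose $T_1$ part lies inside $\children(\parent(u))$ survives the mapping $\parent(u)\mapsto w$ even when $v\notin\children(w)$. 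One can then choose labels so that the frequency classes of step~A, refined by depth in step~B, isolate $(\{u\},\{v\})$ from the start, and choose an order of processing of size-one bags in Deduction~Rule~\ref{ded:bags} --- an order the paper leaves unspecified --- so that all other nodes get mapped first through parent recursion and $(\{u\},\{v\})$ is processed last, producing the $\#T+1$ scenario. Closing the gap therefore requires an invariant on bags and collections (roughly: if $u$ is paired with $v$ and $\parent(u)$ is already mapped, then $\isom(\parent(u))=\parent(v)$) that the pseudocode does not obviously maintain; this is presumably why the authors defer to the proposition proved for the algorithm variant of \cite{ingels2021isomorphic} instead of rederiving the bound. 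Your structural preliminaries --- that $\isom$ grows only inside \textsc{MapNodes}, one pair per call, and that the collection branch of \textsc{Backtracking} merely creates a bag and defers node mappings to Deduction~Rule~\ref{ded:bags} --- are correct; it is the accounting for the single failing call that is not, and cannot be, justified from those observations alone.
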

\begin{proof}
The first item is stated in \cite[Proposition 1]{ingels2021isomorphic}. In addition, if $T_1\sim T_2$, the algorithm maps all the nodes, which requires at least $\#T$ calls to \textsc{MapNodes}.
\end{proof}

If $T_1\nsim T_2$, the number of calls to \textsc{MapNodes} in the backtracking exploration can not be compared to $\#T$: it may be larger if the impossibility of building a tree ciphering is detected late in the backtracking, or smaller if it is detected early during the deduction steps.

Considering the time-complexity of \textsc{ExtBij} (given in Algorithm~\ref{algo:extbij}) as $O(1)$, the complexity of a call to \textsc{MapNodes} depends only on the complexity of the subroutine \textsc{SplitChildren} (concatenation of Algorithms~\ref{algo:split_children} and~\ref{split_children_bis}), which is provided in the next proposition. It should be recalled that, contrary to the version presented in \cite[3.3 Mapping nodes]{ingels2021isomorphic}, \textsc{SplitChildren} is recursive here.

\begin{proposition}\label{prop:split_children_complexity}
The time-complexity of \textsc{SplitChildren} is $O(\#T\theight(T))$.
\end{proposition}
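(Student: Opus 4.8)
The plan is to bound the total running time as a product of two quantities: the depth of the recursion, which I claim is at most $\theight(T)$, and the amount of work performed across all calls sharing a given recursion depth, which I claim is $O(\#T)$. Multiplying the two yields $O(\#T\,\theight(T))$.

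First I would control the recursion depth. A call \textsc{SplitChildren}$(S_u,S_v)$ computes the child sets $\mathcal{C}_u,\mathcal{C}_v$ and only ever recurses on subsets of these children: the pieces $(B_u,B_v)$ and $(B_1\setminus B_u,B_2\setminus B_v)$ in Algorithm~\ref{algo:split_children}, and likewise the sets $S_1,S_1',S_2,S_2'$ built in Algorithm~\ref{split_children_bis}. Since every node of $\mathcal{C}_u$ sits one level below the nodes of $S_u$, each recursive call operates on nodes of strictly larger depth than its caller. As depth is bounded by $\theight(T)$, the recursion tree has height at most $\theight(T)$.

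Next I would bound the work done at a fixed recursion depth $k$. The key invariant, inherited from the \emph{Parents} step and preserved by every split, is that all nodes lying in a common bag (or in a common part $P$ of a collection) share a common depth, and that the $S_u$-arguments occurring at recursion depth $k$ are disjoint subsets of the nodes at one fixed depth obtained by refining the current partition. Consequently the sets $\mathcal{C}_u$ computed at depth $k$ are pairwise disjoint subsets of the nodes at depth $k{+}1$, so $\sum|\mathcal{C}_u|\le\#T$ over all calls at that depth. Provided the intersections $B_1\cap\mathcal{C}_u$, $P\cap\mathcal{C}_u$ and the complementary differences are realised by scanning $\mathcal{C}_u$ and $\mathcal{C}_v$ and charging each node to the unique bag or collection-part containing it (rather than by rescanning every bag), the cost of one entire recursion level is $O(\#T)$ for the bag part and, summing the analogous scans over $P\in C_{1,n}$ and $Q\in C_{2,n}$, also $O(\#T)$ for the collection part. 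Summing over the at most $\theight(T)$ levels gives the claim.

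The main obstacle is precisely this per-level linear bound: I must argue that, despite the ``\textbf{for} $\bfB\in\bags$'' loop, the nodes examined across all calls at a given depth do not overlap beyond a constant factor, so that the refinement of bags and collections at that depth behaves like a single linear-time partition refinement. This rests on the disjointness of the child sets $\mathcal{C}_u$ and on maintaining, for each node, a direct handle to its current bag or collection-part, so that forming $(B_u,B_v)$, $(B_1\setminus B_u,B_2\setminus B_v)$ and their collection analogues costs time proportional only to the nodes actually moved. Once this bookkeeping is justified, the two-factor estimate closes and yields $O(\#T\,\theight(T))$.
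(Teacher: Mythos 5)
Your argument breaks at its very first step: the claim that \textsc{SplitChildren} ``only ever recurses on subsets of these children'' is false. In Algorithm~\ref{algo:split_children}, the second recursive call is on $(B_1\setminus B_u,\,B_2\setminus B_v)$, and since $B_u=B_1\cap\mathcal{C}_u$, this argument is \emph{disjoint} from $\mathcal{C}_u$, not contained in it (likewise $S'_1,S'_2$ in Algorithm~\ref{split_children_bis}). Your bound on the recursion depth survives anyway, because bags and collections are depth-homogeneous (the partition of Subsection~\ref{ss:ded:topol}, step \emph{B}, is only ever refined), so both pieces consist of nodes one tree level deeper than the caller's argument. What does not survive is your key invariant: the arguments of calls at a fixed recursion level are \emph{not} pairwise disjoint. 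If two sibling calls at level $k-1$, with disjoint child sets $\mathcal{C}^{(1)}$ and $\mathcal{C}^{(2)}$, both meet the same bag $B$, the first splits $B$ into $B\cap\mathcal{C}^{(1)}$ and $B\setminus\mathcal{C}^{(1)}$ and recurses on both; the second later splits $B\setminus\mathcal{C}^{(1)}$ into $B\cap\mathcal{C}^{(2)}$ and $B\setminus(\mathcal{C}^{(1)}\cup\mathcal{C}^{(2)})$ and recurses on both. Among the four resulting level-$k$ arguments, $B\setminus\mathcal{C}^{(1)}$ strictly contains the last two: arguments at one level form nested (laminar) families, not a refinement of a partition. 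This is precisely the phenomenon of Figure~\ref{fig:proof_split_children}, where one bag is re-cut by three different calls of the same level.

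Consequently your per-level bound $O(\#T)$ fails, no matter how clever the bookkeeping. Concretely, let a bag $B$ of size $n$ be met by $m$ same-level sibling calls whose child sets each pick out a single node of $B$: the successively created complement pieces have sizes $n-1,n-2,\dots,n-m$, and each one is scanned and spawns its own call, so this single level already costs $\Theta(mn)$. Taking $n\approx\theight(T)^2$ and $m\approx\theight(T)$ (the upper levels needed to create $m$ sibling calls fit in $O(m^2)$ nodes) gives one level costing $\Theta(\#T\,\theight(T))$, far above $O(\#T)$. The proposition itself remains true, but it requires the transposed accounting that the paper uses: charge each bag or collection, rather than each level. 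A given bag or collection met at recursion depth $k$ can be re-partitioned at most $k+1\leq\theight(T)+1$ times over the whole recursion, and each re-partition costs time linear in the object's size (with hash-table membership tests for $\mathcal{C}_u$, $\mathcal{C}_v$); summing over all objects, whose total size is at most $\#T$, yields $O(\#T\,\theight(T))$. So your two-factor decomposition (height)\,$\times$\,(linear work per level) must be replaced by (at most height-many splits per object)\,$\times$\,(linear cost per split per object); as stated, your proof does not go through.
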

\begin{proof}
First, we can see that Algorithms~\ref{algo:split_children} (line~2) and \ref{split_children_bis} (line~9) loop over all bags and collections. Actually, only bags and collections containing nodes of $\mathcal{C}_u$ and $\mathcal{C}_v$ are affected. If the implementation keeps a table that maps each node to the bag or collection where it is stored, we can restrict the loop to relevant bags and collections.

Next, note that the first time \textsc{SplitChildren} is called, the nodes have already been partitioned by depth (see \textit{B. Depth} in Subsection~\ref{ss:ded:topol}). Recursive calls on their children visit strictly deeper nodes, and thus necessarily different bags, which subsequently proves that the algorithm terminates. In the worst case, each different bag and collection is visited.

Partitioning bags or collections into two elements can be done in linear time, assuming that testing membership to $\mathcal{C}_u$ or $\mathcal{C}_v$ takes constant time, e.g. via hash tables. A bag $\bfB$ is then partitioned in $O(\#\bfB)$ and a collection $\bfC_n$ in $O(n\#\bfC_n)$.

The remaining question is: how many times can a bag or a collection be further partitioned through the various recursive calls? The worst possible case is when all recursive calls act on the same bag or collection, i.e. the children of the considered nodes are in the same object (see Figure~\ref{fig:proof_split_children} for an example with bags). A bag or a collection visited at recursion depth $k$ can be successively subdivided at most $k+1$ times. Roughly bounding $k$ by $\theight(T)$, we obtain, over all recursive calls, a complexity of 
$$O\left(\Bigg(\displaystyle\sum_{\bfB\in\bags} \#\bfB +\displaystyle\sum_{\bfC\in\collections}\displaystyle\sum_{n\in\supp(\bfC)} n \#\bfC_n\Bigg)\times (\theight(T)+1)\right).$$
Since the bags and collections are composed of the nodes of $T_1$ and $T_2$, the left-hand term can not be greater than $\#T$, concluding the proof.
\end{proof}

\begin{figure}[ht!]
    \centering

\def\xscale{0.45}
\def\yscale{0.5}
\def\nodescale{0.8}

\begin{tikzpicture}[xscale=\xscale,yscale=\yscale]
\tikzstyle{noeud}=[draw,circle,fill=white,scale=\nodescale*1]
\tikzstyle{attribut}=[scale=\nodescale*1,font=\bf]
\tikzstyle{arc}=[-,>=latex]
\tikzstyle{fleche}=[->,ultra thick]

\def\xshift{0}

\node[noeud,fill=lred] (u2) at (1+\xshift,0) {};
\node[noeud,fill=lgreen] (u3) at (2+\xshift,0) {};
\node[noeud,fill=lgreen] (u4) at (3+\xshift,0) {};

\draw[rounded corners,ultra thick,gray] (1-.5+\xshift,-.5) rectangle (3.5+\xshift,.5);

\node[noeud] (v1) at (0+\xshift,-2) {};
\node[noeud] (v2) at (1+\xshift,-2) {};
\node[noeud] (v3) at (2+\xshift,-2) {};
\node[noeud] (v4) at (3+\xshift,-2) {};
\node[noeud] (v5) at (4+\xshift,-2) {};
\node[noeud] (v6) at (5+\xshift,-2) {};

\draw[arc] (u2)--(v1);
\draw[arc] (u2)--(v2);
\draw[arc] (u3)--(v3);
\draw[arc] (u4)--(v4);
\draw[arc] (u4)--(v5);

\draw[rounded corners,ultra thick,gray] (-.5+\xshift,-2-.5) rectangle (5.5+\xshift,-2+.5);

\node[noeud] (w1) at (0+\xshift,-4) {};
\node[noeud] (w2) at (1+\xshift,-4) {};
\node[noeud] (w3) at (2+\xshift,-4) {};
\node[noeud] (w4) at (3+\xshift,-4) {};
\node[noeud] (w6) at (5+\xshift,-4) {};
\node[noeud] (w7) at (6+\xshift,-4) {};
\node[noeud] (w8) at (7+\xshift,-4) {};

\draw[rounded corners,ultra thick,gray] (-.5+\xshift,-4-.5) rectangle (7.5+\xshift,-4+.5);

\draw[arc] (v2)--(w1);
\draw[arc] (v2)--(w2);
\draw[arc] (v3)--(w3);
\draw[arc] (v4)--(w4);
\draw[arc] (v6)--(w6);
\draw[arc] (v6)--(w7);

\draw[fleche,black] (6+\xshift,-2)--(8+\xshift,-2);

\def\xshift{9}

\node[noeud,fill=lred] (u2) at (1+\xshift,0) {};
\node[noeud,fill=lgreen] (u3) at (2+\xshift,0) {};
\node[noeud,fill=lgreen] (u4) at (3+\xshift,0) {};

\draw[rounded corners,ultra thick,gray] (1-.5+\xshift,-.5) rectangle (1.5+\xshift,.5);
\draw[rounded corners,ultra thick,gray] (2-.5+\xshift,-.5) rectangle (3.5+\xshift,.5);

\node[noeud,fill=lred] (v1) at (0+\xshift,-2) {};
\node[noeud,fill=lred] (v2) at (1+\xshift,-2) {};
\node[noeud,fill=lgreen] (v3) at (2+\xshift,-2) {};
\node[noeud,fill=lgreen] (v4) at (3+\xshift,-2) {};
\node[noeud,fill=lgreen] (v5) at (4+\xshift,-2) {};
\node[noeud,fill=lorange] (v6) at (5+\xshift,-2) {};

\draw[arc] (u2)--(v1);
\draw[arc] (u2)--(v2);
\draw[arc] (u3)--(v3);
\draw[arc] (u4)--(v4);
\draw[arc] (u4)--(v5);

\draw[rounded corners,ultra thick,gray] (-.5+\xshift,-2-.5) rectangle (5.5+\xshift,-2+.5);

\node[noeud] (w1) at (0+\xshift,-4) {};
\node[noeud] (w2) at (1+\xshift,-4) {};
\node[noeud] (w3) at (2+\xshift,-4) {};
\node[noeud] (w4) at (3+\xshift,-4) {};
\node[noeud] (w6) at (5+\xshift,-4) {};
\node[noeud] (w7) at (6+\xshift,-4) {};
\node[noeud] (w8) at (7+\xshift,-4) {};

\draw[rounded corners,ultra thick,gray] (-.5+\xshift,-4-.5) rectangle (7.5+\xshift,-4+.5);

\draw[arc] (v2)--(w1);
\draw[arc] (v2)--(w2);
\draw[arc] (v3)--(w3);
\draw[arc] (v4)--(w4);
\draw[arc] (v6)--(w6);
\draw[arc] (v6)--(w7);

\draw[fleche,black] (6+\xshift,-2)--(8+\xshift,-2);

\def\xshift{18}

\node[noeud,fill=lred] (u2) at (1+\xshift,0) {};
\node[noeud,fill=lgreen] (u3) at (2+\xshift,0) {};
\node[noeud,fill=lgreen] (u4) at (3+\xshift,0) {};

\draw[rounded corners,ultra thick,gray] (1-.5+\xshift,-.5) rectangle (1.5+\xshift,.5);
\draw[rounded corners,ultra thick,gray] (2-.5+\xshift,-.5) rectangle (3.5+\xshift,.5);

\node[noeud,fill=lred] (v1) at (0+\xshift,-2) {};
\node[noeud,fill=lred] (v2) at (1+\xshift,-2) {};
\node[noeud,fill=lgreen] (v3) at (2+\xshift,-2) {};
\node[noeud,fill=lgreen] (v4) at (3+\xshift,-2) {};
\node[noeud,fill=lgreen] (v5) at (4+\xshift,-2) {};
\node[noeud,fill=lorange] (v6) at (5+\xshift,-2) {};

\draw[arc] (u2)--(v1);
\draw[arc] (u2)--(v2);
\draw[arc] (u3)--(v3);
\draw[arc] (u4)--(v4);
\draw[arc] (u4)--(v5);

\draw[rounded corners,ultra thick,gray] (-.5+\xshift,-2-.5) rectangle (1.5+\xshift,-2+.5);
\draw[rounded corners,ultra thick,gray] (2-.5+\xshift,-2-.5) rectangle (4.5+\xshift,-2+.5);
\draw[rounded corners,ultra thick,gray] (5-.5+\xshift,-2-.5) rectangle (5.5+\xshift,-2+.5);

\node[noeud,fill=lred] (w1) at (0+\xshift,-4) {};
\node[noeud,fill=lred] (w2) at (1+\xshift,-4) {};
\node[noeud,fill=lgreen] (w3) at (2+\xshift,-4) {};
\node[noeud,fill=lgreen] (w4) at (3+\xshift,-4) {};
\node[noeud,fill=lorange] (w6) at (5+\xshift,-4) {};
\node[noeud,fill=lorange] (w7) at (6+\xshift,-4) {};
\node[noeud] (w8) at (7+\xshift,-4) {};

\draw[rounded corners,ultra thick,gray] (-.5+\xshift,-4-.5) rectangle (7.5+\xshift,-4+.5);

\draw[arc] (v2)--(w1);
\draw[arc] (v2)--(w2);
\draw[arc] (v3)--(w3);
\draw[arc] (v4)--(w4);
\draw[arc] (v6)--(w6);
\draw[arc] (v6)--(w7);

\draw[fleche,black] (6+\xshift,-2)--(8+\xshift,-2);

\def\xshift{27}

\node[noeud,fill=lred] (u2) at (1+\xshift,0) {};
\node[noeud,fill=lgreen] (u3) at (2+\xshift,0) {};
\node[noeud,fill=lgreen] (u4) at (3+\xshift,0) {};

\draw[rounded corners,ultra thick,gray] (1-.5+\xshift,-.5) rectangle (1.5+\xshift,.5);
\draw[rounded corners,ultra thick,gray] (2-.5+\xshift,-.5) rectangle (3.5+\xshift,.5);

\node[noeud,fill=lred] (v1) at (0+\xshift,-2) {};
\node[noeud,fill=lred] (v2) at (1+\xshift,-2) {};
\node[noeud,fill=lgreen] (v3) at (2+\xshift,-2) {};
\node[noeud,fill=lgreen] (v4) at (3+\xshift,-2) {};
\node[noeud,fill=lgreen] (v5) at (4+\xshift,-2) {};
\node[noeud,fill=lorange] (v6) at (5+\xshift,-2) {};

\draw[arc] (u2)--(v1);
\draw[arc] (u2)--(v2);
\draw[arc] (u3)--(v3);
\draw[arc] (u4)--(v4);
\draw[arc] (u4)--(v5);

\draw[rounded corners,ultra thick,gray] (-.5+\xshift,-2-.5) rectangle (1.5+\xshift,-2+.5);
\draw[rounded corners,ultra thick,gray] (2-.5+\xshift,-2-.5) rectangle (4.5+\xshift,-2+.5);
\draw[rounded corners,ultra thick,gray] (5-.5+\xshift,-2-.5) rectangle (5.5+\xshift,-2+.5);

\node[noeud,fill=lred] (w1) at (0+\xshift,-4) {};
\node[noeud,fill=lred] (w2) at (1+\xshift,-4) {};
\node[noeud,fill=lgreen] (w3) at (2+\xshift,-4) {};
\node[noeud,fill=lgreen] (w4) at (3+\xshift,-4) {};
\node[noeud,fill=lorange] (w6) at (5+\xshift,-4) {};
\node[noeud,fill=lorange] (w7) at (6+\xshift,-4) {};
\node[noeud,fill=lblue] (w8) at (7+\xshift,-4) {};

\draw[rounded corners,ultra thick,gray] (-.5+\xshift,-4-.5) rectangle (1.5+\xshift,-4+.5);
\draw[rounded corners,ultra thick,gray] (2-.5+\xshift,-4-.5) rectangle (3.5+\xshift,-4+.5);
\draw[rounded corners,ultra thick,gray] (5-.5+\xshift,-4-.5) rectangle (6.5+\xshift,-4+.5);
\draw[rounded corners,ultra thick,gray] (7-.5+\xshift,-4-.5) rectangle (7.5+\xshift,-4+.5);

\draw[arc] (v2)--(w1);
\draw[arc] (v2)--(w2);
\draw[arc] (v3)--(w3);
\draw[arc] (v4)--(w4);
\draw[arc] (v6)--(w6);
\draw[arc] (v6)--(w7);

\end{tikzpicture}
    \caption{Illustration of recursive calls of \textsc{SplitChildren} on bags (only the $B_1$ part of bags is represented) to help the proof of Proposition~\ref{prop:split_children_complexity}. The color of nodes indicates the parent-child relationship between sets $S_u$ and $\mathcal{C}_u$ through successive calls. The starting bag is divided in two. The middle bag is successively divided twice: one for the red recursive call and one for the green recursive call (the orange nodes forming the remaining nodes). Finally, the bottom bag is successively divided three times: one for the red recursive call, one for the green recursive call, and one for the orange recursive call (the blue node forming the remaining node).}
    \label{fig:proof_split_children}
\end{figure}
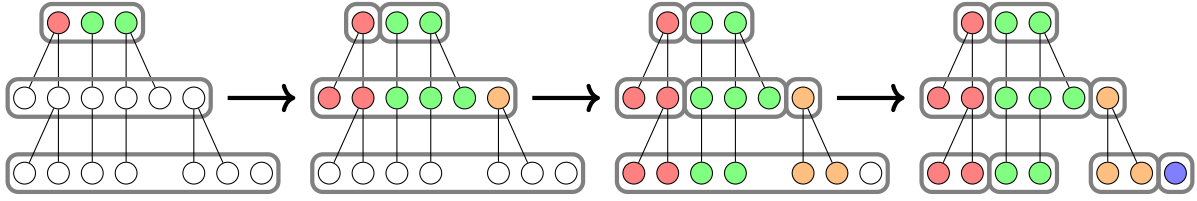

\paragraph{Experimental overall complexity} Given the intrinsic complexity of the tree ciphering isomorphism problem \cite{booth1979problems}, the worst-case complexity of the search algorithm that we propose in this paper is not relevant. However, a reasonable average time-complexity can be expected. Here we explore this question from simulated data, first when the trees are isomorphic $T_1\sim T_2$, then when they are not $T_1\nsim T_2$.

We suspect that the proportion of different labels found in the trees plays a major role in the complexity. Indeed, if all labels are different or if they are all equal, the problem becomes trivial: in those cases, any tree isomorphism is a tree ciphering. The simulation scheme therefore involves two parameters: the size of the trees $n$ and the proportion of different labels $1/n\leq p\leq 1$. 

The tree $T_1$ is simulated as follows. Its topology is a random recursive tree  with $n$ nodes \cite{zhang2015number}. $N=\max(\lfloor pn \rfloor,1)$ of them are picked at random and receive labels $1,\dots,N$; then a random label between $1$ and $N$ is assigned to each of the remaining nodes. The exact proportion of labels is $N/n\simeq p$.

The simulation of the tree $T_2$ depends on the considered scenario. In order to analyse the case $T_1\sim T_2$, $T_2$ is obtained by shuffling the children of each node of $T_1$. To test the case $T_1\nsim T_2$, $T_2$ shares the topology of $T_1$ but the labels are randomly shuffled. If $T_1$ and $T_2$ are still isomorphic, the operation is repeated. This procedure preserves the histogram of labels: the nonexistence of a tree ciphering will not be detected at step (i) of the algorithm. It should be noted that, if $p=1/n$ or $p=1$, it is impossible to build a tree $T_2$ that is not isomorphic to $T_1$ by following this method.

In each of these two cases, $500$ couples of trees $(T_1,T_2)$ were simulated for each pair $(n,p)$, with $n$ from $50$ to $500$ by step of $50$ and $p$ from $0.1$ to $0.9$ by step of $0.1$ (extreme values $p=1/n$ and $p=1$ are added only when $T_1\sim T_2$), which makes a total of $100\,000$ samples. The search algorithm was implemented in the \texttt{Python} library \texttt{treex} \cite{azais2019treex} and run on a Macbook Pro laptop with 2.6 GHz Intel Core i7 processors and 32 GB of RAM. Computation times required to test if trees $T_1$ and $T_2$ are isomorphic are presented in Figure~\ref{fig:treeciphering:comptimes}.

\begin{figure}[p!]
\thisfloatpagestyle{empty}
    \centering

     \begin{subfigure}[b]{\textwidth}
    \centering
        \caption*{\normalsize $T_1\sim T_2$}
    \label{fig:treeciphering:comptimes:isom}
    \begin{tabular}{ccc}
    \footnotesize\phantom{abcdefg}$p\leq1/3$
    & \footnotesize\phantom{abcde}$1/3<p\leq2/3$
    &\footnotesize\phantom{abcdef}$p>2/3$ \\
    \includegraphics[width=0.3\textwidth]{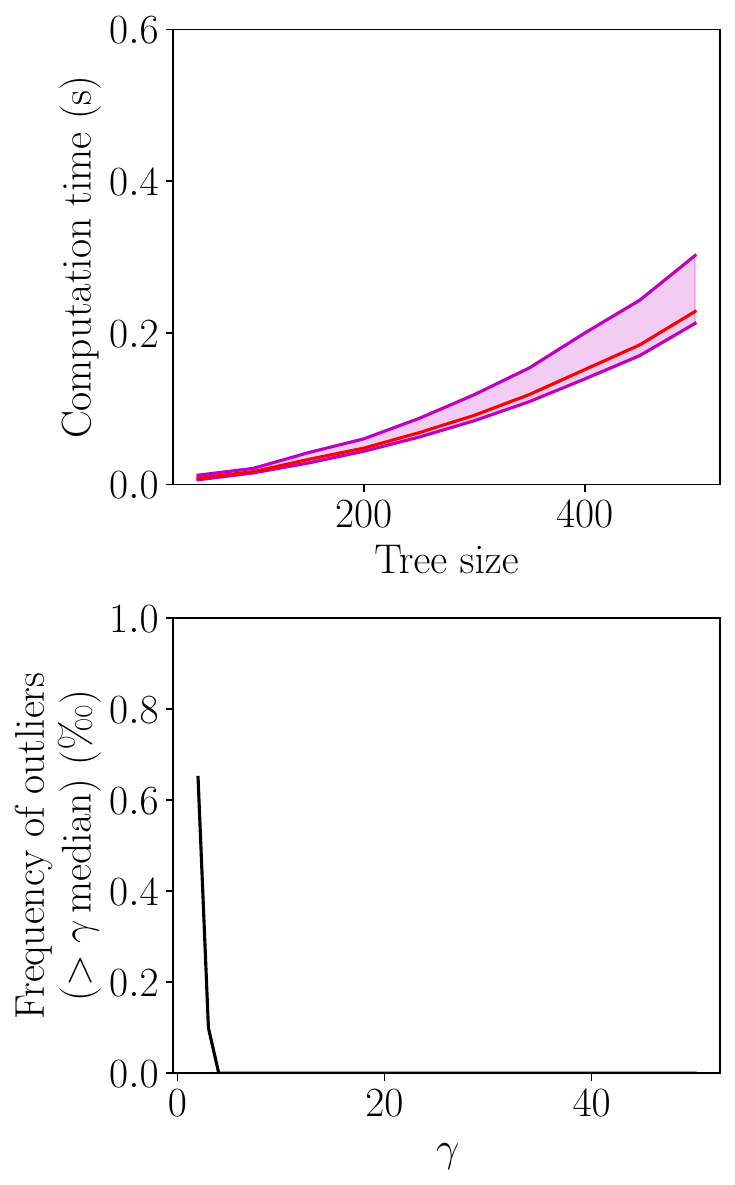}
    &  \includegraphics[width=0.3\textwidth]{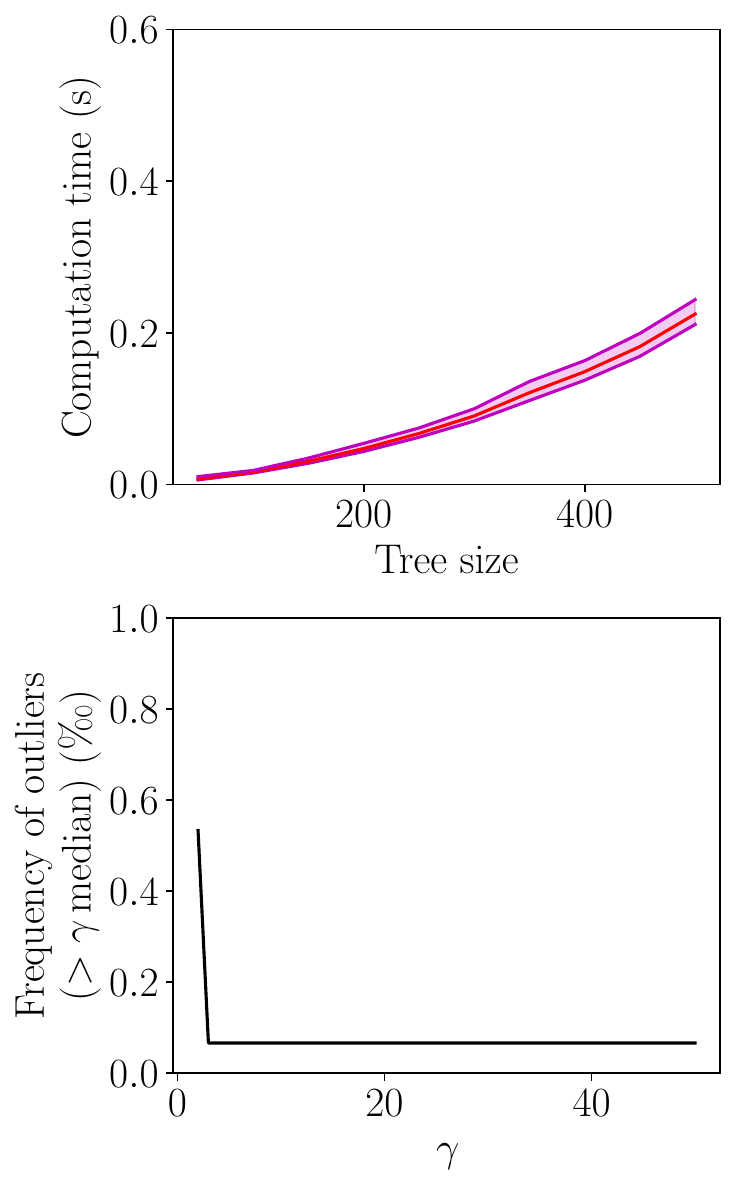}
    &  \includegraphics[width=0.3\textwidth]{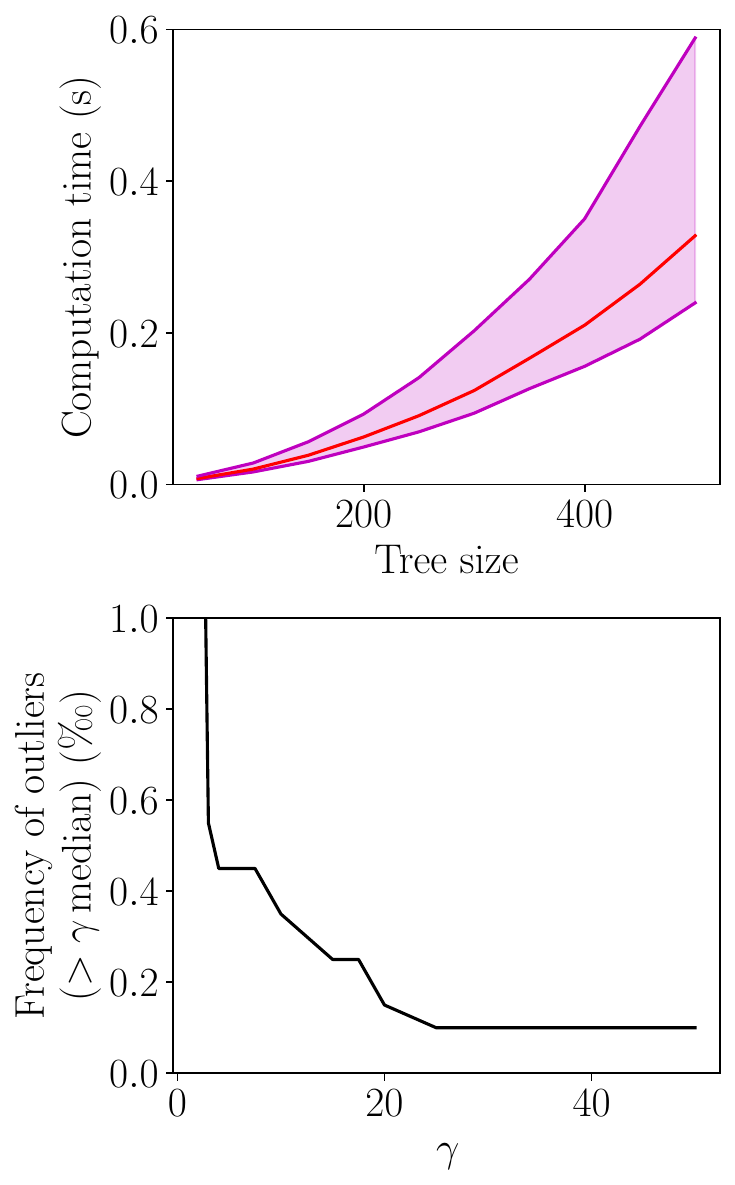}
    \end{tabular}

     \end{subfigure}

\bigskip

\begin{subfigure}[b]{\textwidth}
    \centering
        \caption*{\normalsize $T_1\nsim T_2$}
    \label{fig:treeciphering:comptimes:notisom}
    \begin{tabular}{ccc}
    \footnotesize\phantom{abcdefg}$p\leq1/3$
    & \footnotesize\phantom{abcde}$1/3<p\leq2/3$
    &\footnotesize\phantom{abcdef}$p>2/3$ \\
    \includegraphics[width=0.3\textwidth]{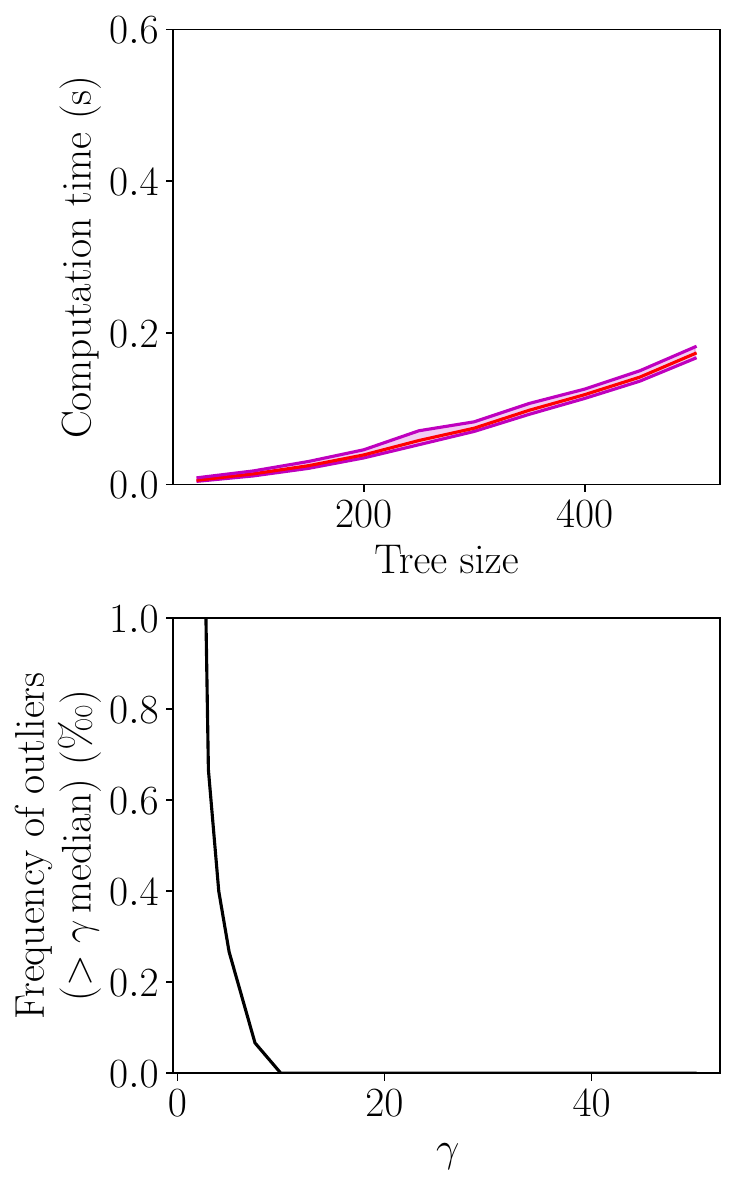}
    &  \includegraphics[width=0.3\textwidth]{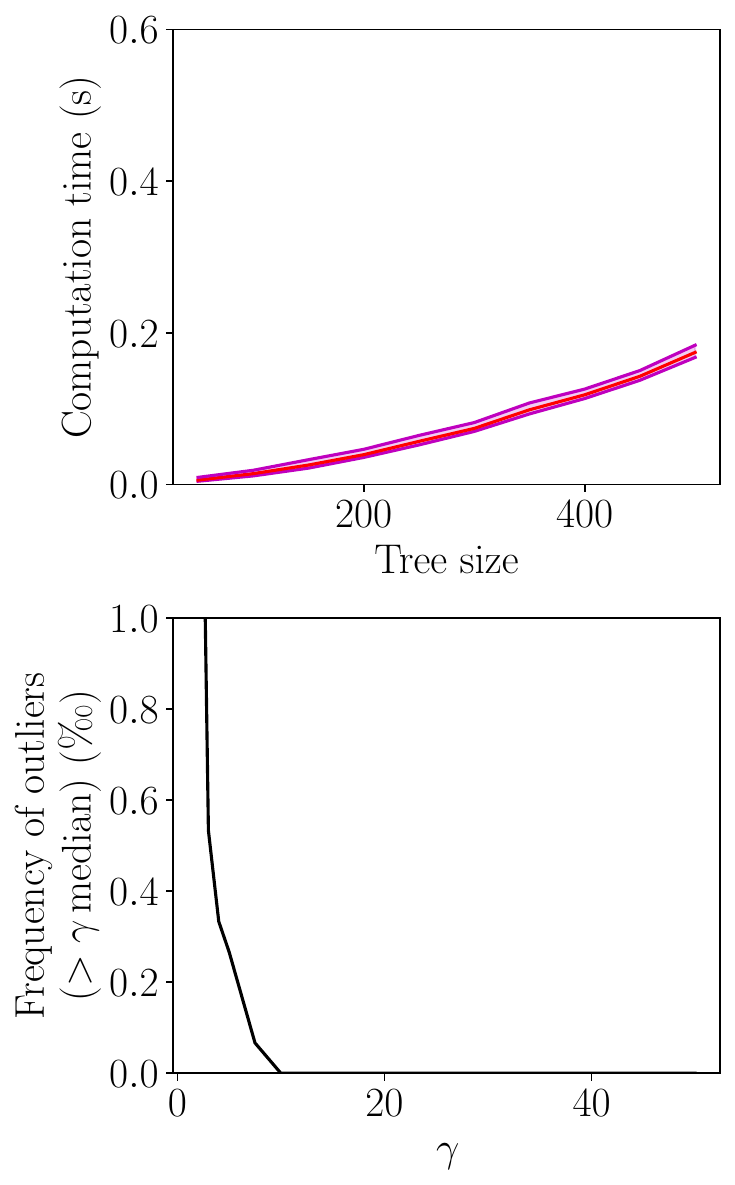}
    &  \includegraphics[width=0.3\textwidth]{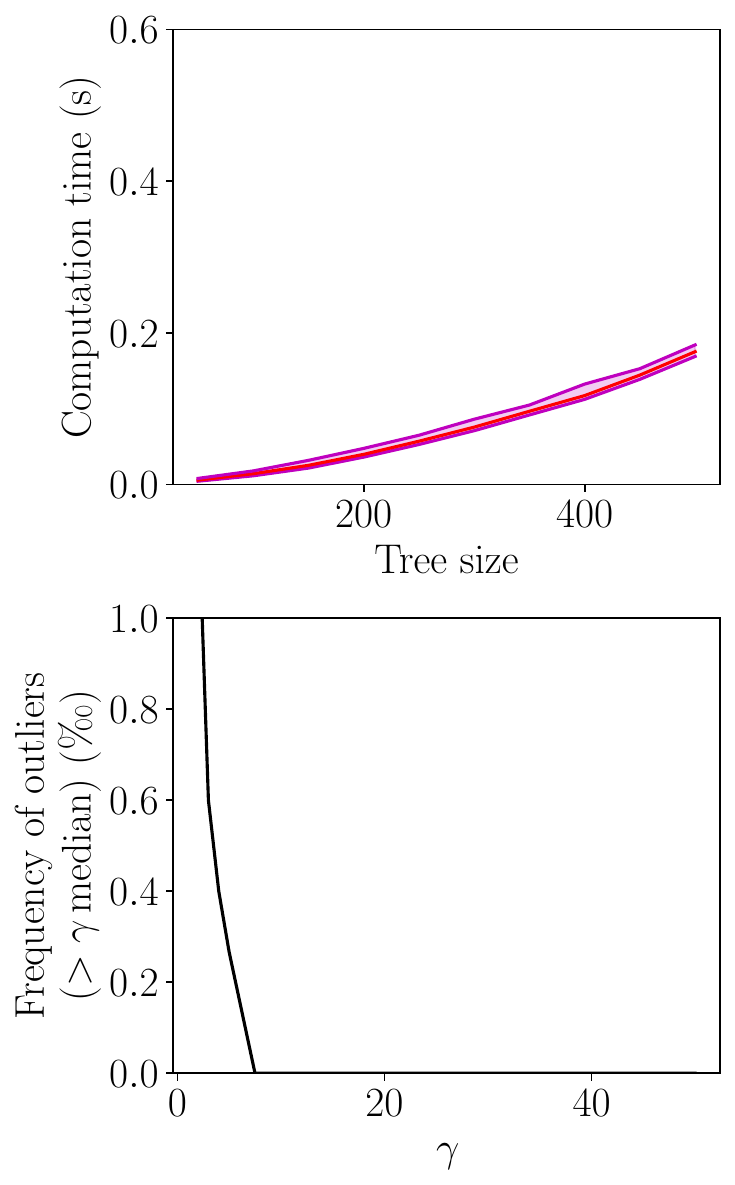}
    \end{tabular}

     \end{subfigure}
\caption{Computation time (5th quantile, median, and 95th quantile) required to decide on the tree ciphering isomorphism problem ($T_1\sim T_2$ (top) and $T_1\nsim T_2$ (bottom)) as a function of the tree size and the proportion $p$ of different labels, and proportion of outliers.}\label{fig:treeciphering:comptimes}
\end{figure}

In most cases, measured computation times are less than $0.3$ seconds for $95\%$ of the data, even for trees with $500$ nodes, which appears to be very reasonable for a problem with such a complexity. The worst-case scenario occurs when the trees are isomorphic with a proportion of different labels greater than $2/3$. the median computation time for a tree of size $500$ is around $0.35$ seconds and the 95th quantile is above $0.5$ seconds, which is still acceptable. However, the plot of the proportion of outliers shows that some examples resist, even after $50$ times the median computation time, which certainly corresponds to a deep backtracking search. The existence of such examples is of course expected but, in this proportion and with this effect, does not drastically limit the applicability of our approach, in particular to pattern mining.

\section{DAG-RW compression}
\label{s:dagrw}

In this section, we define DAG-RW compression of trees, a lossless compression scheme that eliminates their internal redundancies with respect to tree ciphering $\sim$. We also provide the related compression algorithm and discuss its complexity. Finally, we derive a method to detect frequent common subtrees with identical label distribution in a tree dataset and use it to analyse real data from the literature.

\subsection{Concept}

DAG-RW compression of a labelled unordered tree $T$ consists in a multigraph with labelled vertices and labelled edges $\red_\sim(T)=(V,E,L_V,L_E)$, where $V$ is the set of vertices, $E$ is the multiset of edges, $L_V$ is the label function of vertices, and $L_E$ is the label function of edges.

\paragraph{Topology} We first consider the topology $G$ of $\red_\sim(T)$. $G=(V,E)$ is the directed multigraph defined as follows.
\begin{itemize}
    \item The set of vertices $V$ is formed by the equivalence classes $[T(u)]_\sim$, denoted in short $[u]_\sim$, $u\in T$, with respect to the relation $\sim$.
    \item For any $(p,q)\in V^2$, $(p,q)$ appears in $E$ with multiplicity $k$ if there exists $(u,v_1,\dots,v_k)\in T^{k+1}$ such that $\{v_1,\dots,v_k\}\subset\children(u)$ and
\begin{align*}
[u]_\sim &=p,\\
\forall\,1\leq i\leq k,\quad [v_i]_\sim &= q.
\end{align*}
\end{itemize}

\begin{lemma}
The multigraph $G$ is a DAG with a unique source $S$.
\end{lemma}
\begin{proof}
    By construction, any path in $G$ makes the size of subtrees associated to visited vertices decrease strictly, which states that $G$ is a DAG. In addition, a vertex of $G$ without entering edge must be associated to a subtree whose root has no parent in $T$. $T$ is the only such subtree: this proves that $G$ has a unique source, which is associated to the whole tree $T$.
\end{proof}

\begin{lemma}\label{lem:dagrw:topology}
    $G$ is a lossless compression of the topology of $T$, i.e. $[T]_\simeq$ can be reconstructed from $G$.
\end{lemma}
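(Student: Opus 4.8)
The plan is to exhibit an explicit unfolding procedure that rebuilds a tree from $G$ and to prove, by induction following the DAG order, that this procedure returns a tree in the class $[T]_\simeq$. Since we have established that $G$ is a DAG with a unique source $S$, and $S$ is the class $[\troot(T)]_\sim=[T]_\sim$, the unfolding is well-initialised: I would associate to each vertex $p$ a rooted tree $\mathcal{U}(p)$ whose root carries, for every vertex $q$ and every edge $(p,q)\in E$ of multiplicity $k$, exactly $k$ children each bearing the recursively-built subtree $\mathcal{U}(q)$. The target is then $\mathcal{U}(S)$.

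Before running the induction, I would first justify that the outgoing edges of a vertex $p=[u]_\sim$ faithfully record the children of \emph{any} representative $u$; this is the step requiring care. If $u\sim u'$, a tree ciphering $\isom:T(u)\to T(u')$ restricts to every child subtree. The restriction is a tree isomorphism, and its induced label relation is the restriction of $f_\isom$ to $\attr(T(w))$, whose image is exactly $\{f_\isom(\overline{x}):x\in T(w)\}=\attr(T(\isom(w)))$; hence $\isom|_{T(w)}$ is again a tree ciphering and $T(w)\sim T(\isom(w))$. Consequently $\isom$ induces a $\sim$-class-preserving bijection between the children of $u$ and those of $u'$, so the number of children in a given class $q$ is the same for all representatives of $p$. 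This shows the multiplicity attached to $(p,q)$ is well-defined and equals, for any representative $u$ of $p$, the number of children $v$ of $u$ with $[v]_\sim=q$.

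With this in hand, I would prove by induction on the size of the subtrees, which is legitimate because any path in $G$ strictly decreases subtree size, the claim: for every vertex $p=[u]_\sim$ one has $\mathcal{U}(p)\simeq T(u)$. The base case is a sink $p$, corresponding to a leaf $u$: it has no outgoing edge, so $\mathcal{U}(p)$ is a single node, matching $T(u)$. For the inductive step, pick a representative $u$ of $p$; by the well-definedness above, the outgoing edges of $p$ enumerate precisely the $\sim$-classes of the children of $u$ with their multiplicities. Each such class $q=[v]_\sim$ is carried by a child of strictly smaller subtree size, so the induction hypothesis gives $\mathcal{U}(q)\simeq T(v)$. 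Assembling a root with these children yields $\mathcal{U}(p)\simeq T(u)$, a topological isomorphism being obtained by sending root to root and matching the children subtrees class by class (two rooted trees are $\simeq$-isomorphic exactly when their multisets of children subtrees agree up to $\simeq$).

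Applying the claim to the source gives $\mathcal{U}(S)\simeq T$, so the unfolding of $G$ recovers $T$ up to $\simeq$; equivalently $[T]_\simeq$ is reconstructible from $G$, which is the assertion. I expect the only genuine obstacle to be the well-definedness step, namely checking that $\sim$-equivalent nodes have identically $\sim$-distributed children; the remainder is the routine verification that DAG unfolding inverts DAG compression, exactly as in the classical topological case $\red_\simeq$ \cite{valiente2001}.
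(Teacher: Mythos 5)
Your proof is correct, and its skeleton is the same as the paper's: a recursive unfolding of $G$ whose correctness is established by induction. The paper inducts on the height of $T$, sorting the children of $\troot(T)$ by $\sim$-class and observing that the sub-multigraph hanging at each class $p_i$ is $\red_\sim(T(v_i))$; you instead induct vertex-by-vertex along the DAG order (on the common size of the subtrees in a class) with an explicit unfolding map $\mathcal{U}$. The genuinely different ingredient is your well-definedness lemma: you prove that a tree ciphering $\isom:T(u)\to T(u')$ restricts to a tree ciphering on each child subtree $T(w)$, so that $\sim$-equivalent nodes have identically $\sim$-distributed children, and hence the multiplicity of an edge $(p,q)$ does not depend on the chosen representative of $p$. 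The paper never states or proves this; yet its proof relies on it twice implicitly, first when asserting ``by definition'' that $G$ has $n_i$ edges from $S$ to $p_i$, and second (more seriously) when claiming that the multigraph made of $p_i$ and its descendants in $G$ equals $\red_\sim(T(v_i))$ --- edge multiplicities below $p_i$ are computed from representatives living anywhere in $T$, not only inside $T(v_i)$, so the identification of the two multigraphs is exactly a representative-independence statement. Indeed, the paper's definition of $E$ via ``there exists $(u,v_1,\dots,v_k)$'' is only unambiguous once this invariance is known. So your route buys rigor at a point where the paper is terse, at the cost of a somewhat longer argument; the paper's height induction is shorter precisely because it silently works with the particular representatives attached to the root.
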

\begin{proof}
One sorts the children of $\troot(T)$ by their equivalence class (with respect to tree ciphering): $n_i$ children of $\troot(T)$ are of class $p_i$ and $v_i$ denotes one of these children. By definition, $\red_\sim(T)$ has $n_i$ edges from its source $S$ to $p_i$. Furthermore, the multigraph made of $p_i$ and all its descendants in $G$ is $\red_\sim(T(v_i))$.
Consequently, by induction on the height of $T$, if the topology of $T(v_i)$ can be reconstructed from $\red_\sim(T(v_i))$, then the topology of $T$ can be obtained from $\red_\sim(T)$. Initialisation of induction at height $0$ is trivial, which states the result.
\end{proof}

\paragraph{Selection of a section} At this stage of the definition of $\red_\sim(T)$, there is no conceptual difference with DAG compression of trees based on equivalence relations $\simeq_l$ and $\simeq$, as given in Subsection~\ref{ss:intro:dagcompression}. Indeed, the set of vertices is the set of equivalence classes of subtrees of $T$ with respect to the considered equivalence relation, and the edges reflect the connection between them in $T$. However, unlike conventional DAG compression, $\red_\sim(T)$ needs to take into account labels of $T$ up to a cipher, which requires to select a specific section.

\begin{lemma}\label{lem:sectionexists}
    There exists a section $s:V\to T$ such that, for any $q\in V$,
    $$\exists\,p\in V,~(p,q)\in E~\Rightarrow~s(q)\in\children(s(p)).$$
    In the rest of the paper, $s$ denotes such a section.
\end{lemma}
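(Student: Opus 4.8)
The goal is to construct a section $s:V\to T$ that is "downward compatible" with the DAG structure: whenever there is an edge $(p,q)\in E$, the chosen representative $s(q)$ must be a genuine child of $s(p)$ in $T$. The plan is to build $s$ by induction on the height of the subtrees represented by the vertices, starting from the source $S$ (which represents the whole tree $T$) and working downward. A natural choice is to set $s(S)=\troot(T)$. The key is that once we have committed to a representative $s(p)$ for a vertex $p$, the children of $s(p)$ in $T$ provide representatives for exactly the vertices $q$ adjacent to $p$ via an outgoing edge, and these can be used to define $s(q)$.

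\textbf{The induction.}
First I would order the vertices of $V$ by the height of their associated subtree (equivalently, since $G$ is a DAG by the previous lemma, by a topological order in which edges go from larger to strictly smaller subtrees). Set $s(S)=\troot(T)$ as the base of the construction. For the inductive step, suppose $q\in V$ is a vertex such that some edge $(p,q)\in E$ exists with $s(p)$ already defined. By the very definition of $E$ in the topology construction, the multiplicity of $(p,q)$ counts children $v$ of nodes $u$ with $[u]_\sim=p$ and $[v]_\sim=q$. Since $s(p)$ is one particular node with $[s(p)]_\sim=p$, and since any two nodes in the same $\sim$-class have topologically isomorphic subtrees (indeed $\sim$ refines $\simeq$ by \eqref{eq:eqrels}), the node $s(p)$ must itself have a child $v$ with $[v]_\sim=q$. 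I would then \emph{define} $s(q):=v$ for such a child; this choice satisfies $s(q)\in\children(s(p))$ by construction, which is exactly the required property.

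\textbf{Consistency and well-definedness.}
The delicate point is that a single vertex $q$ may be reachable from several parents $p$, or via several distinct already-processed edges, so one must check that the definition of $s(q)$ does not conflict and that $s$ remains a valid section, i.e. that $[s(q)]_\sim=q$ for every $q$. Well-definedness is ensured by processing vertices one at a time in the topological order and, for each $q$, committing to a single representative the first time $q$ is encountered as the target of an edge from an already-assigned parent; subsequent edges into $q$ impose no new constraint because the implication in the statement only requires \emph{existence} of a compatible parent assignment, not compatibility with every parent simultaneously. That $s$ is a section (each $q$ receives a representative in its own class) follows because we only ever assign $s(q)$ to a node $v$ with $[v]_\sim=q$. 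Every vertex other than $S$ has at least one incoming edge (since $S$ is the unique source), so every vertex eventually gets assigned.

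\textbf{Main obstacle.}
The step I expect to require the most care is verifying that $s(p)$ really does possess a child in class $q$ whenever the \emph{aggregate} edge $(p,q)$ exists. The edge is created from \emph{some} node $u$ in class $p$ having a child in class $q$, but $s(p)$ is a possibly different node of the same class. The resolution is that $\sim$-equivalence gives a tree ciphering $\isom$ from $T(u)$ onto $T(s(p))$ (or vice versa), and a ciphering is in particular a topological isomorphism preserving the child relation and the $\simeq$-classes of subtrees; hence $\isom$ maps the child $v$ of $u$ (with $[v]_\sim=q$) to a child $\isom(v)$ of $s(p)$ whose subtree is $\sim$-equivalent to $T(v)$, so $[\isom(v)]_\sim=q$. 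This is precisely where the refinement \eqref{eq:eqrels} and the ciphering structure are used, and it is the crux of why the section can be chosen compatibly with the DAG edges.
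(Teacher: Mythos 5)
Your proof is correct and follows essentially the same route as the paper: a top-down recursive construction with $s(S)=\troot(T)$ and, for each edge $(p,q)\in E$, $s(q)$ chosen among $\{v\in\children(s(p))\,:\,[v]_\sim=q\}$. The paper's proof is just terser — it leaves implicit both the non-emptiness of that candidate set (which your restriction-of-a-tree-ciphering argument justifies) and the well-definedness point about committing to a single representative when $q$ has several parents, which you spell out.
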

\begin{proof}
The section $s$ recursively built top-down by $s(S) = \troot(T)$ and, if $s(p)=u$ then, for any $q\in V$ such that $(p,q)\in E$, $s(q)$ is selected among the set $\{v\in\children(u)\,:\,[v]_\sim=q\}$, obeys the imposed conditions. 
\end{proof}

As aforementioned, the multiplicity of edge $(p,q)$ in $E$ is related to nodes $v$ of class $q$ that share the same parent of class $p$. Now the section has been fixed, a natural candidate to be such a parent is $s(p)$. As a consequence, an edge of $G$ can be identified with the triplet $(p,q,v)$ where $v\in\children(s(p))$ such that $[v]_\sim=q$.

\paragraph{Labels} Labels of vertices of $G$ are defined through $s$: for any $q\in V$, $L_V(q) = \overline{s(q)}$. Edge $(p,q,v)$ carries the cipher $f_{\isom}$ related to the tree ciphering $\isom$ from $T(s(q))$ to $T(v)$ (which exists because $[v]_\sim=q$): $L_E(p,q,v) = f_{\isom}$. $G$ augmented with labels on its vertices and on its edges defines $\red_\sim(T)$.

\begin{lemma}\label{lem:dagrw:compo}~
    \begin{enumerate}[(i)]
        \item For any $q\neq S$, there exists an edge $(p,q,v)\in E$ such that $L_E(p,q,v)=\id$.
        \item Ciphers can be composed along paths of $\red_\sim(T)$.
    \end{enumerate}    
\end{lemma}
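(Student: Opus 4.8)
The plan is to handle the two items separately, in each case unwinding the definition of the edge labels $L_E$ and calling on the section of Lemma~\ref{lem:sectionexists}. For item (i) I would exhibit the edge arising from the section itself. Fix $q\neq S$; since $S$ is the unique source of $G$, the vertex $q$ has an incoming edge, and the section of Lemma~\ref{lem:sectionexists} is built top-down so that for (at least) one parent $p$ with $(p,q)\in E$ one has $s(q)\in\children(s(p))$. As $[s(q)]_\sim=q$, the triplet $(p,q,s(q))$ is a legitimate edge of $G$, and its label is by definition the cipher of a tree ciphering from $T(s(q))$ into $T(s(q))$, i.e.\ an automorphism of $T(s(q))$. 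Choosing the identity tree ciphering, always available from a tree to itself, yields $L_E(p,q,s(q))=\id$, which proves (i).

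For item (ii) the observation I would isolate first is a containment of alphabets forced by the edge identification: any edge $(p,q,v)$ has $v\in\children(s(p))$, so $T(v)$ is a subtree of $T(s(p))$ and hence $\attr(T(v))\subseteq\attr(T(s(p)))$, while by Definition~\ref{def:cipher_tree_isom} the attached cipher $L_E(p,q,v)$ is a bijection $\attr(T(s(q)))\to\attr(T(v))$. Thus the codomain of the cipher on any edge entering $q$ sits inside $\attr(T(s(p)))$, the domain of the cipher on any edge leaving $p$. Composition along a path $q_0\to q_1\to\dots\to q_m$ realised by edges $(q_{i-1},q_i,v_i)$ with ciphers $g_i\colon\attr(T(s(q_i)))\to\attr(T(v_i))$ then follows at once: for each $i$ one has $\attr(T(v_i))\subseteq\attr(T(s(q_{i-1})))=\dom(g_{i-1})$, so each consecutive pair $g_{i-1}\circ g_i$ is well-defined, and iterating produces the map
$$ g_1\circ g_2\circ\dots\circ g_m\colon\attr(T(s(q_m)))\longrightarrow\attr(T(s(q_0))), $$
injective as a composite of label bijections restricted to subalphabets. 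This establishes (ii).

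The routine part is the bookkeeping of domains and codomains; the single point I expect to require care is exactly this matching in (ii). The composition goes through only because every edge decreases depth by passing to a child, which is what places the codomain alphabet of one cipher inside the domain alphabet of the next; this structural containment, rather than any computation, carries the argument. A minor subtlety in (i) is to notice that the section is constructed so that $s(q)$ is a child of $s(p)$, forcing the relevant ciphering to be a self-map and hence permitting the identity label.
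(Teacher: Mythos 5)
Your proposal is correct and follows essentially the same route as the paper: for (i) you use the section's top-down construction to exhibit the edge $(p,q,s(q))$ and label it with the trivial (identity) ciphering of $T(s(q))$, and for (ii) you use exactly the paper's key containment $\attr(T(w))\subset\attr(T(s(q)))$ for $w\in\children(s(q))$ to make consecutive ciphers composable, extended along the path. The only cosmetic difference is that you treat a general path directly where the paper does length two and invokes induction.
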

\begin{proof}
    (i) By definition of $s$, there exists an edge $(p,q,v)$ such that $s(q)\in\children(s(p))$. If one selects $v=s(q)$ (which is correct because $[s(q)]_\sim = q$), then the cipher carried by $(p,q,v)$ is related to the trivial isomorphism between $T(s(q))$ and itself: $L_E(p,q,v)=\id$.
    (ii) $\{(p,q,v),(q,r,w)\}$ denotes a path of length $2$ of $\red_\sim(T)$. By definition,
        \begin{align*}
        L_E(p,q,v)&:\attr(T(s(q)))\to\attr(T(v))\\
        L_E(q,r,w)&:\attr(T(s(r)))\to\attr(T(w))
        \end{align*}
        But $w\in\children(s(q))$, which implies $\attr(T(w))\subset\attr(T(s(q)))$. Then the composition $L_E(p,q,v)\circ L_E(q,r,w)$ is well-defined. By induction on the length, the result if true for any path.
\end{proof}

\begin{theorem}
$\red_\sim(T)$ defines a lossless compression of $T$.
\end{theorem}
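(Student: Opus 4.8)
The plan is to upgrade the topological reconstruction of Lemma~\ref{lem:dagrw:topology} with a simultaneous recovery of the labels, propagated through the ciphers stored on the edges. Since the multigraph $G$ already lets us rebuild $[T]_\simeq$, it suffices to exhibit, from $\red_\sim(T)$, the genuine label of every node of the reconstructed tree, thereby recovering $T$ up to $\simeq_l$ (which is what ``lossless compression of $T$'' means, exactly as for $\red_{\simeq_l}$ in Subsection~\ref{ss:intro:dagcompression}).

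I would define the reconstruction recursively, carrying along each source-to-vertex path a \emph{cumulative cipher}. Formally, I reconstruct a labelled subtree from a pair $(q,g)$, where $q\in V$ and $g$ is a cipher defined on $\attr(T(s(q)))$: assign the root the label $g(L_V(q))=g(\overline{s(q)})$, and for each out-edge $(q,r,w)$ of $q$ in $G$ (which, by the identification of edges with children of $s(q)$, ranges exactly over $\children(s(q))$), recurse on $(r,\,g\circ L_E(q,r,w))$. The whole tree is obtained by starting at the unique source $S$ with $g=\id$, for which $s(S)=\troot(T)$ and $T(s(S))=T$. Lemma~\ref{lem:dagrw:compo}(ii) is precisely what makes each composition $g\circ L_E(q,r,w)$ well-defined, since $\attr(T(w))\subset\attr(T(s(q)))$ guarantees that the codomain of the edge cipher lies inside the domain of $g$.

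The correctness claim, proved by induction on $\#T(s(q))$, is that the reconstruction of $(q,g)$ is $\simeq_l$-isomorphic to the tree $g\!\cdot\! T(s(q))$ obtained by relabelling each node $u$ of $T(s(q))$ by $g(\overline{u})$. The base case is a leaf $s(q)$, where the single produced label $g(\overline{s(q)})$ matches. For the inductive step, each edge $(q,r,w)$ carries the cipher $h=L_E(q,r,w)$ of a tree ciphering $T(s(r))\to T(w)$, so Definition~\ref{def:cipher_tree_isom} gives $h\!\cdot\! T(s(r))\simeq_l T(w)$ and hence $(g\circ h)\!\cdot\! T(s(r))\simeq_l g\!\cdot\! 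T(w)$; since $\#T(s(r))=\#T(w)<\#T(s(q))$ the induction hypothesis applies to the recursive call, and assembling the root $s(q)$ together with the reconstructed subtrees over its children reproduces $g\!\cdot\! T(s(q))$. Taking $q=S$ and $g=\id$ yields a tree $\simeq_l$-isomorphic to $T$, establishing that the reconstruction is lossless.

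The main obstacle is purely the cipher bookkeeping: the label a node receives must depend on the entire path back to $S$, through the accumulated cipher, rather than on its $\sim$-class alone. This is exactly what allows two distinct $\sim$-equivalent subtrees of $T$ to be assigned their genuinely different labels from a single stored vertex label together with the edge ciphers. The well-foundedness of the recursion and the matching of cipher domains along paths are handled respectively by the strict decrease of subtree size along edges and by Lemma~\ref{lem:dagrw:compo}(ii); once these are in place the induction is routine. Finally, Lemma~\ref{lem:dagrw:compo}(i) anchors the case $g=\id$, since it guarantees that the stored label $L_V(q)=\overline{s(q)}$ is a genuine node label realised by an edge carrying the identity cipher.
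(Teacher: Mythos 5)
Your proof is correct. Its engine is the same as the paper's --- labels are recovered by composing edge ciphers along source-rooted paths of $\red_\sim(T)$, with the section anchoring the reference subtrees and Lemma~\ref{lem:dagrw:compo} making the compositions legal --- but the organisation is genuinely different. The paper argues node by node inside $T$: for a node $v$ it splits the root-to-$v$ path at the last index where it follows the section, uses identity ciphers on that prefix (this is where Lemma~\ref{lem:dagrw:compo}(i) actually enters), and then must repair the fact that the tail of the path need not consist of children of section nodes, which it does by substituting a $\sim$-equivalent path $(w_{k+2},\dots,w_n)$ inside $T(s(p_{k+1}))$ and recursing on that substitution. Your unfolding operator on pairs $(q,g)$, with the invariant that its output is $\simeq_l$-isomorphic to $T(s(q))$ relabelled by $g$, proved by induction on $\#T(s(q))$, dissolves that difficulty: because the invariant is stated only up to $\simeq_l$, the witness subtree is allowed to change from $T(w)$ to the relabelled $T(s(r))$ at every level, so the substitution the paper performs by hand is absorbed automatically by the induction hypothesis. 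What the paper's route buys is an explicit one-to-one correspondence between nodes of $T$ and source-rooted paths of $\red_\sim(T)$; what yours buys is a genuine reconstruction algorithm with a crisp correctness invariant and no path bookkeeping. One cosmetic point: your final appeal to Lemma~\ref{lem:dagrw:compo}(i) is superfluous --- in your scheme the identity-cipher edges require no special treatment, being covered by the general inductive step with $h=\id$.
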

\begin{proof}
    As stated in Lemma~\ref{lem:dagrw:topology}, the topology of $T$ can be obtained from $\red_\sim(T)$: in other words, we are already able to reconstruct $T$ without its labels. As it is transparent in the proof, each vertex $p$ of $\red_\sim(T)$ corresponds to a topological subtree of $T$ in the following sense: the multigraph made of $p$ and all its descendants in $\red_\sim(T)$ is $\red_\sim(T(v))$ for some $v\in T$. Here, one aims to state that the label of node $v$ can be read along a unique path from the source of $\red_\sim(T)$ to vertex $p$.
    
    To this end, we consider $(v_k,\dots,v_n)$ a path of $T$ (i.e., for any $k\leq i\leq n-1$, $v_{i+1}\in\children(v_i)$) such that
    \begin{align*}
        \forall\,k\leq i\leq n,\quad    &[v_i]_\sim = p_i,\\
                                   &v_k        = s(p_k), \\
        \forall\,k+1\leq i\leq n,\quad  &v_i  \neq s(p_i).
    \end{align*}
    We are interested in how to read the label of the nodes of this path from $\red_\sim(T)$. There is no difficulty with the first element: it is obvious that $\overline{v}_k = \overline{s(p_k)} = L_V(p_k)$. By definition, $v_{k+1}\in\children(s(p_k))$, thus edge $(p_k,p_{k+1},v_{k+1})$ carries the cipher $L_E(p_k,p_{k+1},v_{k+1})$ that transforms labels of $T(s(p_{k+1}))$ into labels of $T(v_{k+1})$. For instance,
\begin{align*}
    \overline{v}_{k+1} &=L_E(p_k,p_{k+1},v_{k+1})(\overline{s(p_{k+1})}) \\
    &= L_E(p_k,p_{k+1},v_{k+1})(L_V(p_{k+1}))
\end{align*}
    Now, we complete the path $(v_k,\dots,v_n)$ so that it starts from the root: $(v_1,\dots,v_n)$ with $v_1 = \troot(T)$ and, for any $1\leq i\leq n$, $[v_i]_\sim = p_i$. It should be noted that one has necessarily $v_i = s(p_i)$ for any $1\leq i\leq k$ (by definition of $s$). Consequently, edge $(p_i,p_{i+1},s(p_{i+1}))$, $1\leq i<k$, carries the cipher $L_E(p_i,p_{i+1},s(p_{i+1})) = \id$ (see Lemma~\ref{lem:dagrw:compo}). Therefore, the labels of $T(v_{k+1})$ can be obtained by applying
$$\underbrace{L_E(p_1,p_2,v_2)}_{=\,\id}~\circ~\cdots~\circ~\underbrace{L_E(p_{k-1},p_k,v_k)}_{=\,\id}~\circ~L_E(p_k,p_{k+1},v_{k+1})$$
to labels of $T(s(p_{k+1}))$.

This reasoning does not apply to any path: the first element of the path ($v_{k+2},\dots,v_n)$ is not a child of $s(p_{k+1})$ (whereas $v_{k+1}$ is a child of $s(p_k)$). Nevertheless, since $[v_{k+1}]_\sim = p_{k+1}$, there exists a path $(w_{k+2},\dots,w_n)$ in $T(s(p_{k+1}))$ such that, for any $i$, $[w_i]_\sim = [v_i]_\sim = p_i$ and $w_{k+2}\in\children(s(p_{k+1}))$. As a consequence, labels of $T(w_{k+2})$ can be obtained by applying
$$L_E(p_{k+1},p_{k+2},w_{k+2})$$
to labels of $T(s(p_{k+2}))$, which is a subtree of $T(s(p_{k+1}))$.
Finally, labels of $T(v_{k+2})$ can be obtained by applying
$$L_E(p_1,p_2,v_2)~\circ~\cdots~\circ~L_E(p_{k-1},p_k,v_k)~\circ~L_E(p_k,p_{k+1},v_{k+1})~\circ~ L_E(p_{k+1},p_{k+2},w_{k+2})$$
to labels of $T(s(p_{k+2}))$, where the composition makes sense in light of Lemma~\ref{lem:dagrw:compo}. By recursive induction, this shows that there is a one-to-one correspondence between nodes of $T$ and paths of $\red_\sim(T)$ starting from its source. In addition, the label of any node of $T$ can be read along the related path of $G$.
\end{proof}

An example of DAG-RW compression is provided with Figure~\ref{fig:compression_label_exple}.

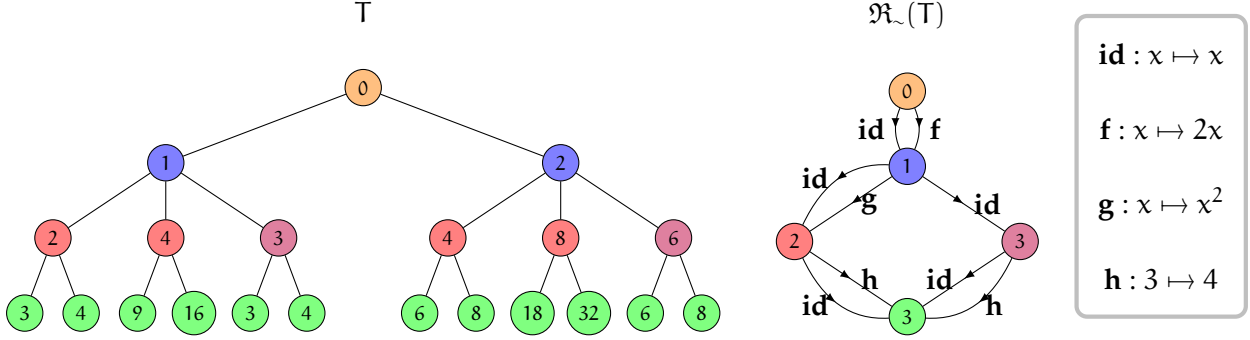
\begin{figure}[h]
\def\xscale{0.75}
\def\yscale{1}
\def\nodescale{0.7}
\centering
\begin{tikzpicture}[xscale=\xscale,yscale=\yscale]
\tikzstyle{noeud}=[draw,circle,fill=white,scale=\nodescale*1]
\tikzstyle{arc}=[->-,>=latex]
\tikzstyle{fleche}=[->,>=latex,red,thick]

\node[noeud,fill=lorange] (u0) at (0,3) {$0$};

\node[noeud,fill=lblue] (u1) at (-3.5,2) {$1$};
\node[noeud,fill=lblue] (u2) at (3.5,2) {$2$};

\node[noeud,fill=lred] (u3) at (-5.5,1) {$2$};
\node[noeud,fill=lred] (u4) at (-3.5,1) {$4$};
\node[noeud,fill=lpurple] (u5) at (-1.5,1) {$3$};
\node[noeud,fill=lred] (u6) at (1.5,1) {$4$};
\node[noeud,fill=lred] (u7) at (3.5,1) {$8$};
\node[noeud,fill=lpurple] (u8) at (5.5,1) {$6$};

\node[noeud,fill=lgreen] (u9) at (-6,0) {$3$};
\node[noeud,fill=lgreen] (u10) at (-5,0) {$4$};
\node[noeud,fill=lgreen] (u11) at (-4,0) {$9$};
\node[noeud,fill=lgreen] (u12) at (-3,0) {$16$};
\node[noeud,fill=lgreen] (u13) at (-2,0) {$3$};
\node[noeud,fill=lgreen] (u14) at (-1,0) {$4$};
\node[noeud,fill=lgreen] (u15) at (1,0) {$6$};
\node[noeud,fill=lgreen] (u16) at (2,0) {$8$};
\node[noeud,fill=lgreen] (u17) at (3,0) {$18$};
\node[noeud,fill=lgreen] (u18) at (4,0) {$32$};
\node[noeud,fill=lgreen] (u19) at (5,0) {$6$};
\node[noeud,fill=lgreen] (u20) at (6,0) {$8$};

\node (t1) at (0,4) {$T$};

\draw (u0)--(u1);
\draw (u0)--(u2);
\draw (u1)--(u3);
\draw (u1)--(u4);
\draw (u1)--(u5);
\draw (u2)--(u6);
\draw (u2)--(u7);
\draw (u2)--(u8);
\draw (u3)--(u9);
\draw (u3)--(u10);
\draw (u4)--(u11);
\draw (u4)--(u12);
\draw (u5)--(u13);
\draw (u5)--(u14);
\draw (u6)--(u15);
\draw (u6)--(u16);
\draw (u7)--(u17);
\draw (u7)--(u18);
\draw (u8)--(u19);
\draw (u8)--(u20);

\end{tikzpicture}\hfill
\begin{tikzpicture}[xscale=\xscale,yscale=\yscale]
\tikzstyle{noeud}=[draw,circle,fill=white,scale=\nodescale*1]
\tikzstyle{arc}=[->-,>=latex]
\tikzstyle{fleche}=[->,>=latex,red,thick]

\node[noeud,fill=lorange] (u0) at (0,3) {$0$};
\node[noeud,fill=lblue] (u1) at (0,2) {$1$};
\node[noeud,fill=lred] (u2) at (-2,1) {$2$};
\node[noeud,fill=lpurple] (u2b) at (2,1) {$3$};
\node[noeud,fill=lgreen] (u3) at (0,0) {$3$};

\node (t1) at (0,4) {$\red_\sim(T)$};

\draw[arc] (u0) to[bend right=30] node[left,midway]{$\id$} (u1);
\draw[arc] (u0) to[bend right=-30] node[right,midway] {$\textbf{f}$} (u1);
\draw[arc] (u1) to[bend right=30] node[left,midway]{$\id$} (u2);
\draw[arc] (u1) to node[right,midway] {$\textbf{g}$} (u2);
\draw[arc] (u1) to node[right,midway]{$\id$} (u2b);

\draw[arc] (u2) to[bend right=30] node[left,midway]{$\id$} (u3);
\draw[arc] (u2) to node[right,midway] {$\textbf{h}$} (u3);
\draw[arc] (u2b) to node[left,midway]{$\id$} (u3);
\draw[arc] (u2b) to[bend right=-30] node[right,midway] {$\textbf{h}$} (u3);

\def\x{3}

\node (u1) at (\x+1.5,3.5) {$\id : x \mapsto x$};
\node (u1) at (\x+1.5,2.5) {$\textbf{f} : x \mapsto 2x$};
\node (u2) at (\x+1.5,1.5) {$\textbf{g} : x \mapsto x^2$};;
\node (u3) at (\x+1.5,0.5) {$\textbf{h} : 3 \mapsto 4$};

\draw[ultra thick, draw=lightgray, rounded corners] (\x+0,0) rectangle (\x+3,4);

\end{tikzpicture}
\caption{A labelled tree $T$ (left) and its DAG-RW compression $\red_\sim(T)$ (right). Nodes are colored according to their equivalence class with respect to $\sim$. For the sake of readability, the ciphers $\id$, $\textbf{f}$ and $\textbf{g}$ are given as formulae but should be defined only on the alphabet of the subtree to which they point. In particular, the notation $\id$ is used several times but the corresponding ciphers are different.}
\label{fig:compression_label_exple}
\end{figure}

\begin{remark}
As explained in Subsection~\ref{ss:intro:dagcompression}, DAG compression $\red_\ast(T)$, $\ast\in\{\simeq,\simeq_l\}$, of a tree $T$, allows its reconstruction up to related isomorphism $\ast$. It is noteworthy that the situation is different for DAG-RW compression, which is lossless with respect to $\simeq_l$ while built from tree ciphering $\sim$: the addition of labels on vertices and edges makes possible the exact reconstruction of node labels. 
\end{remark}

\subsection{Compression algorithm}

We present here how to build the compressed version $\red_\sim(T)$ of $T$ from the topological compression $\red_\simeq(T)$ which is computable in linear time \cite{valiente2001}. Since the topology is already evaluated in $\red_\simeq(T)$, only the labels of $T$ need to be taken into account to construct $\red_\sim(T)$.

The main idea of the algorithm is, in a top-down approach, to replace each vertex $q$ of $\red_\simeq(T)$ by new vertices corresponding to the equivalence classes with respect to $\sim$ of subtrees of $T$ represented by $q$. By construction, each of the new vertices $r$ has the same parents $p$ as $q$. However, the value of the section $s(r)$ and the labels $L_V(r)$ and $L_E(p,r,v)$, $[v]_\sim=r$, need to be defined.

Since we aim to edit $\red_\simeq(T)$ from top to bottom, we assume that all the ascendants of $q$ correspond to vertices of $\red_\sim(T)$ and that $s$ is defined on them. To obtain the new vertices $r$ that will replace $q$, we first consider the set $A_q$ of nodes $u$ of $T$ such that $u$ is a child of $s(p)$ of class $[u]_\simeq = q$ for any parent $p$ of $q$. $A_q$ is then partitioned according to the equivalence classes with respect to $\sim$ using the algorithm for tree ciphering isomorphism given in Section~\ref{s:treeciphering}: $\mathcal{P}$ denotes this partition and the representative $\sigma(P)$ of the set $P\in\mathcal{P}$ is arbitrarily chosen to define section $s$ at this level.

Finally, the vertex $q$ is replaced by new vertices $r_P$, $P\in\mathcal{P}$, with label of $\sigma(P)$ and section $s(r_P)=\sigma(P)$, which obeys the condition of Lemma~\ref{lem:sectionexists}. For each node $v\in P$, we know by construction that there exists a tree ciphering $\isom$ between $T(v)$ and $T(s(r_p))$. By the recurrence hypothesis, the vertex corresponding to $[\parent(v)]_\sim$ exists: an edge bearing the cipher $f_{\isom}$ related to $\isom$ from it to $r_p$ can be created.

Algorithm~\ref{algo:dag_rw} provides the detail of the compression method, while Figure~\ref{fig:compression_label_iter} illustrates it on the tree of Figures~\ref{fig:dag_compression_example} and \ref{fig:compression_label_exple}.

\begin{algorithm}[ht!]
\SetKw{KwFrom}{from}
\SetKw{KwST}{so that}
\caption{\textsc{DAG-RW}}\label{algo:dag_rw}
\KwIn{$T$, $\red_\simeq(T)$}
\KwOut{$\red_\sim(T)$}
Init. $R$ as $\red_\simeq(T)$ and $s:\emptyset\to\emptyset$\\
\ForEach{\emph{vertex $q$ of $R$ from its source in breadth-first search order}}{
\eIf{$q$ \emph{is the source of} $R$}{
$A_q\gets\lbrace\troot(T) \rbrace$ 
}{
$A_q\gets\displaystyle\bigcup_{\text{edge}\,(p,q)\,\text{of}\,R} \lbrace u \in \children(s(p)) : [u]_\simeq =q \rbrace $
}
Init. $\mathcal{P}$ as an empty list and $\sigma: \emptyset \to \emptyset$\\
\For{$u\in A_q$}{
\eIf{$\exists\,P\in\dom(\sigma)$ \emph{such that} $T(u)\sim T(\sigma(P))$}{
Add $u$ to $P$\\
Store the related cipher $f_{\sigma(P),u}$
}{
$P\gets\{u\}$\\
Add $P$ to $\mathcal{P}$\\
$\textsc{ExtBij}(\sigma,P,u)$
}}
\For{$P\in\mathcal{P}$}{
Add a new vertex $r_P$ with label $\overline{\sigma(P)}$ to $R$\\
Identify $r_P$ as $[\sigma(P)]_\sim$\\
\ForEach{\emph{edge $(q,c)$ of $R$}}{
Add a new edge from $r_P$ to $c$
}
\For{$u\in P$}{
Add a new edge from $[\parent(u)]_\sim$ to $r_P$ with label $f_{\sigma(P),u}$\\
}
$\textsc{ExtBij}(s,r_P,\sigma(P))$
}
Delete vertex $q$}
\Return{$R$}
\end{algorithm}

\begin{remark}
The size of the piece of information stored on the edges (line 16 of Algorithm~\ref{algo:dag_rw}) is linear in the number of labels of the related subtree. In some contexts, the ciphers can be encoded as formulae instead of by listing values. For instance, one of the ciphers is the identity from the alphabet of the subtree to itself. It can also be encoded as an abstract identity function that returns its argument without specifying the domain. Such an approach can reduce the amount of information stored on the edges. An example is given in Figure~\ref{fig:compression_label_exple}.
\end{remark}

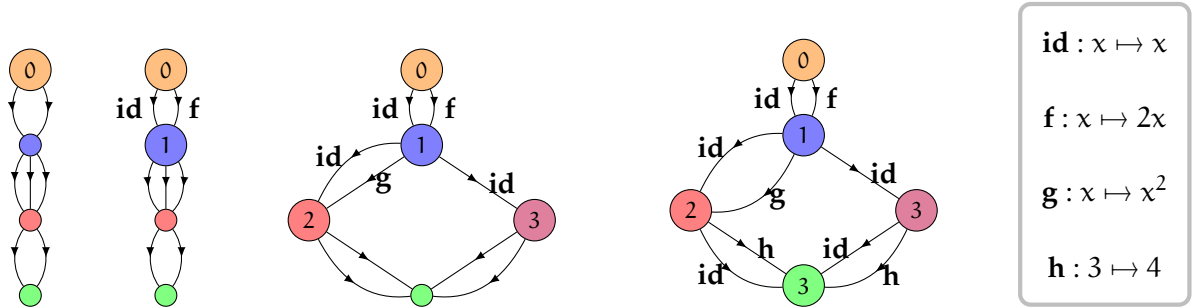
\begin{figure}[h]
\def\xscale{0.75}
\def\yscale{1}
\def\nodescale{0.8}
\begin{subfigure}[t]{0.09\textwidth}
\centering
\begin{tikzpicture}[xscale=\xscale,yscale=\yscale]
\tikzstyle{noeud}=[draw,circle,fill=white,scale=\nodescale*1]
\tikzstyle{arc}=[->-,>=latex]
\tikzstyle{fleche}=[->,>=latex,red,thick]

\node[noeud,fill=lorange] (u0) at (0,3) {$0$};
\node[noeud,fill=lblue] (u1) at (0,2) {};
\node[noeud,fill=lred] (u2) at (0,1) {};
\node[noeud,fill=lgreen] (u3) at (0,0) {};

\draw[arc,white] (u0) to[bend right=30] node[left,midway,white]{$\id$} (u1);

\draw[arc] (u0) to[bend left=45] (u1);
\draw[arc] (u0) to[bend right=45] (u1);
\draw[arc] (u1) to[bend left=45] (u2);
\draw[arc] (u1) to[bend right=45] (u2);
\draw[arc] (u1) to (u2);
\draw[arc] (u2) to[bend left=45] (u3);
\draw[arc] (u2) to[bend right=45] (u3);
\end{tikzpicture}
\end{subfigure}\hfill
\begin{subfigure}[t]{0.09\textwidth}
\centering
\begin{tikzpicture}[xscale=\xscale,yscale=\yscale]
\tikzstyle{noeud}=[draw,circle,fill=white,scale=\nodescale*1]
\tikzstyle{arc}=[->-,>=latex]
\tikzstyle{fleche}=[->,>=latex,red,thick]

\node[noeud,fill=lorange] (u0) at (0,3) {$0$};
\node[noeud,fill=lblue] (u1) at (0,2) {$1$};
\node[noeud,fill=lred] (u2) at (0,1) {};
\node[noeud,fill=lgreen] (u3) at (0,0) {};

\draw[arc] (u0) to[bend right=30] node[left,midway]{$\id$} (u1);
\draw[arc] (u0) to[bend right=-30] node[right,midway] {$\textbf{f}$} (u1);

\draw[arc] (u1) to[bend left=45] (u2);
\draw[arc] (u1) to[bend right=45] (u2);
\draw[arc] (u1) to (u2);
\draw[arc] (u2) to[bend left=45] (u3);
\draw[arc] (u2) to[bend right=45] (u3);
\end{tikzpicture}
\end{subfigure}\hfill
\begin{subfigure}[t]{0.28\textwidth}
\centering
\begin{tikzpicture}[xscale=\xscale,yscale=\yscale]
\tikzstyle{noeud}=[draw,circle,fill=white,scale=\nodescale*1]
\tikzstyle{arc}=[->-,>=latex]
\tikzstyle{fleche}=[->,>=latex,red,thick]

\node[noeud,fill=lorange] (u0) at (0,3) {$0$};
\node[noeud,fill=lblue] (u1) at (0,2) {$1$};
\node[noeud,fill=lred] (u2) at (-2,1) {$2$};
\node[noeud,fill=lpurple] (u2b) at (+2,1) {$3$};
\node[noeud,fill=lgreen] (u3) at (0,0) {};

\draw[arc] (u0) to[bend right=30] node[left,midway]{$\id$} (u1);
\draw[arc] (u0) to[bend right=-30] node[right,midway] {$\textbf{f}$} (u1);
\draw[arc] (u1) to[bend right=30] node[left,midway]{$\id$} (u2);
\draw[arc] (u1) to node[right,midway] {$\textbf{g}$} (u2);
\draw[arc] (u1) to node[right,midway]{$\id$} (u2b);

\draw[arc] (u2) to (u3);
\draw[arc] (u2) to[bend right=30] (u3);
\draw[arc] (u2b) to[bend left=30] (u3);
\draw[arc] (u2b) to (u3);

\end{tikzpicture}
\end{subfigure}\hfill
\begin{subfigure}[t]{0.28\textwidth}
\centering
\begin{tikzpicture}[xscale=\xscale,yscale=\yscale]
\tikzstyle{noeud}=[draw,circle,fill=white,scale=\nodescale*1]
\tikzstyle{arc}=[->-,>=latex]
\tikzstyle{fleche}=[->,>=latex,red,thick]

\node[noeud,fill=lorange] (u0) at (0,3) {$0$};
\node[noeud,fill=lblue] (u1) at (0,2) {$1$};
\node[noeud,fill=lred] (u2) at (-2,1) {$2$};
\node[noeud,fill=lpurple] (u2b) at (+2,1) {$3$};
\node[noeud,fill=lgreen] (u3) at (0,0) {$3$};

\draw[arc] (u0) to[bend right=30] node[left,midway]{$\id$} (u1);
\draw[arc] (u0) to[bend right=-30] node[right,midway] {$\textbf{f}$} (u1);
\draw[arc] (u1) to[bend right=30] node[left,midway]{$\id$} (u2);
\draw[arc] (u1) to[bend right=-30] node[right,midway] {$\textbf{g}$} (u2);
\draw[arc] (u1) to node[right,midway]{$\id$} (u2b);

\draw[arc] (u2) to[bend right=30] node[left,midway]{$\id$} (u3);
\draw[arc] (u2) to node[right,midway] {$\textbf{h}$} (u3);
\draw[arc] (u2b) to node[left,midway]{$\id$} (u3);
\draw[arc] (u2b) to[bend right=-30] node[right,midway] {$\textbf{h}$} (u3);

\end{tikzpicture}
\end{subfigure}\hfill
\begin{subfigure}[t]{0.15\textwidth}
\centering
\begin{tikzpicture}[xscale=\xscale,yscale=\yscale]
\tikzstyle{noeud}=[draw,circle,fill=white,scale=\nodescale*1]
\tikzstyle{arc}=[->-,>=latex]
\tikzstyle{fleche}=[->,>=latex,red,thick]

\def\x{0}

\node (u1) at (\x+1.5,3) {$\id : x \mapsto x$};
\node (u1) at (\x+1.5,2) {$\textbf{f} : x \mapsto 2x$};
\node (u2) at (\x+1.5,1) {$\textbf{g} : x \mapsto x^2$};;
\node (u3) at (\x+1.5,0) {$\textbf{h} : 3 \mapsto 4$};

\draw[ultra thick, draw=lightgray, rounded corners] (\x+0,-0.5) rectangle (\x+3,3.5);

\end{tikzpicture}
\end{subfigure}
\caption{From left to right: state of the graph obtained from $\red_\simeq(T)$ after each iteration of Algorithm~\ref{algo:dag_rw} on the example of Figures~\ref{fig:dag_compression_example} and \ref{fig:compression_label_exple}. For the sake of readability, the ciphers $\id$, $\textbf{f}$ and $\textbf{g}$ are given as formulae. Vertices are colored according to their equivalence class with respect to $\sim$ for labelled vertices and to $\simeq$ for unlabelled ones.}
\label{fig:compression_label_iter}
\end{figure}

The following result gives the complexity of Algorithm~\ref{algo:dag_rw} as a function of $C_\sim(n)$ that denotes the complexity of the algorithm used for tree ciphering isomorphism (line 9) on trees of size $n$.

\begin{proposition}\label{prop:compression_cipher}
Let $T$ be a tree made of $n$ different topological subtrees: subtree $i$, $1\leq i\leq n$, is of size $s_i$ and appears $n_i$ times. The time-complexity of Algorithm~\ref{algo:dag_rw} is $O\left(\#T\sum_{i=1}^n n_iC_\sim(s_i)\right)$.
\end{proposition}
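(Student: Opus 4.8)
The plan is to localise the whole cost of Algorithm~\ref{algo:dag_rw} in its calls to the tree ciphering subroutine and then to check that everything else is dominated by those calls. First I would note that the only invocation of the (costly) ciphering test is the condition $T(u)\sim T(\sigma(P))$ on line~9. By construction of $A_q$ and of the representatives $\sigma(P)$, both $T(u)$ and $T(\sigma(P))$ belong to the same topological class $q$, say that of subtree $i$ of size $s_i$; hence every such test costs $C_\sim(s_i)$, which is where the factors $C_\sim(s_i)$ in the announced bound come from.

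Next I would count these tests class by class. The outer loop visits the vertices of $\red_\simeq(T)$ once each, so exactly $n$ times, one per topological subtree type. Consider the iteration handling the vertex $q$ of type $i$. Since $A_q\subseteq\{u\in T:[u]_\simeq=q\}$ and this set has $n_i$ elements, we get $\#A_q\leq n_i$. During the partition loop, each $u\in A_q$ is compared against the current list of representatives, whose size is at most the number of $\sim$-classes inside type $i$ and thus at most $\#A_q$; the iteration therefore performs at most $\#A_q^{\,2}\leq n_i^2$ ciphering tests. Using that the $n_i$ occurrences of subtree $i$ have pairwise distinct roots, so that $n_i\leq\#T$, the ciphering cost of the iteration is $O(n_i^2 C_\sim(s_i))=O(\#T\,n_i C_\sim(s_i))$. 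Summing over the $n$ types yields a total ciphering cost $O\!\left(\#T\sum_{i=1}^n n_i C_\sim(s_i)\right)$.

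Finally I would check that the remaining bookkeeping fits within the same bound. Each \textsc{ExtBij} call is $O(1)$; forming $A_q$ only scans the children of the already-fixed sections $s(p)$; and, per created vertex $r_P$, copying the out-edges of $q$ and adding the edges incoming from the nodes of $P$ costs $O(s_i+\#P)$, hence $O(n_i s_i)$ over one iteration. Assuming the ciphering algorithm at least reads its inputs, $C_\sim(s_i)=\Omega(s_i)$, so this overhead is $O(n_i C_\sim(s_i))$ and is absorbed by the ciphering term, giving the claimed complexity.

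The hard part will be the counting in the second step: justifying $\#A_q\leq n_i$ directly from the definition of $A_q$, and the quadratic bound $\#A_q^{\,2}$ on the number of comparisons made while partitioning $A_q$, before collapsing $n_i^2$ into $\#T\,n_i$ via $n_i\leq\#T$. The rest is routine verification that the linear-time manipulations of the graph and of $T$ are dominated by the ciphering calls.
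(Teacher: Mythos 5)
Your proof is correct and follows essentially the same route as the paper's: one iteration per vertex of $\red_\simeq(T)$, at most $\#A_q^2\leq n_i^2$ ciphering tests per iteration, and the bounds $\#A_q\leq n_i$ and $n_i\leq\#T$ to conclude. The only cosmetic difference is in the bookkeeping term: the paper charges $O(\#\attr(T))$ per copied cipher and absorbs $\#A_q\#\attr(T)$ using $\sum_i n_i=\#T$, whereas you bound the overhead by $O(n_i s_i)$ and absorb it via the (mild, reasonable) assumption $C_\sim(s_i)=\Omega(s_i)$; both variants are harmless.
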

\begin{proof}
Any vertex $q$ of $\red_\simeq(T)$ corresponds to a subtree of $T$ that is of size $s_q$ and appears $n_q$ times. Processing $q$ requires to evaluate $O(\#A_q^2)$ isomorphism tests (whose complexity is $C_\sim(s_q)$) and to copy each of the ciphers in $O(\#\attr(T))$ time. Since the number of ciphers is $O(\#A_q)$, the cost of treating $q$ is $O(\#A_q^2 C_\sim(s_q) + \#A_q\#\attr(T))$. $\#A_q\leq n_q$ and $\sum_q n_q = \#T$ yield the result.
\end{proof}

\begin{remark}
    Since the tree ciphering isomorphism problem is graph isomorphism complete \cite{booth1979problems}, one can not expect to derive a fast compression algorithm. However, one can choose to limit the search for isomorphisms (line 9 of Algorithm~\ref{algo:dag_rw}), either in time or in number of operations, resulting in a possibly incomplete compression. Numerical results of Subsection~\ref{ss:timecomplexity} seem to show that the worst-case complexity of our algorithm for tree ciphering isomorphism is due to infrequent pathological cases. One can therefore reasonably anticipate good compression results even with limited computation time.
\end{remark}

\subsection{Frequent pattern mining}

In this subsection, we show how DAG-RW compression can be used to detect frequent common subtrees with identical label distribution in a tree dataset.

\paragraph{DAG-RW compression of a dataset}

We consider a dataset of labelled rooted trees $(T_1,\dots,T_n)$ and are interested in the detection of frequent patterns of the following types: unlabelled subtrees, labelled subtrees, and subtrees with identical label distribution, i.e. subtrees up to tree ciphering. In other words, we would like to know, for each of these types of patterns, which ones appear frequently in the dataset. For the first two, conventional DAG compression allows the preliminary enumeration of all the patterns, whose frequency is evaluated in a second step \cite{azais2020weight}. Here we explore how DAG-RW compression solves the problem for subtrees with identical label distribution.

To this end, we consider the tree $\mathbb{T}$ whose root has $n$ children: the roots of the $T_i$'s. Then we evaluate its DAG-RW compression $\red_\sim(\mathbb{T})$. A vertex $q$ of $\red_\sim(\mathbb{T})$ represents a subtree that appears up to tree ciphering at least once in the dataset. The data which contain this subtree, $\origin(q)$, can be recursively computed in one top-down traversal of the DAG,
$$
\origin(q) =
\bigcup_{\text{edge\,$(p,q)$\,of}\,\red_\sim(\mathbb{T})} \origin(p),
$$
after the following initialisation step: if $p$ is the source of $\red_\sim(\mathbb{T})$, $\origin(p)=\emptyset$, and, if $p$ is the $i$th vertex connected to the source, $\origin(p)=\{i\}$. Then $\#\origin(p)/n$ gives the frequency of subtrees with the same label distribution as $p$ in the dataset. Finally, $\red_\sim(\mathbb{T})$ can be filtered to keep the patterns more frequent than a given threshold.

\paragraph{Real data analysis}

We conduct frequent pattern mining on two reference datasets from the literature, INEX~2005 and INEX~2006 \cite{denoyer2006inex}. The patterns that we are looking for are topological subtrees, subtrees with common label distribution, and labelled subtrees. Topological subtrees and labelled subtrees are detected by conventional DAG compression \cite{azais2020weight}, while DAG-RW compression is performed for the detection of subtrees with common label distribution. Both algorithms were implemented as modules of the \texttt{Python} library \texttt{treex} \cite{azais2019treex}. Table~\ref{tab:sum:inex} summarises the numbers of patterns and frequent patterns (detected in at least $5\%$ of the data) found in the two datasets. Frequent patterns observed in INEX 2005 and INEX 2006 are given in Figure~\ref{fig:INEX_both}.

\begin{table}[ht]
    \centering
    \begin{tabular}{c|ccc|ccc}
            & \multicolumn{3}{c|}{INEX 2005} & \multicolumn{3}{c}{INEX 2006}  \\
            & \# patterns & \# frequent patterns & \# data & \# patterns & \# frequent patterns & \# data\\ \hline
$\simeq$    & 2073 & 9 &  & 1615 & 11 &  \\
$\sim$      & 2212 & 9 & 4820 & 2430 & 13 & 6053 \\
$\simeq_l$  & 5121 & 144 &  & 3165 & 35 &  \\ 
    \end{tabular}
    \caption{Summary of numbers of patterns and frequent patterns (observed in at least $5\%$ of the data) detected in datasets INEX\,2005 and INEX\,2006 for each of the equivalence relations considered in the paper: $\simeq$ for unlabelled subtrees, $\sim$ for subtrees with identical label distribution, and $\simeq_l$ for labelled subtrees.}
    \label{tab:sum:inex}
\end{table}

\begin{figure}[p!]
\thisfloatpagestyle{empty}
    \centering
    \caption*{\normalsize INEX 2005}
\label{fig:inex2005}
     \begin{subfigure}[b]{\textwidth}
\centering
\begin{tabular}{c|c|c}
\inexI & \inexA & \inexB \\
\inexC & \inexD & \inexE \\
 \inexF & \inexG &\inexH
\end{tabular}

     \end{subfigure}

\bigskip
     
\begin{subfigure}[b]{\textwidth}
\centering
\caption*{\normalsize INEX 2006}
\label{fig:inex2006}
\begin{tabular}{c|c|c}
\inexxM &\inexxA & \inexxB \\
\inexxE & \inexxF & \inexxC \\
\inexxD & \inexxH & \inexxI \\
\inexxJ & \inexxG & \inexxK \\
        & \inexxL &
\end{tabular}

     \end{subfigure}
\caption{Common subtrees with identical label distribution appearing in at least $5\%$ of trees from INEX 2005 (top) and INEX 2006 (bottom) datasets: gray bar gives the frequency of each of them (scale from $0\%$ to $100\%$). Each pattern corresponds to at least 1 subtree with labels under equivalence relation $\simeq_l$: black bars give their frequency when larger than $5\%$ (horizontal red line). }\label{fig:INEX_both}
\end{figure}

In light of the inclusion relation \eqref{eq:eqrels} between the three types of patterns, theory says that the number of topological subtrees is less than the number of subtrees with common label distribution, which is less than the number of labelled subtrees. Of course, we observe this phenomenon in the data but also note that the number of labelled subtrees found in INEX 2005 is larger than the number of data itself, which says that shared labelled subtrees are not so frequent in this dataset. The relation between tree isomorphisms does not tell us anything about the frequency of observed patterns, which highly depends on the distribution of the data. We can nevertheless expect more frequent patterns from a more frequent pattern type. The number of frequent topological subtrees is $9$ in INEX 2005 and $11$ in INEX 2006, which gives parsimonious representations of the datasets. With respectively $9$ and $13$ frequent occurrences, subtrees with identical label distribution provide a representation as parsimonious as topological subtrees but preserve the label distribution, which gives an additional insight on the data. Labelled subtrees keep the information of labels but require respectively $144$ and $35$ frequent patterns to describe the data, which is, in particular for INEX 2005, much less parsimonious.

\FloatBarrier

The subtree with only one node, whatever its label, is present in any subtree under both $\simeq$ and $\sim$, which implies a frequency of $100\%$ in the two datasets. This hides a diversity of labels ($119$ frequent labelled leaves in INEX 2005 and $11$ in INEX 2006), which can be investigated without taking into account the topology (trivial in this case). This phenomenon occurs, to a lesser extent, for other patterns, e.g. $(1,3)$ (first row, third column) in INEX 2005 and $(1,3)$ in INEX 2006. DAG-RW compression allows to discover, without losing the information of the label distribution, that some subtrees are very frequent. Typically, subtree $(1,2)$ in INEX 2005 is present, up to tree ciphering, in $96\%$ of data, while the frequency of its most frequent labelled version is only $25\%$. In addition, it avoids to miss patterns: $5$ frequent subtrees in INEX 2005 and $1$ in INEX 2006 have no labelled version more frequent than $5\%$. To sum up, in both datasets, identifying frequent labelled subtrees lead to much more patterns and less diversity of topologies than subtrees with identical label distribution. Furthermore, contrary to topological patterns, DAG-RW compression separates common topologies with different label distributions, e.g. patterns $(1,2)$ and $(1,3)$ in INEX 2006. As a conclusion, with a representation as parsimonious as from topology only (without losing the label distribution), and a much more parsimonious representation together with a greater diversity of topologies than from labelled patterns, DAG-RW compression achieves a relevant frequent pattern mining on these datasets.

\bibliographystyle{acm}
\bibliography{biblio}

\begin{thebibliography}{10}

\bibitem{aggarwal2014frequent}
{\sc Aggarwal, C.~C., Bhuiyan, M.~A., and Hasan, M.~A.}
\newblock Frequent pattern mining algorithms: A survey.
\newblock In {\em Frequent pattern mining}. Springer, 2014, pp.~19--64.

\bibitem{aho1974design}
{\sc Aho, A., Hopcroft, J., and Ullman, J.}
\newblock {\em The Design and Analysis of Computer Algorithms}.
\newblock Addison-Wesley series in computer science and information processing.
  Addison-Wesley Publishing Company, 1974.

\bibitem{asai2003discovering}
{\sc Asai, T., Arimura, H., Uno, T., and Nakano, S.-i.}
\newblock Discovering frequent substructures in large unordered trees.
\newblock In {\em Discovery Science\/} (Berlin, Heidelberg, 2003), G.~Grieser,
  Y.~Tanaka, and A.~Yamamoto, Eds., Springer Berlin Heidelberg, pp.~47--61.

\bibitem{astrachan2003bubble}
{\sc Astrachan, O.}
\newblock Bubble sort: an archaeological algorithmic analysis.
\newblock {\em ACM Sigcse Bulletin 35}, 1 (2003), 1--5.

\bibitem{avis1996reverse}
{\sc Avis, D., and Fukuda, K.}
\newblock Reverse search for enumeration.
\newblock {\em Discrete Applied Mathematics 65}, 1 (1996), 21--46.
\newblock First International Colloquium on Graphs and Optimization.

\bibitem{azais2019treex}
{\sc Aza{\"\i}s, R., Cerutti, G., Gemmerl{\'e}, D., and Ingels, F.}
\newblock treex: a {Python} package for manipulating rooted trees.
\newblock {\em Journal of Open Source Software 4}, 38 (2019), 1351.

\bibitem{azais2020weight}
{\sc Aza{\"\i}s, R., and Ingels, F.}
\newblock The weight function in the subtree kernel is decisive.
\newblock {\em The Journal of Machine Learning Research 21}, 1 (2020),
  2535--2570.

\bibitem{babai2016graph}
{\sc Babai, L.}
\newblock Graph isomorphism in quasipolynomial time.
\newblock In {\em Proceedings of the forty-eighth annual ACM symposium on
  Theory of Computing\/} (2016), pp.~684--697.

\bibitem{bmmv2002}
{\sc Babilon, R., Matou{\v{s}}ek, J., Maxov{\'a}, J., and Valtr, P.}
\newblock Low-distortion embeddings of trees.
\newblock In {\em Graph Drawing\/} (2002), Springer Berlin Heidelberg,
  pp.~343--351.

\bibitem{booth1979problems}
{\sc Booth, K.~S., and Colbourn, C.~J.}
\newblock {\em Problems polynomially equivalent to graph isomorphism}.
\newblock Computer Science Department, Univ., 1979.

\bibitem{denoyer2006inex}
{\sc Denoyer, L., Gallinari, P., and Vercoustre, A.-M.}
\newblock {Report on the XML Mining Track at INEX 2005 and INEX 2006,
  Categorization and Clustering of XML Documents}.
\newblock In {\em {5th International Workshop of the Initiative for the
  Evaluation of XML Retrieval, INEX 2006}\/} (Dagstuhl, Germany, Dec. 2006),
  N.~Fuhr, M.~Lalmas, S.~Malik, and G.~Kazai, Eds., vol.~4518 of {\em Lecture
  Notes in Computer Science}, {Springer}, pp.~432--443.

\bibitem{fortin1996graph}
{\sc Fortin, S.}
\newblock The graph isomorphism problem.
\newblock Tech. rep., University of Alberta, Canada, 1996.

\bibitem{grohe2021deep}
{\sc Grohe, M., Schweitzer, P., and Wiebking, D.}
\newblock Deep {W}eisfeiler {L}eman.
\newblock In {\em Proceedings of the 2021 ACM-SIAM Symposium on Discrete
  Algorithms (SODA)\/} (2021), SIAM, pp.~2600--2614.

\bibitem{ingels2021isomorphic}
{\sc Ingels, F., and Aza\"is, R.}
\newblock Isomorphic unordered labeled trees up to substitution ciphering.
\newblock In {\em Combinatorial Algorithms\/} (2021), pp.~385--399.

\bibitem{ingels2022enumeration}
{\sc Ingels, F., and Aza{\"\i}s, R.}
\newblock Enumeration of irredundant forests.
\newblock {\em Theoretical Computer Science\/} (2022).

\bibitem{jimenez2010}
{\sc Jim\'enez, A., Berzal, F., and Cubero, J.-C.}
\newblock Frequent tree pattern mining: A survey.
\newblock {\em Intell. Data Anal. 14\/} (01 2010), 603--622.

\bibitem{knuth2005art}
{\sc Knuth, D.~E.}
\newblock {\em The Art of Computer Programming, Volume 4, Fascicle 2:
  Generating All Tuples and Permutations (Art of Computer Programming)}.
\newblock Addison-Wesley Professional, 2005.

\bibitem{mckay2014practical}
{\sc McKay, B.~D., and Piperno, A.}
\newblock Practical graph isomorphism, ii.
\newblock {\em Journal of symbolic computation 60\/} (2014), 94--112.

\bibitem{schoning1988}
{\sc Sch\"oning, U.}
\newblock Graph isomorphism is in the low hierarchy.
\newblock {\em Journal of Computer and System Sciences 37}, 3 (1988), 312--323.

\bibitem{valiente2001}
{\sc Valiente, G.}
\newblock An efficient bottom-up distance between trees.
\newblock In {\em Proceedings Eighth Symposium on String Processing and
  Information Retrieval\/} (2001), pp.~212--219.

\bibitem{valiente2013algorithms}
{\sc Valiente, G.}
\newblock {\em Algorithms on Trees and Graphs}.
\newblock Springer Science \& Business Media, 2013.

\bibitem{weisfeiler1968reduction}
{\sc Weisfeiler, B., and Lehman, A.}
\newblock A reduction of a graph to a canonical form and an algebra arising
  during this reduction.
\newblock {\em Nauchno-Technicheskaya Informatsia 2}, 9 (1968), 12--16.

\bibitem{zaki2002}
{\sc Zaki, M.~J.}
\newblock Efficiently mining frequent trees in a forest.
\newblock In {\em Proceedings of the Eighth ACM SIGKDD International Conference
  on Knowledge Discovery and Data Mining\/} (New York, NY, USA, 2002), KDD '02,
  Association for Computing Machinery, pp.~71--80.

\bibitem{zhang2015number}
{\sc Zhang, Y., and Zhang, Y.~Z.}
\newblock On the number of leaves in a random recursive tree.
\newblock {\em Brazilian Journal of Probability and Statistics 29}, 4 (2015),
  897--908.

\end{thebibliography}

\pagebreak
\appendix

\section{Evolution of the size of the search space}\label{app:size_search_size}

This appendix investigates how the various manipulations of the system presented in Section~\ref{s:treeciphering} modify the size of the search space, defined in \eqref{size} as the number of possible ways to complete $\isom$ from a distribution of nodes (not already mapped via $\isom$) between bags and collections.

\subsection{Reduction factors}

\paragraph{Splitting bags} When a bag of size $n$ is split in two due to the action of \textsc{SplitChildren}, say into bags of size $p$ and $n-p$, then we go from $n!$ possibilities to $p!(n-p)!$, reducing the space by a factor $\binom{n}{p}$.

\paragraph{Splitting collections} When \textsc{SplitChildren} is applied to a collection, several sets can be split and put together in the same collection. Assume that we split $k$ sets of size $n$ into sets of size $p$ and $n-p$, $k\leq\min(\#\bfC_n,\degr(T))$. Then we go from $\#\bfC_n! \times n!^{\#\bfC_n}$ possibilities to $(\#\bfC_n-k)!\times n!^{\#\bfC_n-k}\times k!p!^k\times k!(n-p)!^k$. The space is therefore modified by a factor $\displaystyle\frac{1}{k!}\binom{\#\bfC_n}{k}\binom{n}{p}^k$, which is not always greater than $1$. However, this does not show that the search space can grow because all the parameters are highly interrelated: the values for which the reduction factor is less than $1$ may not be feasible. We actually exhibit in Appendix~\ref{app:a:counterexample} an instance of a tree for which the search space locally grows along the algorithm.

\paragraph{Mapping parents} When recursively mapping the parents of two nodes, if they were present in a bag $\bfB$, we go from $\#\bfB!$ to $(\#\bfB-1)!$, reducing the space by a factor $\#\bfB$; if they were present in a collection $\bfC_n$, then we go from $\#\bfC_n!\times n!^{\#\bfC_n}$ possibilities to $(\#\bfC_n-1)!\times n!^{\#\bfC_n-1}\times (n-1)!$, reducing the space by a factor $n\#\bfC_n$.

\paragraph{Deductions rules} When applying Deduction~Rule~\ref{ded:coll_label} and splitting a collection $\bfC_n$ into two new collections $\mathbf{C}'_n$ and $\mathbf{C}''_n$, say $\#\bfC_n = p + q$, $\#\mathbf{C}'_n=p$ and $\#\mathbf{C}''_n=q$, then we go from $(p+q)! \times n!^{p+q}$ possibilities to $p!\times n!^p\times q!\times n!^q$, reducing the space by a factor $\binom{p+q}{q}$. On the other hand, Deduction~Rules~\ref{ded:bags}, \ref{ded:coll_label_new} and \ref{ded:coll_cardinal} do not modify the size of the search space. However, these rules still modify the system in an irrevocable way: mapping nodes (Deduction Rule~\ref{ded:bags}), mapping labels (Deduction Rule~\ref{ded:coll_label_new}) or transferring nodes from a collection to a bag (Deduction Rule~\ref{ded:coll_cardinal}).

\subsection{Does the search space only shrink? A counterexample}
\label{app:a:counterexample}

The following example shows that the search space can locally grow through the deduction steps given in Subsections~\ref{ss:ded:topol} and \ref{ss:ded:label}.

\bigskip

\begin{minipage}[c]{0.475\textwidth}
\centering

Step 1/6 (histogram of labels)

$N(\bags,\collections)= 12!=479,001,600$

\def\xscale{0.35}
\def\yscale{0.7}
\def\nodescale{0.7}
\begin{tikzpicture}[xscale=\xscale,yscale=\yscale]
\tikzstyle{noeud}=[draw,circle,fill=white,scale=\nodescale*1]
\tikzstyle{attribut}=[scale=\nodescale*1,font=\bf]
\tikzstyle{arc}=[-,>=latex]
\tikzstyle{fleche}=[->,>=latex,red,thick]

\def\x{2}
\def\y{1.5}

\node[noeud,label=15:$u_1$] (u1) at (5*\x,2*\y) {$A$};
\node[noeud,label=-15:$u_2$] (u2) at (2.5*\x,1*\y) {$A$};
\node[noeud,label=-15:$u_3$] (u3) at (7.5*\x,1*\y) {$B$};
\node[noeud,label=-90:$u_4$] (u4) at (10*\x,1*\y) {$B$};
\node[noeud,label=-90:$u_5$] (u5) at (1*\x,0) {$C$};
\node[noeud,label=-90:$u_6$] (u6) at (2*\x,0) {$D$};
\node[noeud,label=-90:$u_7$] (u7) at (3*\x,0) {$E$};
\node[noeud,label=-90:$u_8$] (u8) at (4*\x,0) {$F$};
\node[noeud,label=-90:$u_9$] (u9) at (6*\x,0) {$C$};
\node[noeud,label=-90:$u_{10}$] (u10) at (7*\x,0) {$D$};
\node[noeud,label=-90:$u_{11}$] (u11) at (8*\x,0) {$E$};
\node[noeud,label=-90:$u_{12}$] (u12) at (9*\x,0) {$F$};

\draw[arc] (u1)--(u2) ;
\draw[arc] (u1)--(u3) ;
\draw[arc] (u1)--(u4) ;
\draw[arc] (u2)--(u5) ;
\draw[arc] (u2)--(u6) ;
\draw[arc] (u2)--(u7) ;
\draw[arc] (u2)--(u8) ;
\draw[arc] (u3)--(u9) ;
\draw[arc] (u3)--(u10) ;
\draw[arc] (u3)--(u11) ;
\draw[arc] (u3)--(u12) ;

\draw[rounded corners,ultra thick,gray] (\x-1,-1) rectangle (10*\x+1,2*\y+1);

\end{tikzpicture}
\end{minipage}\hfill
\begin{minipage}[c]{0.475\textwidth}
\centering

Step 2/6 (depth)

$N(\bags,\collections)= 3!\times 8! = 241,920$

\def\xscale{0.35}
\def\yscale{0.7}
\def\nodescale{0.7}
\begin{tikzpicture}[xscale=\xscale,yscale=\yscale]
\tikzstyle{noeud}=[draw,circle,fill=white,scale=\nodescale*1]
\tikzstyle{attribut}=[scale=\nodescale*1,font=\bf]
\tikzstyle{arc}=[-,>=latex]
\tikzstyle{fleche}=[->,>=latex,red,thick]

\def\x{2}
\def\y{1.5}


\node[noeud,draw=lred,ultra thick,label=15:$u_1$] (u1) at (5*\x,2*\y) {$A$};
\node[noeud,label=-15:$u_2$] (u2) at (2.5*\x,1*\y) {$A$};
\node[noeud,label=-15:$u_3$] (u3) at (7.5*\x,1*\y) {$B$};
\node[noeud,label=-90:$u_4$] (u4) at (10*\x,1*\y) {$B$};
\node[noeud,label=-90:$u_5$] (u5) at (1*\x,0) {$C$};
\node[noeud,label=-90:$u_6$] (u6) at (2*\x,0) {$D$};
\node[noeud,label=-90:$u_7$] (u7) at (3*\x,0) {$E$};
\node[noeud,label=-90:$u_8$] (u8) at (4*\x,0) {$F$};
\node[noeud,label=-90:$u_9$] (u9) at (6*\x,0) {$C$};
\node[noeud,label=-90:$u_{10}$] (u10) at (7*\x,0) {$D$};
\node[noeud,label=-90:$u_{11}$] (u11) at (8*\x,0) {$E$};
\node[noeud,label=-90:$u_{12}$] (u12) at (9*\x,0) {$F$};

\draw[arc] (u1)--(u2) ;
\draw[arc] (u1)--(u3) ;
\draw[arc] (u1)--(u4) ;
\draw[arc] (u2)--(u5) ;
\draw[arc] (u2)--(u6) ;
\draw[arc] (u2)--(u7) ;
\draw[arc] (u2)--(u8) ;
\draw[arc] (u3)--(u9) ;
\draw[arc] (u3)--(u10) ;
\draw[arc] (u3)--(u11) ;
\draw[arc] (u3)--(u12) ;

\draw[rounded corners,ultra thick,gray] (\x-1,-1) rectangle (9*\x+1,0.5);

\draw[rounded corners,ultra thick,gray] (2.5*\x-1,\y-1) rectangle (10*\x+1,\y+0.5);

\end{tikzpicture}
\end{minipage}

\bigskip

\begin{minipage}[c]{0.475\textwidth}
\centering

Step 3/6 (equivalence class)\\
and step 4/6 (parents)

$N(\bags,\collections)= 2!\times 8! = 80,640$

\def\xscale{0.35}
\def\yscale{0.7}
\def\nodescale{0.7}
\begin{tikzpicture}[xscale=\xscale,yscale=\yscale]
\tikzstyle{noeud}=[draw,circle,fill=white,scale=\nodescale*1]
\tikzstyle{attribut}=[scale=\nodescale*1,font=\bf]
\tikzstyle{arc}=[-,>=latex]
\tikzstyle{fleche}=[->,>=latex,red,thick]

\def\x{2}
\def\y{1.5}


\node[noeud,draw=lred,ultra thick,label=15:$u_1$] (u1) at (5*\x,2*\y) {$A$};
\node[noeud,label=-15:$u_2$] (u2) at (2.5*\x,1*\y) {$A$};
\node[noeud,label=-15:$u_3$] (u3) at (7.5*\x,1*\y) {$B$};
\node[noeud,draw=lblue,ultra thick,label=-90:$u_4$] (u4) at (10*\x,1*\y) {$B$};
\node[noeud,label=-90:$u_5$] (u5) at (1*\x,0) {$C$};
\node[noeud,label=-90:$u_6$] (u6) at (2*\x,0) {$D$};
\node[noeud,label=-90:$u_7$] (u7) at (3*\x,0) {$E$};
\node[noeud,label=-90:$u_8$] (u8) at (4*\x,0) {$F$};
\node[noeud,label=-90:$u_9$] (u9) at (6*\x,0) {$C$};
\node[noeud,label=-90:$u_{10}$] (u10) at (7*\x,0) {$D$};
\node[noeud,label=-90:$u_{11}$] (u11) at (8*\x,0) {$E$};
\node[noeud,label=-90:$u_{12}$] (u12) at (9*\x,0) {$F$};

\draw[arc] (u1)--(u2) ;
\draw[arc] (u1)--(u3) ;
\draw[arc] (u1)--(u4) ;
\draw[arc] (u2)--(u5) ;
\draw[arc] (u2)--(u6) ;
\draw[arc] (u2)--(u7) ;
\draw[arc] (u2)--(u8) ;
\draw[arc] (u3)--(u9) ;
\draw[arc] (u3)--(u10) ;
\draw[arc] (u3)--(u11) ;
\draw[arc] (u3)--(u12) ;

\draw[rounded corners,ultra thick,gray] (\x-1,-1) rectangle (9*\x+1,0.5);

\draw[rounded corners,ultra thick,gray] (2.5*\x-1,\y-.75) rectangle (7.5*\x+2.5,\y+0.5);

\end{tikzpicture}
\end{minipage}\hfill
\begin{minipage}[c]{0.475\textwidth}
\centering

Step 5/6 (from bags to collections)

$N(\bags,\collections)= \left(2!\times 1!^2\right)\times\left(4!\times 2!^4\right)=768$

\def\xscale{0.7}
\def\yscale{0.7}
\def\nodescale{0.7}
\begin{tikzpicture}[xscale=\xscale,yscale=\yscale]
\tikzstyle{noeud}=[draw,circle,fill=white,scale=\nodescale*1]
\tikzstyle{attribut}=[scale=\nodescale*1,font=\bf]
\tikzstyle{arc}=[->-,>=latex]

\def\x{1.5}
\def\y{2}


\node[noeud,label=-15:$u_2$] (u3) at (\x,1*\y) {$A$};
\draw[rounded corners,ultra thick,gray] (\x-.5,\y-.75) rectangle (1*\x+1.25,\y+.5);

\node[noeud,label=-15:$u_3$] (u4) at (2.5*\x,1*\y) {$B$};
\draw[rounded corners,ultra thick,gray] (2.5*\x-.5,\y-.75) rectangle (2.5*\x+1.25,\y+.5);

\node at (1.75*\x,\y+1) {$n=1$};
\node at (0,\y) {$C_1$};
\draw[ultra thick,lightgray,dashed] (-.5,\y/2)--(2.5*\x+1.5,\y/2);

\draw[ultra thick,lightgray,rounded corners] (-.5,\y/2)--(-.5,\y+1.5)--(2.5*\x+1.5,\y+1.5)--(2.5*\x+1.5,\y/2);

\node[ultra thick,circle,draw=lightgray,fill=white] at (-.5,\y+1.5) {$\bfC$};


\def\xstep{0*\x}
\def\ystep{-1.75*\y}

\node[noeud,label=-15:$u_{5}$] (u11) at (\x+\xstep,1*\y+\ystep) {$C$};
\node[noeud,label=-15:$u_{9}$] (u14) at (2*\x+\xstep,1*\y+\ystep) {$C$};
\draw[rounded corners,ultra thick,gray] (\x-.5+\xstep,\y-.75+\ystep) rectangle (2*\x+1.25+\xstep,\y+.5+\ystep);

\draw[rounded corners,ultra thick,gray] (3.5*\x-.5+\xstep,\y-.75+\ystep) rectangle (4.5*\x+1.25+\xstep,\y+.5+\ystep);
\node[noeud,label=-15:$u_{6}$] (u11) at (3.5*\x+\xstep,1*\y+\ystep) {$D$};
\node[noeud,label=-15:$u_{10}$] (u14) at (4.5*\x+\xstep,1*\y+\ystep) {$D$};

\draw[rounded corners,ultra thick,gray] (\x-.5+\xstep,-.75+\ystep) rectangle (2*\x+1.25+\xstep,.5+\ystep);
\node[noeud,label=-15:$u_{7}$] (u11) at (\x+\xstep,\ystep) {$E$};
\node[noeud,label=-15:$u_{11}$] (u14) at (2*\x+\xstep,\ystep) {$E$};

\draw[rounded corners,ultra thick,gray] (3.5*\x-.5+\xstep,-.75+\ystep) rectangle (4.5*\x+1.25+\xstep,.5+\ystep);
\node[noeud,label=-15:$u_{8}$] (u11) at (3.5*\x+\xstep,\ystep) {$F$};
\node[noeud,label=-15:$u_{12}$] (u14) at (4.5*\x+\xstep,\ystep) {$F$};

\node at (2.75*\x+\xstep,\y+1+\ystep) {$n=2$};
\node at (+\xstep,\y/2+\ystep) {$C'_1$};
\draw[ultra thick,lightgray,dashed] (-.5+\xstep,-\y/2+\ystep)--(4.5*\x+1.5+\xstep,-\y/2+\ystep);

\draw[ultra thick,lightgray,rounded corners] (-.5+\xstep,-\y/2+\ystep)--(-.5+\xstep,\y+1.5+\ystep)--(4.5*\x+1.5+\xstep,\y+1.5+\ystep)--(4.5*\x+1.5+\xstep,-\y/2+\ystep);

\node[ultra thick,circle,draw=lightgray,fill=white] at (-.5+\xstep,\y+1.5+\ystep) {$\mathbf{C'}$};

\end{tikzpicture}
\end{minipage}

\bigskip

At this stage, Deduction Rule~\ref{ded:coll_label} splits $u_2$ and $u_3$ from $\bfC_1$ using mapped labels ($A$ and $B$), which yields $N(\bags,\collections) = 4!\times 2!^4 = 384$ (from $\mathbf{C}'_2$ only).

From one of these new collections (say the one with $u_2$), one can create a bag that will map $u_2$ and its counterpart thanks to Deduction Rule~\ref{ded:bags}. This does not change the size of the search space.

\textsc{SplitChildren} is then called on $\mathbf{C}'_2$ to isolate nodes $u_5$, $u_6$, $u_7$ and $u_8$ (children of $u_2$) and nodes $u_9$, $u_{10}$, $u_{11}$ and $u_{12}$ (children of $u_3$). One obtains the following partition.

\def\xscale{0.7}
\def\yscale{0.7}
\def\nodescale{0.7}
\begin{tikzpicture}[xscale=\xscale,yscale=\yscale]
\tikzstyle{noeud}=[draw,circle,fill=white,scale=\nodescale*1]
\tikzstyle{attribut}=[scale=\nodescale*1,font=\bf]
\tikzstyle{arc}=[->-,>=latex]

\def\x{1.5}
\def\y{2}

\node[noeud,label=-15:$u_5$] (u3) at (\x,1*\y) {$C$};
\draw[rounded corners,ultra thick,gray] (\x-.5,\y-.75) rectangle (1*\x+1.25,\y+.5);

\node[noeud,label=-15:$u_6$] (u4) at (2.5*\x,1*\y) {$D$};
\draw[rounded corners,ultra thick,gray] (2.5*\x-.5,\y-.75) rectangle (2.5*\x+1.25,\y+.5);

\node[noeud,label=-15:$u_7$] (u4) at (4*\x,1*\y) {$E$};
\draw[rounded corners,ultra thick,gray] (4*\x-.5,\y-.75) rectangle (4*\x+1.25,\y+.5);

\node[noeud,label=-15:$u_8$] (u4) at (5.5*\x,1*\y) {$F$};
\draw[rounded corners,ultra thick,gray] (5.5*\x-.5,\y-.75) rectangle (5.5*\x+1.25,\y+.5);

\node at (3.25*\x,\y+1) {$n=1$};
\node at (0,\y) {$C''_1$};
\draw[ultra thick,lightgray,dashed] (-.5,\y/2)--(5.5*\x+1.5,\y/2);

\draw[ultra thick,lightgray,rounded corners] (-.5,\y/2)--(-.5,\y+1.5)--(5.5*\x+1.5,\y+1.5)--(5.5*\x+1.5,\y/2);

\node[ultra thick,circle,draw=lightgray,fill=white] at (-.5,\y+1.5) {$\mathbf{C''}$};

\def\xstep{8*\x}

\node[noeud,label=-15:$u_9$] (u3) at (\x+\xstep,1*\y) {$C$};
\draw[rounded corners,ultra thick,gray] (\x+\xstep-.5,\y-.75) rectangle (1*\x+\xstep+1.25,\y+.5);

\draw[rounded corners,ultra thick,gray] (2.5*\x+\xstep-.5,\y-.75) rectangle (2.5*\x+\xstep+1.25,\y+.5);
\node[noeud,label=-15:$u_{10}$] (u4) at (2.5*\x+\xstep,1*\y) {$D$};

\draw[rounded corners,ultra thick,gray] (4*\x+\xstep-.5,\y-.75) rectangle (4*\x+\xstep+1.25,\y+.5);
\node[noeud,label=-15:$u_{11}$] (u4) at (4*\x+\xstep,1*\y) {$E$};

\draw[rounded corners,ultra thick,gray] (5.5*\x+\xstep-.5,\y-.75) rectangle (5.5*\x+\xstep+1.25,\y+.5);
\node[noeud,label=-15:$u_{12}$] (u4) at (5.5*\x+\xstep,1*\y) {$F$};

\node at (3.25*\x+\xstep,\y+1) {$n=1$};
\node at (0+\xstep,\y) {$C'''_1$};
\draw[ultra thick,lightgray,dashed] (-.5+\xstep,\y/2)--(5.5*\x+\xstep+1.5,\y/2);

\draw[ultra thick,lightgray,rounded corners] (-.5+\xstep,\y/2)--(-.5+\xstep,\y+1.5)--(5.5*\x+1.5+\xstep,\y+1.5)--(5.5*\x+1.5+\xstep,\y/2);

\node[ultra thick,circle,draw=lightgray,fill=white] at (-.5+\xstep,\y+1.5) {$\mathbf{C'''}$};

\end{tikzpicture}

\bigskip

The related cardinality is $N(\bags,\collections)= \left(4! \times 1!^4\right)^2 = 576$, which illustrates a local growth of the search space. It should be however noticed that any mapping of labels in one of the collections yields a deduction in the other: the 576 mappings are not all feasible.

\section{Optimizing the number of states of the backtracking tree}
\label{proof:backtracking_tree}

When it comes to backtracking in Subsection~\ref{ss:backtracking}, the aim is to explore the various possible associations between, on the one hand, nodes within bags and, on the other hand, sets within collections. Any backtracking strategy for exploring these choices implicitly induces a tree structure, with leaves representing terminal states. The smaller the tree, the more efficient the algorithm is likely to be. In this appendix, we develop a mathematical model of the size of this tree and derive a backtracking strategy to minimise it.

\subsection{Preliminaries}\label{app:prelim}

\paragraph{Enumeration of permutations} Enumerating all possible mappings between the nodes of a bag, or all the bags that can be constructed by associating the sets of a collection, amounts in both cases to enumerating permutations. We assume that we have two sets of $n\geq 2$ objects to be associated together. The associated enumeration tree has $n!$ leaves (one for each permutation of $\lbrace 1,\dots, n\rbrace$) and as many internal nodes as necessary to build these $n!$ permutations by mapping only one additional pair of objects at a time.

The size of the enumeration tree (except the root) $a_n$ is the number of operations required to construct all the permutations of $n$ elements \cite{knuth2005art},
\begin{equation}\label{a_n}
a_n=n!\sum_{i=1}^{n-1} \frac{1}{i!}.
\end{equation}
It should be noted that $a_n$ satisfies the recurrence equation $a_n=n (1 + a_{n-1})$ with $a_1=0$.

\paragraph{Nesting enumeration trees} The tree introduced above only allows to build mappings between one pair of sets of objects, but, in our case, the system is made of several pairs of sets of objects to be mapped together. We assume that we have two sets of $p\geq2$ objects on the one hand, and two sets of $q\geq 2$ objects on the other.

Since we have to enumerate the $p!$ permutations of the first pair, and the $q!$ permutations of the second pair, and since these permutations are independent, we can build mappings incrementally, one pair of objects at a time, in any desired way.

Our aim here is to minimise the size of the overall enumeration tree. Let $a_{p,q}$ be the size of the optimal enumeration tree for those two pairs of respective size $p$ and $q$. It is easy to see that $a_{2,2}=6$, $a_{p,1}=a_p$ and $a_{1,q}=a_q$. In the general case of an instance of size $(p,q)$, we can map two objects from the pair of size $p$ (which leads to a new instance of size $(p-1,q)$), or from the pair of size $q$ (which leads to a new instance of size $(p,q-1)$). Therefore, $a_{p,q}$ verifies the following recurrence relation, echoing the one of $a_n$,
$$a_{p,q} = \min \big[ p(1+a_{p-1,q}), q(1+a_{p,q-1})\big].$$
We have the following result.

\begin{lemma}\label{lemma:two_bags}For any $p\geq2$ and $q\geq 2$, $a_{p,q}=a_{\min(p,q)} + (\min(p,q))! a_{\max(p,q)}$.
\end{lemma}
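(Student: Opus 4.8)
The plan is to prove the identity by strong induction on $p+q$, exploiting the fact that both the recurrence $a_{p,q}=\min[\,p(1+a_{p-1,q}),\,q(1+a_{p,q-1})\,]$ and the base cases $a_{p,1}=a_p$, $a_{1,q}=a_q$ are symmetric under exchanging the roles of $p$ and $q$. Consequently $a_{p,q}=a_{q,p}$, and I may assume without loss of generality that $p\le q$. Writing $F(p,q)=a_{\min(p,q)}+(\min(p,q))!\,a_{\max(p,q)}$ for the claimed value, I would first note that $F$ already agrees with the base cases: since $a_1=0$, one has $F(1,q)=a_q=a_{1,q}$. The induction can therefore be started directly from instances in which one index equals $1$; in particular $a_{2,2}$ is recovered from $a_{1,2}=a_{2,1}=a_2=2$, giving $\min[\,2(1+2),2(1+2)\,]=6=F(2,2)$.

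For the inductive step I would take $2\le p\le q$ and assume $a_{p',q'}=F(p',q')$ for every $(p',q')$ with $p'+q'<p+q$. Substituting the induction hypothesis into the two branches of the recurrence, the first task is to evaluate the branch that reduces the \emph{smaller} index. Since $p-1<q$, the hypothesis gives $a_{p-1,q}=a_{p-1}+(p-1)!\,a_q$, whence
\[
p\bigl(1+a_{p-1,q}\bigr)=p\bigl(1+a_{p-1}+(p-1)!\,a_q\bigr)=p(1+a_{p-1})+p!\,a_q=a_p+p!\,a_q=F(p,q),
\]
where I used the recurrence $a_p=p(1+a_{p-1})$ from \eqref{a_n}. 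It then remains to show that this branch is the actual minimiser, i.e. that resolving the smaller pair first is optimal.

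The heart of the argument is therefore the comparison $p(1+a_{p-1,q})\le q(1+a_{p,q-1})$. In the generic case $p<q$ one has $\min(p,q-1)=p$, so the hypothesis yields $a_{p,q-1}=a_p+p!\,a_{q-1}$; expanding both sides, substituting $a_q=q(1+a_{q-1})$ on the left, and cancelling the common term $q\,p!\,a_{q-1}$, I would reduce the desired inequality to $q(p!-1)\le(q-1)\,a_p$. Viewed as a function of $q$, the difference $(q-1)a_p-q(p!-1)$ has slope $a_p-(p!-1)=a_p-p!+1\ge 1>0$ (because $a_p=p!\sum_{i=1}^{p-1}1/i!\ge p!$), hence is increasing in $q$, so it suffices to verify the bound at $q=p+1$, namely $p\,a_p\ge(p+1)(p!-1)$. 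Using the closed form this is $p\sum_{i=1}^{p-1}1/i!\ge(p+1)-(p+1)/p!$, which holds trivially for $p\ge 3$ (the $i=2$ term alone contributes $p/2\ge 3/2>1$ on the left, while the right-hand side is below $1$) and is the immediate estimate $q\le 2q-2$ when $p=2$. The boundary case $p=q$ is even simpler: the two branches coincide by symmetry and both equal $a_p+p!\,a_p=F(p,p)$.

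I expect the main obstacle to be precisely this last comparison step, i.e. proving that the optimal interleaving always exhausts the smaller of the two pairs first; the algebraic identification of the minimising branch with $F(p,q)$ is then a direct consequence of the recurrence for $a_n$, and the monotonicity-in-$q$ reduction keeps the elementary inequality on $a_p$ to a single case that splits cleanly into $p=2$ and $p\ge 3$.
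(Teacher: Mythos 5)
Your proof is correct and follows essentially the same route as the paper: after the WLOG reduction $p\le q$, you evaluate the branch that exhausts the smaller pair via $a_p=p(1+a_{p-1})$ and then show it is the minimiser, and your key inequality $q(p!-1)\le(q-1)a_p$ is exactly the paper's inequality $\frac{1}{p!}+\bigl(1-\frac{1}{q}\bigr)\frac{a_p}{p!}\ge 1$ in unnormalised form. The only differences are cosmetic: you verify it by linearity in $q$ and a check at $q=p+1$ (splitting $p=2$ and $p\ge3$), whereas the paper substitutes $1-1/q\ge 1-1/p$ and uses monotonicity in $p$.
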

\begin{proof}
We assume $p\leq q$ without loss of generality and proceed by induction on $p$ and $q$. One can remark that $a_{2,2}= a_2 + 2! a_2$ and $a_{1,q}= a_1+ 1! a_q$. From now on, we assume that the result is verified for all couples $(p',q')$ so that either $p'\leq p$ and $q'<q$, or $p'<p$ and $q'\leq q$.

If we perform a mapping on the pair of size $p$, then $p-1\leq q$ and, by induction, $a_{p-1,q}=a_{p-1}+(p-1)!a_q$, i.e. $p(1+a_{p-1,q}) = a_p+p!a_q$ after simplification.

If we perform a mapping on the pair of size $q$, then two cases can occur: either $q-1<p$ (in which case, $p=q$) or $p\leq q-1$. In the first case, with $p=q$, we have $a_{p,q-1}=a_{q-1}+(q-1)!a_p$ by induction, i.e. $q(1+a_{p,q-1})=a_q+q!a_p=a_p+p!a_q$ after simplification. Consequently, $a_{p,q}=a_p+p!a_q$ as claimed.

In the second case $p\leq q-1$, we have $a_{p,q-1}=a_p+p!a_{q-1}$ by induction, then, with $qa_{q-1}=a_q-q$, $q(1+a_{p,q-1})=q(1+a_p+p!a_{q-1})=q(1+a_p) + p!(a_q-q)$. Thus, we need to show
$$q(1+a_p)+p!(a_q-q) \geq a_p+p!a_q,$$
which is equivalent to
$$\frac{1}{p!} + \left(1-\frac{1}{q}\right) \frac{a_p}{p!}\geq 1.$$
Since $p\leq q$, $1-1/q\geq 1-1/p$, so it suffices to prove that $\displaystyle\frac{1}{p!} + \left(1-\frac{1}{p}\right)\sum_{i=1}^{p-1}\frac{1}{i!}\geq 1$ for any $p\geq 2$. This holds for $p=2$. In addition, $\left(1-\frac{1}{p}\right)\displaystyle\sum_{i=1}^{p-1}\frac{1}{i!}$ is an increasing function of $p$ whose value at $p=3$ is $1$, which states the result for any $p\geq 3$.
\end{proof}

\begin{minipage}[c]{0.55\textwidth}
Consequently, $a_{p,q}$ is the size of the enumeration tree that begins with the permutations of the set of size $\min(p,q)$, followed, for each of its $\min(p,q)!$ leaves by the permutations of set of size $\max(p,q)$ (see Figure~\ref{fig:nesting_construction_tree}). It is therefore optimal to enumerate first the permutations of the smallest set, then the ones of the largest set.

\medskip

Furthermore, by immediate induction, if we have to process more than two pairs of sets of objects, the permutation tree size is minimal if we process them by increasing size.
\end{minipage}
\hfill
\begin{minipage}[c]{0.4\textwidth}

    \begin{tikzpicture}[scale=0.5]

\draw[ultra thick,lred,fill=lred!50!white,rounded corners] (0,0)--(-2,-2)--(2,-2)--cycle;

\node (r) at (0,0) {$\bullet$};

\node at (0,-1.25) {$a_p$};

\node (l1) at (-2,-2) {$\bullet$};
\node at (0,-3) {$\dots$};
\node (l2) at (2,-2) {$\bullet$};

\draw[ultra thick,lblue,fill=lblue!50!white,rounded corners] (-2,-2)--(-3,-4)--(-1,-4)--cycle;

\draw[ultra thick,lblue,fill=lblue!50!white,rounded corners] (2,-2)--(1,-4)--(3,-4)--cycle;

\node at (-2,-3.25) {$a_q$};
\node at (2,-3.25) {$a_q$};

\draw [decorate,decoration={brace,amplitude=5pt,raise=-2ex}] (3,-5)-- (-3,-5);

\node at (0,-5.25) {$p!$};

\end{tikzpicture}
\centering

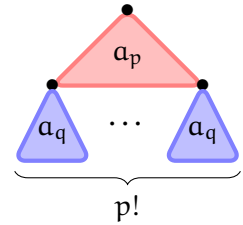
\captionof{figure}{Optimal nesting of two permutation trees for two pairs of sets of respective size $p$ and $q$, $p\leq q$.}
\label{fig:nesting_construction_tree}
\end{minipage}

This result can easily be related to the original problem of backtracking: if there are only bags in the system, then they must be processed in increasing size in order to obtain the smallest possible enumeration tree. However, this does not allow to simultaneously manage bags and collections at this point.

\subsection{Model of the size of the backtracking tree}\label{app:variadic}

Given a finite sequence of integer tuples $(n_i,\alpha_i)$, $1\leq i\leq p$, with $n_i\geq 2$ and $\alpha_i\geq 1$, we define the following variadic function,
\begin{equation}\label{variadic_f}
f\left((n_1,\alpha_1),\dots,(n_p,\alpha_p)\right)=\begin{cases}
\alpha_1 a_{n_1} & \text{if } $p=1$,\\
\alpha_1\left[a_{n_1}+n_1!f\left((n_2,\alpha_2),\dots,(n_p,\alpha_p)\right)\right] & \text{otherwise}.
\end{cases}
\end{equation}

When appropriate tuples are chosen, the value of $f$ corresponds to the size of the backtracking tree (up to the root) associated with the enumeration of mappings from bags and collections, according to the following principles.

\paragraph{Link between $f$ and bags} We saw in the previous section that, if we have two bags of sizes $n$ and $m$, treating them in this order induces a backtracking tree of size $1 + a_n + n! a_m$. If we now have $p$ bags of respective sizes $n_1,\dots,n_p$, and we process them in this order, then the resulting tree has 
$1+f\left((n_1,1),\dots,(n_p,1)\right)$ states. In other words, a bag of size $n$ processed in position $k$ contributes to the backtracking tree as the tuple $(n,1)$ as $k$th input variable of $f$. It should be recalled that Lemma~\ref{lemma:two_bags} proves that the backtracking tree size is minimal when bags are processed in increasing order.

\paragraph{Link between $f$ and collections} We explained in Subsection~\ref{ss:backtracking} that two strategies can be considered for processing collections. Let $\bfC$ be a collection and $n\in\supp(\bfC)$:
\begin{itemize}
    \item either we create all $\#\bfC_n!$ possible bags by associating the subsets of $\bfC_n$, then we process these bags one after the other;
    \item or we create a bag, we process it entirely, then we create an other bag (from the collection that now contains $\#\bfC_n-1$ subsets), which we process entirely, and so on.
\end{itemize}

In terms of backtracking tree size, by definition \eqref{variadic_f} of $f$, the first strategy amounts to considering as input variables of $f$ the tuple $(\#\bfC_n,1)$ first then $\#\bfC_n$ times the tuple $(n,1)$. The second strategy amounts to considering the tuple $(n,\#\bfC_n)$ first (one branches on $\#\bfC_n$ possibilities -- mapping the first set of $C_{1,n}$ with each of the others from $C_{2,n}$ --, then process the tree generated by a bag of size $n$), then the tuple $(n,\#\bfC_n-1)$, and so on to the tuple $(n,1)$. We state in the following lemma that the second strategy minimises $f$ and thus the size of the resulting backtracking tree, establishing incidentally Proposition~\ref{prop:strategy_for_processing_collections}.

\begin{lemma}\label{strategy_collections}
For any $n\geq2$ and $\alpha\geq 2$,
$$f\big((n,\alpha),(n,\alpha-1),\dots,(n,1)\big) < 
f\big((\alpha,1),\underbrace{(n,1),\dots,(n,1)}_{\alpha}\big).
$$
\end{lemma}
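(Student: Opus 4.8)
The plan is to reduce both sides to closed forms in $n$ and $\alpha$ and then compare them coefficient by coefficient as polynomials in $n!$. First I would unfold the recursive definition~\eqref{variadic_f}. Writing $g_\alpha$ for the left-hand side $f((n,\alpha),(n,\alpha-1),\dots,(n,1))$, the definition yields the recurrence $g_k = k\left(a_n + n!\,g_{k-1}\right)$ with $g_1 = a_n$. For the right-hand side, peeling off the leading tuple $(\alpha,1)$ gives $f((\alpha,1),(n,1),\dots,(n,1)) = a_\alpha + \alpha!\,h_\alpha$, where $h_m = f((n,1),\dots,(n,1))$ (with $m$ copies) satisfies $h_m = a_n + n!\,h_{m-1}$, $h_1 = a_n$, hence the geometric closed form $h_\alpha = a_n\sum_{j=0}^{\alpha-1}(n!)^j$.

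Next I would solve the recurrence for $g_\alpha$. Dividing through by $k!$ and setting $G_k = g_k/k!$ linearises it into $G_k = n!\,G_{k-1} + a_n/(k-1)!$ with $G_1 = a_n$; unrolling this first-order linear recurrence and reindexing the powers of $n!$ produces
$$g_\alpha = \alpha!\,a_n\sum_{m=0}^{\alpha-1}\frac{(n!)^m}{(\alpha-1-m)!}.$$

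The comparison then becomes transparent. Since $\alpha!\,h_\alpha = \alpha!\,a_n\sum_{m=0}^{\alpha-1}(n!)^m$, the two quantities differ only in the coefficient of each power $(n!)^m$: it is $1/(\alpha-1-m)!$ in $g_\alpha$ against $1$ in $\alpha!\,h_\alpha$. Because $1/(\alpha-1-m)!\le 1$ for every $0\le m\le\alpha-1$ and $a_n>0$ for $n\ge 2$, I would conclude $g_\alpha\le\alpha!\,h_\alpha$ term by term. Finally, $a_\alpha>0$ for $\alpha\ge 2$ upgrades this to the strict inequality $g_\alpha\le\alpha!\,h_\alpha<a_\alpha+\alpha!\,h_\alpha$, which is exactly the claim.

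The routine but delicate step is solving the $g_\alpha$ recurrence and carrying the two reindexings (from $k$ to $i=k-1$, then to the exponent $m=\alpha-1-i$) without slips; once the closed form is in hand the inequality is immediate. I note in passing that the slack in the term-by-term bound vanishes for $\alpha=2$, where $g_2=2!\,h_2$ and the strict inequality is supplied entirely by the additive term $a_\alpha$, confirming that the positivity of $a_\alpha$ is genuinely needed and not merely convenient.
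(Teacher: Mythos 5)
Your proof is correct and follows essentially the same route as the paper's: both reduce the two sides to the closed forms $\alpha!\,a_n\sum_{k=0}^{\alpha-1}(n!)^k/(\alpha-1-k)!$ and $a_\alpha+\alpha!\,a_n\sum_{k=0}^{\alpha-1}(n!)^k$ and then compare term by term, the paper obtaining the closed forms by induction on $\alpha$ where you solve the linear recurrence via the substitution $G_k=g_k/k!$. Your explicit remark that the term-by-term slack vanishes at $\alpha=2$, so that strictness rests entirely on $a_\alpha>0$, is a point the paper leaves implicit and is worth keeping.
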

\begin{proof}
By induction on $\alpha$,
\begin{eqnarray*}
    f\big((n,\alpha),(n,\alpha-1),\dots,(n,1)\big) &=&a_n\displaystyle\sum_{k=0}^{\alpha-1}\frac{\alpha!}{(\alpha-1-k)!} (n!)^k,\\
    f\big(\underbrace{(n,1),\dots,(n,1)}_{\alpha}\big)&=&a_n \displaystyle\sum_{k=0}^{\alpha-1}(n!)^k,\\
\end{eqnarray*}
which shows
$$ f\big((\alpha,1),\underbrace{(n,1),\dots,(n,1)}_{\alpha}\big) = a_\alpha + \alpha! a_n \displaystyle\sum_{k=0}^{\alpha-1}(n!)^k.$$
The conclusion is immediate when comparing term by term each of the two sums.
\end{proof}

The second strategy is therefore adopted in Algorithm~\ref{backtrack}. Consequently, a collection $\bfC_n$ contributes to the backtracking tree size as the tuples $(n,\#\bfC_n),\dots,(n,1)$ as input variables of $f$. The size of the backtracking tree generated by $\bfC_n$ is thus expressed as $1+f\big((n,\#\bfC_n),(n,\#\bfC_n-1),\dots,(n,1)\big)$.

\paragraph{General case: bags and collections} When the remaining search space is organised with both bags and collections, a bag $\bfB$ processed in position $k$ contributes to the backtracking tree size as $\left(b+\sum_{i=1}^c \#\bfC_{n_i}\right)$-th input variable $(\#\bfB,1)$, with $b$ the number of bags processed before $k$ and $c$ the number of collections $\bfC_{n_i}$ processed before $k$; a collection $\bfC$ contributes as input variables $(n,\#\bfC_n),\dots,(n,1)$ at positions $b+\sum_{i=1}^c \#\bfC_{n_i}$, \dots, $b+\sum_{i=1}^c \#\bfC_{n_i}+\bfC_n-1$.

\begin{remark}
The construction of the backtracking tree whose size is estimated by $f$ assumes that the permutation trees obtained from bags and collections are independent, i.e. that no branches are pruned thanks to deductions, recursive parent mapping, or \textsc{SplitChildren}, contrary to what is implemented in Algorithm~\ref{backtrack}. In other words, the model developed in this section provides an upper-bound of the actual backtracking tree size.
\end{remark}

\subsection{Proof of Proposition~\ref{backtracking_tree_size}}\label{ss:proof_backtracking_tree_size}

We begin with the following lemma.

\begin{lemma}\label{lemma:bound_f}
For any $p\geq1$, $(n_i)_{1\leq i\leq p}$ and $(\alpha_i)_{1\leq i\leq p}$ with $n_i\geq2$ and $\alpha_i\geq1$,
$$f\left((n_1,\alpha_1),\dots,(n_p,\alpha_p)\right)\leq (e-1)\displaystyle\prod_{i=1}^p \alpha_i n_i! \displaystyle\sum_{i=0}^{p-1}\frac{1}{2^i}.$$
\end{lemma}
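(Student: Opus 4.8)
The plan is to unfold the recursion \eqref{variadic_f} into an explicit sum, bound each occurrence of $a_{n_k}$ by a multiple of $n_k!$, and then recognise the resulting expression as a geometric-type sum dominated by $\sum_{i=0}^{p-1}2^{-i}$.

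First I would establish, by induction on $p$, the closed form
$$f\left((n_1,\alpha_1),\dots,(n_p,\alpha_p)\right) = \sum_{k=1}^{p}\left(\prod_{j=1}^{k-1}\alpha_j n_j!\right)\alpha_k\, a_{n_k}.$$
The base case $p=1$ is precisely the first line of \eqref{variadic_f}. For the inductive step, I would substitute the induction hypothesis applied to the tail $f\left((n_2,\alpha_2),\dots,(n_p,\alpha_p)\right)$ into the second line of \eqref{variadic_f} and distribute the leading factor $\alpha_1 n_1!$, which shifts each product to include the index $j=1$ and reproduces the claimed sum.

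Next I would invoke the elementary bound $a_n \leq (e-1)\,n!$, valid for every $n\geq 1$: from \eqref{a_n} one has $a_n = n!\sum_{i=1}^{n-1} 1/i! \leq n!\sum_{i=1}^{\infty} 1/i! = (e-1)\,n!$. Substituting this into the closed form and absorbing the factor $n_k!$ into the product yields
$$f\left((n_1,\alpha_1),\dots,(n_p,\alpha_p)\right) \leq (e-1)\sum_{k=1}^{p}\prod_{j=1}^{k}\alpha_j n_j!.$$

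Finally I would compare this to the target right-hand side. Writing $b_j = \alpha_j n_j!$ and $P=\prod_{j=1}^{p} b_j$, it remains to prove $\sum_{k=1}^{p}\prod_{j=1}^{k}b_j \leq P\sum_{i=0}^{p-1}2^{-i}$. Dividing through by $P$, the $k$-th term on the left becomes $1\big/\prod_{j=k+1}^{p}b_j$; reindexing by $m=p-k$ turns the denominator into a product of $m$ factors. Since $\alpha_j\geq 1$ and $n_j\geq 2$ give $b_j = \alpha_j n_j! \geq 2!=2$, each such denominator is at least $2^m$, so the sum is bounded by $\sum_{m=0}^{p-1}2^{-m}$, which is exactly the claim. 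There is no serious obstacle in this argument; the only point needing care is the index bookkeeping in the final reindexing, namely checking that the bound $b_j\geq 2$ is applied to the correct trailing factors so that the geometric comparison is term-by-term valid.
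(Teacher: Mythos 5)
Your proof is correct and takes essentially the same route as the paper's: the paper proves the inequality by direct induction on $p$, using exactly the two ingredients you use, namely $a_n\leq(e-1)\,n!$ and the observation that each factor $\alpha_j n_j!\geq 2$ makes the trailing products decay geometrically (the paper's form of this is $\prod_{i=2}^p\alpha_i n_i!\geq 2^{p-1}$). Your explicit closed form for $f$ is simply the unrolled version of that same induction, so the two arguments differ only in bookkeeping, not in substance.
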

\begin{proof}
The proof is made by induction on $p$. One has $f\left((n,\alpha)\right)=\alpha a_n \leq (e-1)\alpha n!$, which shows the result for $p=1$. Now, if $p\geq 2$, we have
\begin{align*}
f\left((n_1,\alpha_1),\dots,(n_p,\alpha_p)\right)&=\alpha_1\left[a_{n_1}+n_1!f\left((n_2,\alpha_2),\dots,(n_p,\alpha_p)\right)\right]\\
&\leq (e-1)\alpha_1 n_1! \left(1+\prod_{i=2}^p \alpha_i n_i!\sum_{i=0}^{p-2}\frac{1}{2^i}\right)\\
&=(e-1)\prod_{i=1}^p \alpha_i n_i! \left(\frac{1}{\prod_{i=2}^p \alpha_i n_i!}+\sum_{i=0}^{p-2}\frac{1}{2^i}\right)\\
&\leq (e-1)\prod_{i=1}^p \alpha_i n_i! \sum_{i=0}^{p-1}\frac{1}{2^i},
\end{align*}
with $\prod_{i=2}^p \alpha_i n_i!\geq 2^{p-1}$ for $\alpha_i\geq1$ and $n_i\geq 2$.
\end{proof}

We now connect the stated inequality with the system made of bags and collections according to the principles developed in Appendix~\ref{app:variadic}. As previously explained, replacing the input tuples of $f$ by well-chosen parameters, the left-hand term models the size of the backtracking tree. In addition, the factor $\prod_{i=1}^p \alpha_i n_i!$ of the upper-bound can be interpreted as follows.
\begin{itemize}
    \item A bag $\bfB$ is represented by the tuple $(\#\bfB,1)$ and therefore contributes as $\#\bfB!$ to the product.
    \item A collection $\bfC_n$, $n\in\supp(\bfC)$, is related to the tuples $(n,\#\bfC_n)$, $(n,\#\bfC_n-1)$, \dots, $(n,1)$, contributing as $ n!^{\#\bfC_n}\#\bfC_n!$ to the product. As aforementioned, a collection can be processed in an alternative way corresponding to the tuple $(\#\bfC_n,1)$ first then $\#\bfC_n$ times the tuple $(n,1)$. This leads to a larger backtracking tree but contributes as the same factor.
\end{itemize}
As a consequence, we retrieve formula \eqref{size} of $N(\bags,\collections)$. Bounding the remaining sum by $2$ leads to the expected result. It should be noted that this upper-bound is valid whether or not we select the order of tuples to minimise $f$ (and also does not depend on the strategy used to process collections).

\subsection{Proof of Theorem~\ref{backtracking_tree}}\label{ss:proof_backtracking_tree}

Given $p\geq 2$ tuples $(n_i,\alpha_i)$, $n_i\geq2$ and $\alpha_i\geq1$, and denoting the first two $(m,\alpha)$ and $(n,\beta)$, we are interested in the conditions on $m$, $n$, $\alpha$ and $\beta$ under which one has
$$\Delta f=f\left((m,\alpha),(n,\beta),\dots,(n_p,\alpha_p)\right)-f\left((n,\beta),(m,\alpha),\dots,(n_p,\alpha_p)\right)\leq 0.$$
If $\Delta f\leq 0$, the objects corresponding to tuples $(m,\alpha)$ and $(n,\beta)$ must be processed in this order to minimise the size of the backtracking tree. On the other hand, if $\Delta f>0$, they have to be swapped.

\begin{remark}By virtue of the bubble sort principle \cite{astrachan2003bubble} and taking into account the recursive form \eqref{variadic_f} of $f$, it is enough to know when to swap the first two arguments to globally minimise $f$. Indeed, we can recursively permute the subsequent elements and determine the absolute minimum of $f$.
\end{remark}

With some elementary manipulations on the definition \eqref{variadic_f} of $f$, $\Delta f$ can be rewritten as
\begin{equation}\label{delta_f}
\Delta f = \beta a_n(\alpha m!-1) - \alpha a_m (\beta n!-1).
\end{equation}
We now assume $m\geq n$ without loss of generality and examine the following cases.
\begin{itemize}
    \item If $m=n$, then $\Delta f=a_m(\alpha-\beta)$ which shows that the tuple $(m,\min(\alpha,\beta))$ must be placed first to minimise $f$.
    \item If $m>n$ and $\beta\geq2$, then, in light of the first item of upcoming Lemma~\ref{beta_12}, the tuple $(m,\alpha)$ must be placed first to minimise $f$, whatever the value of $\alpha$.
    \item If $m>n$ and $\beta=1$, then, in light of the second item of upcoming Lemma~\ref{beta_12}, the tuple $(n,\beta)=(n,1)$ must be placed first to minimise $f$, whatever the value of $\alpha$.
\end{itemize}

\begin{lemma}\label{beta_12}
We assume $m>n$.
\begin{itemize}
    \item If $\beta\geq2$, then $\Delta f<0$.
    \item If $\beta=1$, then $\Delta f>0$.
\end{itemize}
\end{lemma}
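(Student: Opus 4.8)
The plan is to work directly from the closed form \eqref{delta_f}, namely $\Delta f = \beta a_n(\alpha m!-1) - \alpha a_m (\beta n!-1)$, and determine its sign under the hypothesis $m>n$. The central analytic fact I would exploit is the explicit expression \eqref{a_n}, $a_n = n!\sum_{i=1}^{n-1}1/i!$, which shows that $a_n/n!$ is a strictly increasing function of $n$ (each step adds the positive term $1/(n)!$), bounded above by $e-1$. Thus $m>n$ gives the key inequality $a_m/m! > a_n/n!$, equivalently $a_m n! > a_n m!$, which will drive both cases.

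For the first item ($\beta\ge 2$), I would rearrange $\Delta f$ to isolate the competing products. Write $\Delta f = \alpha\beta\,(a_n m! - a_m n!) - \beta a_n + \alpha a_m$. Since $m>n$ we have $a_n m! < a_m n!$, so the first term $\alpha\beta(a_n m! - a_m n!)$ is strictly negative; I then need to check that this dominates the leftover $-\beta a_n + \alpha a_m = \alpha a_m - \beta a_n$. The plan is to show the negative term's magnitude $\alpha\beta(a_m n! - a_n m!)$ exceeds $|\alpha a_m - \beta a_n|$. Because $n\ge 2$ forces $n!\ge 2$, the factor $\beta n!\ge 2\beta\ge 4$ is comfortably large, and I expect a clean term-by-term comparison using $m!\ge n!\ge 2$ together with $\beta\ge 2$ to close the gap; concretely, factoring $a_m$ and $a_n$ out and bounding the remaining integer coefficients should suffice.

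For the second item ($\beta=1$), \eqref{delta_f} simplifies to $\Delta f = a_n(\alpha m! - 1) - \alpha a_m (n!-1)$. Here I would again substitute $a_m = m!\sum_{i<m}1/i!$ and $a_n = n!\sum_{i<n}1/i!$ and show the expression is strictly positive. The cleanest route is probably to divide through by the positive quantity $\alpha\, m!\, n!$ and reduce the claim to an inequality between the partial sums $\sum_{i<m}1/i!$ and $\sum_{i<n}1/i!$ scaled by the small correction factors coming from the $-1$ and $(n!-1)$ terms; since $n!-1 < n!$ and $\alpha m! \ge \alpha\cdot n!\cdot(n{+}1) \gg 1$, the dominant contribution is the positive $\alpha m! a_n$ against the smaller $\alpha a_m (n!-1)$, and the strict monotonicity of $a_\cdot/\cdot!$ should give $>0$.

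The main obstacle I anticipate is the first item: establishing $\Delta f<0$ for \emph{all} $\alpha\ge 1$ and $\beta\ge 2$ simultaneously requires that the sign not flip as $\alpha$ grows, so I must verify that the coefficient of the highest power of $\alpha$ (or equivalently the dominant product term) already has the correct sign rather than relying on cancellations that depend on the relative sizes of $\alpha$ and $\beta$. I would therefore treat $\Delta f$ as an affine function of $\alpha$ (for fixed $m,n,\beta$), read off its leading coefficient as $\beta(a_n m! - a_m n!)<0$, and separately check the constant term, so that monotonicity in $\alpha$ reduces the verification to the boundary case $\alpha=1$; the symmetric argument as a function of $\beta$ handles the second item. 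This reduction to checking the worst-case boundary values, combined with the monotonicity of $a_n/n!$, is what I expect to make the routine calculations go through.
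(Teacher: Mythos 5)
Your first item is sound, and in substance it is the paper's own argument in a different arrangement: writing $b_n=a_n/n!$ and $s_{n,k}=b_{n+k}-b_n$ with $k=m-n$, your rearrangement of \eqref{delta_f} reads $\Delta f=-\alpha\beta\,m!\,n!\,s_{n,k}+\alpha a_m-\beta a_n$, and the two facts $n!\,s_{n,k}\geq n!\,s_{n,1}=1$ and $a_m=m!\,b_m<(e-1)\,m!<2\,m!$ give $\alpha\beta\,m!\,n!\,s_{n,k}\geq 2\alpha\,m!>\alpha a_m\geq\alpha a_m-\beta a_n$, hence $\Delta f<0$; the paper's proof of this item rests on exactly the same two ingredients ($s_{n,k}\geq 1/n!$ and $\beta\geq2>e-1$). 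Your reduction of the second item to the boundary case $\alpha=1$ is also legitimate: with $\beta=1$ one gets $\Delta f=\alpha\bigl(a_m-m!\,n!\,s_{n,k}\bigr)-a_n$, affine in $\alpha$, so everything rests on showing
\begin{equation*}
a_m-m!\,n!\,s_{n,k}>a_n,\qquad\text{equivalently}\qquad b_n\Bigl(1-\frac{n!}{m!}\Bigr)>(n!-1)\,s_{n,k}.
\end{equation*}

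The genuine gap is that you treat precisely this last inequality as routine, and it is not. Monotonicity of $a_n/n!$ works \emph{against} you in this item, not for you: $b_m>b_n$ is exactly what makes the term $-\alpha\,m!\,n!\,s_{n,k}$ negative, and what must be proved is a quantitative \emph{upper} bound on the increment $s_{n,k}=b_m-b_n$. That bound is tight. No combination of the crude estimates you invoke ($n!-1<n!$, boundedness of $b_m$ by $e-1$, largeness of $\alpha m!$) can close it: using the natural bounds $s_{n,k}\leq r_n\leq\frac{n+1}{n}\cdot\frac{1}{n!}$ (where $r_n=\sum_{i\geq n}1/i!$) and $1-n!/m!\geq\frac{n}{n+1}$ reduces the claim to $b_n\geq\bigl(\frac{n+1}{n}\bigr)^2$, which is \emph{false} for $n=2$ and $n=3$; and the paper's own estimate at $n=3$ evaluates to approximately $0.97<1$, so there is essentially no slack in any decoupled bound. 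This is exactly why the paper's proof of the second item occupies most of Appendix~\ref{lemma_proof}: it proceeds by a disjunction of cases ($n=2$, then $n=3$, then $n\geq4$), using the tail sums $r_n$ and explicit numerical evaluations. Your proposal would need to supply this case analysis (or an equally sharp estimate) for the second item to go through.
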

\begin{proof}
The proof can be found in Appendix~\ref{lemma_proof}.
\end{proof}

\medskip

The strategy for dealing with bags and collections implemented in Algorithm~\ref{backtrack} has been derived from the previous results to minimise the size of the backtracking tree:
\begin{itemize}
    \item If the system is composed of bags and collections, since bags correspond to tuples $(n,1)$ and collections to tuples $(m,\alpha)$ with $\alpha\geq 2$, then bags have to be processed first.
    \item If there are only bags left, i.e. $\alpha_i=1$ for all $i$, they have to be processed by increasing size (as already stated in Lemma~\ref{lemma:two_bags}).
    \item If there are only collections $\bfC_n$ and $\mathbf{C}_m'$ with $n>m$ left, then $\bfC_n$ has to be processed first. If $n=m$, then $\bfC_n$ has to be processed first if $\#\bfC_n\leq\#\mathbf{C}'_m$.
\end{itemize}

\subsection{Proof of Lemma~\ref{beta_12}}\label{lemma_proof}

For any $n\geq1$ and $k\geq1$, we denote
$$
    b_n = \sum_{i=1}^{n-1}\frac{1}{i!},\qquad
    r_n = \sum_{i=n}^{\infty}\frac{1}{i!}\qquad\text{and}\qquad
    s_{n,k} = b_{n+k} - b_n.
$$
Consequently, with \eqref{a_n}, $a_n = n! b_n$, $b_n+r_n=e-1$, and $n!\leq a_n\leq (e-1)n!$. In addition, together with \eqref{delta_f},
$$\Delta f = \alpha\beta b_n n!m!\left[1-\frac{1}{\alpha m!} - \frac{b_m}{b_n}\left(1-\frac{1}{\beta n!}\right)\right].$$
If we assume moreover that $m>n$, then, with $k=m-n$,
\begin{equation}\label{delta_lambda}
\Delta f = -\alpha\beta b_n n!(n+k)!\left(\frac{1}{\alpha(n+k)!}+\frac{s_{n,k}(\beta n!-1)-b_n}{b_n\beta n!}\right).
\end{equation}

\paragraph{First item} In light of \eqref{delta_lambda}, $\Delta f<0$ is equivalent to 
$$\displaystyle\frac{1}{\alpha(n+k)!}+\frac{s_{n,k}(\beta n!-1)-b_n}{b_n\beta n!}> 0.$$
It suffices to show that the right-hand term is nonnegative, i.e. $s_{n,k}(\beta n! -1)\geq b_n$. Since $s_{n,k}$ is increasing with $k$, we only need to investigate the case $k=1$, which is equivalent to $\beta -1/n!\geq b_n$ because $s_{n,1}=1/n!$. This is equivalent to $\beta\geq b_{n+1}$, which is true because $b_n$ is bounded by $e-1$.

\paragraph{Second item} In light of \eqref{delta_lambda}, $\Delta f>0$ is equivalent after some elementary manipulations to
$$
\frac{1}{s_{n,k}}\left(1-\frac{n!}{\alpha(n+k)!}\right) > \frac{n!-1}{b_n},
$$
which we shall establish by a disjunction of cases on $n$.
\begin{itemize}
\item We assume $n=2$. It should be remarked that it is sufficient to investigate the case $\alpha=1$. Assuming $\alpha=1$ and multiplying each side of the inequality by $s_{2,k}$, the expected result is equivalent to
$$\sum_{i=2}^{k+2} \frac{1}{i!} < 1  - \frac{1}{(k+2)!},$$
which is true because the left-hand term is upper-bounded by $e-2$ and the right-hand term is lower-bounded by $5/6$.
\item From now on, we assume $n\geq3$. Since $\alpha\geq1$ and $k\geq 1$, it is easy to see that
$$\frac{1}{s_{n,k}}\left(1-\frac{n!}{\alpha(n+k)!}\right) >\frac{n}{s_{n,k}(n+1)}.$$
Consequently, it is sufficient to prove
$$\frac{1}{s_{n,k}}\frac{n}{n+1} >\displaystyle\frac{n!-1}{b_n}.$$
The left-hand term being a decreasing function of $k$, the inequality holds if we can establish it at the limit. As a consequence, we want to prove
$$
(n!-1)\left(1+\frac{1}{n}\right)\frac{r_n}{e-1-r_n} <1.
$$
\begin{itemize}
    \item If $n=3$, with $r_3=e-5/2$, the left-hand term is approximately evaluated to $0.97<1$.
    \item If $n\geq4$, with $n!-1<n!$, we bound the left-hand term by $(1+1/n)n! r_n / (e-1-r_n)$ which is a decreasing function of $n$ (as a product of decreasing functions of $n$). In addition, the term corresponding to $n=4$ is approximately evaluated to $0.92<1$.
\end{itemize}
\end{itemize}

\end{document}